\title{Stretch-width} 
\titlerunning{Stretch-width} 
\author{\'{E}douard Bonnet}{Univ Lyon, CNRS, ENS de Lyon, Université Claude Bernard Lyon 1, LIP UMR5668, France \and \url{http://perso.ens-lyon.fr/edouard.bonnet/}}{edouard.bonnet@ens-lyon.fr}{https://orcid.org/0000-0002-1653-5822}{}
\author{Julien Duron}{Univ Lyon, CNRS, ENS de Lyon, Université Claude Bernard Lyon 1, LIP UMR5668, France}{julien.duron@ens-lyon.fr}{}{}
\authorrunning{\'E. Bonnet, J. Duron}
\keywords{Contraction sequences, reduced parameters, twin-width, clique-width, algorithms, algorithmic metatheorems}
\newtheorem*{rep@theorem}{\rep@title}
\newcommand{\newreptheorem}[2]{%
\newenvironment{rep#1}[1]{%
 \def\rep@title{#2 \ref{##1}}%
 \begin{rep@theorem}}%
 {\end{rep@theorem}}}
\crefname{observation}{Observation}{Observations}
\tikzset{draw half paths/.style 2 args={%
  decoration={show path construction,
    lineto code={
      \draw [#1] (\tikzinputsegmentfirst) -- 
         ($(\tikzinputsegmentfirst)!0.5!(\tikzinputsegmentlast)$);
      \draw [#2] ($(\tikzinputsegmentfirst)!0.5!(\tikzinputsegmentlast)$)
        -- (\tikzinputsegmentlast);
    }
  }, decorate
}}
\renewcommand{\geq}{\geqslant}
\renewcommand{\leq}{\leqslant}
\renewcommand{\preceq}{\preccurlyeq}
\newcommand{\mis}{\textsc{Max Independent Set}\xspace}
\newcommand{\smis}{\textsc{MIS}\xspace}
\newcommand{\tww}{\text{tww}\xspace}
\newcommand{\tw}{\text{tw}\xspace}
\newcommand{\sn}{\text{sn}\xspace}
\newcommand{\ctww}{\text{ctww}\xspace}
\newcommand{\bw}{\text{bandw}\xspace}
\newcommand{\cutw}{\text{cutw}\xspace}
\newcommand{\cw}{\text{cw}\xspace}
\newcommand{\stw}{\text{stw}\xspace}
\newcommand{\rows}{\text{rows}}
\newcommand{\cols}{\text{columns}}
\newcommand{\at}[2]{A_{#1}(#2)}
\newcommand{\Root}[2]{N_{#1}(#2)}
\newcommand{\tree}[2]{T_{#1}(#2)}
\newcommand{\atG}[2]{G(A_{#1}(#2))}
\newcommand{\conv}[1]{\text{conv}(#1)}
\newcommand{\cross}{conflict with\xspace}
\newcommand{\crosses}{conflicts with\xspace}
\newcommand{\crossing}{conflicting with\xspace}
\newcommand{\confing}{conflicting\xspace}
\newcommand{\interf}{interfere\xspace}
\newcommand{\interfs}{interferes\xspace}
\newcommand{\interfing}{interfering\xspace}
\newcommand{\stret}{stretch\xspace}
\newcommand{\str}[1]{\text{str}(#1)}
\newcommand{\itsubd}{iterated subdivision\xspace}
\newcommand{\CC}{\mathcal C\xspace}
\renewcommand{\P}{\mathcal P\xspace}
\renewcommand{\R}{\mathcal R\xspace}
\newcommand{\Ov}{\text{Ov}}
\theoremstyle{definition}
\begin{document}

\maketitle

\begin{abstract}
  We introduce a new parameter, called stretch-width, that we show sits strictly between clique-width and twin-width.
  Unlike the reduced parameters [BKW '22], planar graphs and polynomial subdivisions do not have bounded stretch-width.
  This leaves open the possibility of efficient algorithms for a broad fragment of problems within Monadic Second-Order (MSO) logic on graphs of bounded stretch-width. 
  In this direction, we prove that graphs of bounded maximum degree and bounded stretch-width have at most logarithmic treewidth.
  As a consequence, in classes of bounded stretch-width, \textsc{Maximum Independent Set} can be solved in subexponential time $2^{\Tilde{O}(n^{4/5})}$ on $n$-vertex graphs, and, if further the maximum degree is bounded, Existential Counting Modal Logic [Pilipczuk '11] can be model-checked in polynomial time. 
  We also give a~polynomial-time $O(\text{OPT}^2)$-approximation for the stretch-width of symmetric $0,1$-matrices or ordered graphs.

  Somewhat unexpectedly, we prove that exponential subdivisions of bounded-degree graphs have bounded stretch-width.
  This allows to complement the logarithmic upper bound of treewidth with a~matching lower bound. 
  We leave as open the existence of an efficient approximation algorithm for the stretch-width of unordered graphs, if the exponential subdivisions of all graphs have bounded stretch-width, and if graphs of bounded stretch-width have logarithmic clique-width (or rank-width).
\end{abstract}

\section{Introduction}\label{sec:intro}

Various graph classes have bounded \emph{twin-width}\footnote{We refer the reader to~\cref{sec:prelim} for the relevant definitions.} such as, for instance, bounded clique-width graphs, proper minor-closed classes, proper hereditary subclasses of permutation graphs, and some expander classes~\cite{twin-width1}.
Low twin-width, together with the witnessing \emph{contraction sequences}, enables parameterized algorithms (that are unlikely in general graphs) for testing if a~graph satisfies a first-order sentence~\cite{twin-width1,twin-width3}, and improved approximation algorithms for highly inapproximable packing and coloring problems~\cite{twin-width3,Berge23}.

However one should not expect a large gain, in the low twin-width regime, as far as (non-parameterized) exact algorithms are concerned.
This is because every graph obtained by subdividing (at least) $2 \lceil \log n \rceil$ times each edge of an $n$-vertex graph $G$ has twin-width at most~4~\cite{Berge21}.
It was already observed in the 70's that a problem like \textsc{Maximum Independent Set} (\smis, for short) remains NP-complete in $2t$-subdivisions~\cite{Poljak74}.
Furthermore, known reductions~\cite{GJ79} combined with the Sparsification Lemma~\cite{sparsification}, imply that unless the Exponential-Time Hypothesis\footnote{the assumption that there is a $\lambda > 1$ such that $n$-variable \textsc{3-SAT} cannot be solved in time $\lambda^n n^{O(1)}$} (ETH) fails~\cite{Impagliazzo01} solving \smis in subcubic graphs requires time $2^{\Omega(n)}$.
The previous remarks entail that, unless the ETH fails, solving \smis in subcubic graphs of twin-width at most~4 requires time $2^{\Omega(n/\log n)}$.

In contrast, on the significantly less general classes of bounded clique-width not only can \smis be solved in polynomial-time, but a~fixed-parameter algorithm solving MSO$_1$\footnote{Monadic Second-Order logic, when the second-order variables can only be vertex subsets} model checking in time $f(w,|\varphi|)n^{O(1)}$ exists~\cite{Courcelle00,Oum08}, with $w$ the clique-width of the input graph, $\varphi$ the input sentence, and $f$ some computable function.

In this paper, we start exploring the trade-off between class broadness and algorithmic generality in the zone delimited by bounded clique-width and bounded twin-width.   
It~may seem like the \emph{reduced parameters}~\cite{reduced-bdw}, where a graph has \emph{reduced $p$} at most $k$ if it admits a~contraction sequence in which all the red graphs have parameter $p$ at most $k$, are exactly designed to tackle this endeavor.
Indeed by definition, twin-width is \emph{reduced $\Delta$}, where $\Delta$ is the maximum degree, and it was shown that \emph{reduced maximum connected component size} (under the name of \emph{component twin-width}) is functionally equivalent to clique-width~\cite{twin-width6}.
Between \emph{maximum connected component size} and \emph{maximum degree}, there are several parameters~$p$, such as bandwidth, cutwidth, treewidth$+\Delta$, whose reduced parameters give rise to a~strict~\cite{reduced-bdw} hierarchy between bounded clique-width and bounded twin-width.
Unfortunately, even reduced bandwidth --the closest to clique-width among the above-mentioned reduced parameters-- turns out to be too general in the following sense: the $n$-subdivision of any $n$-vertex graph has reduced bandwidth at most~2~\cite{reduced-bdw}.
This means, by the arguments of the second paragraph of this introduction, that solving \smis on graphs of bounded reduced bandwidth requires time $2^{\Omega(\sqrt n)}$, unless the ETH fails, even among graphs of bounded degree.
Actually, another fact leading to the same conclusion is that planar graphs have bounded reduced bandwidth~\cite{reduced-bdw}.

We therefore introduce another parameter, that we call stretch-width\footnote{We refer a reader who would already want a formal definition to the start of~\cref{sec:stw}.} and denote by $\stw$, which, while inspired by reduced parameters, does not fully fit that framework.
To a~first approximation, \emph{stretch-width} can be thought as \emph{reduced bandwidth} where the bandwidth upper bound on the red graphs have to be witnessed by a~single (and fixed) order on the vertex set.
Observe indeed that the linear orders witnessing that all the red graphs of the sequence have low bandwidth can, in reduced bandwidth, be very different one from the other.
We first show that the family of bounded stretch-width classes strictly contains the family of bounded clique-width classes.
Using an upper bound of component twin-width by clique-width~\cite{BarilSlides}, we prove that:

\begin{theorem}\label{thm:stw-gen-cw}
  The stretch-width of any graph is at most twice its clique-width.
\end{theorem}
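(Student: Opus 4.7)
The plan is to combine the known bound $\ctww(G) \le 2\cw(G)$ of~\cite{BarilSlides}, where $\ctww$ denotes the component twin-width (in which the red graphs of the contraction sequence have connected components of size at most $k$), with a transfer argument that turns a component-twin-width-witnessing sequence into a stretch-width-witnessing one without increasing the parameter. Concretely, I set $k = 2\cw(G)$ and fix a contraction sequence $\mathcal{S}$ of $G$ whose red graphs have components of size at most $k$.

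The linear order $\sigma$ on $V(G)$ required by the stretch-width definition will be extracted from the binary contraction tree $T$ underlying $\mathcal{S}$: the leaves of $T$ are the vertices of $G$, the internal nodes are the merges, and every trigraph vertex appearing in $\mathcal{S}$ is the root of its subtree of already-merged leaves. Any choice of a left/right order at each internal node of $T$ yields, via a DFS traversal, a linear order $\sigma$ for which every trigraph vertex is automatically a contiguous interval of $\sigma$. The core claim is that one can choose the child-orders so that, simultaneously at every time of $\mathcal{S}$, each red connected component of the trigraph is also contiguous in $\sigma$. If so, the red graph at every time has bandwidth at most $k$ with respect to $\sigma$ (trivially within each component of size $\le k$, and no red edges between distinct components), and $\stw(G) \le k$ follows.

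Proving the core claim will rely on a laminarity property: for any two red components $C_1, C_2$ appearing in $\mathcal{S}$ at possibly different times, their leaf-sets in $T$ are either disjoint or nested. This follows from the fact that a contraction never splits a red component, hence if two leaves lie in a common red component at time $t$ then their trigraph vertices remain red-connected at every later time. Equipped with laminarity, I would choose the child-orders by a top-down recursion on $T$: at each internal node, decide which subtree goes first by aligning at the cut the leaf-sets of the red components that straddle it.

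The main obstacle is verifying that the local constraints at a given internal node are compatible: in principle several red components (at various times) could simultaneously demand boundary placement, and at first sight this could produce contradictions. The likely resolution is that, by laminarity, the red components touching both subtrees of a given node form a chain under containment, and a chain can always be laid out with all its members contiguous and boundary-aligned. Making this precise, along with checking that within each subtree the recursive constraints still factor laminarly, is where I expect the bulk of the technical work to lie.
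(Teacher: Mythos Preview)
Your overall plan coincides with the paper's: use the bound $\ctww(G)\le 2\cw(G)$ and then show $\stw(G)\le\ctww(G)$ by exhibiting a vertex order under which every red component, at every time of the sequence, is an interval. Your laminarity observation (red components only merge, never split, so their leaf-sets form a laminar family over all times) is correct and is precisely the engine of the paper's argument.

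The gap is in how you manufacture the order. By taking a DFS of the contraction tree $T$, you are forcing every \emph{part} to be an interval as well, and this second constraint is in general incompatible with the first. Concretely: suppose at time $t$ the parts $X,W$ form one red component and $Y$ lies in another, and the next contraction merges $X$ with $Y$ into $Z$. Then the leaf-set of the tree node $Z$ is $X\cup Y$, while the leaf-set of the red component $\{X,W\}$ is $X\cup W$; these two sets properly overlap. Any DFS order on $T$ makes $X\cup Y$ an interval, so $X\cup W$ cannot be one. Your ``chain'' argument only addresses components that straddle the $X$/$Y$ cut inside $Z$; it says nothing about components like $\{X,W\}$ that straddle the boundary between $Z$ and the rest of the tree, and those are exactly the obstruction.

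The paper avoids this by dropping the contraction tree and building the order directly from the laminar family of red components. It runs the sequence \emph{backwards}: start with the empty partial order on the single part $V(G)$, and each time a part $Z$ is un-merged into $X,Y$, the component of $Z$ breaks into sub-components $C_1,\dots,C_h$, which are then totally ordered arbitrarily (and inserted in place of the old component). The resulting order makes every red component an interval of vertices, while individual parts need not be intervals at all. That weaker property already yields the bound: for $X$ in component $C$, one has $\conv{N_{\R(\P)}[X]}\subseteq\bigcup C$, and every part outside $C$ lies in some other component-interval entirely to one side of $\bigcup C$, hence cannot interfere with $X$.
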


Then we provide a~separating class of bounded stretch-width and unbounded clique-width.

\begin{theorem}\label{thm:separation}
  There is an infinite family of graphs $G$ with bounded stretch-width and clique-width in $\Omega(\log |V(G)|/ \log \log |V(G)| )$.
\end{theorem}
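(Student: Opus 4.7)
The plan is to realize the separating family as a sufficiently long subdivision of a bounded-degree graph with linear rank-width. First I would fix a family $H_k$ of $3$-regular graphs on $k$ vertices with rank-width $\Omega(k)$, as provided for instance by bounded-degree expanders. I would then define $G_k$ by replacing each edge of $H_k$ with a path of length $\ell(k)$, where $\ell(k)$ is an exponential in $k$ chosen large enough that the theorem announced in the abstract applies: exponential subdivisions of bounded-degree graphs have bounded stretch-width. This gives $\stw(G_k) = O(1)$.

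For the clique-width lower bound I would go through rank-width via the Oum--Seymour equivalence $\text{rw}(G) \leq \cw(G) \leq 2^{\text{rw}(G)+1} - 1$, so it suffices to lower-bound $\text{rw}(G_k)$. The key observation is that $H_k$ is a vertex-minor of $G_k$: along each subdivision path $u\, w_1\, w_2\, \ldots\, w_t\, v$ replacing an edge $uv$, the subdivision vertex $w_1$ has neighborhood exactly $\{u, w_2\}$, so local complementation at $w_1$ inserts the edge $u w_2$, and deleting $w_1$ shortens the path by one while leaving the rest of $G_k$ untouched (since $N(w_1)$ is entirely contained in this one path). Iterating collapses every subdivision path back to a single edge, producing $H_k$. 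Because rank-width is monotone under vertex-minors, this yields $\text{rw}(G_k) \geq \text{rw}(H_k) = \Omega(k)$, and therefore $\cw(G_k) = \Omega(k)$.

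Finally, I would express $k$ in terms of $n = |V(G_k)| = \Theta(k \cdot \ell(k))$: with $\ell(k)$ calibrated so that $\log n = \Theta(k \log k)$, one gets $k = \Theta(\log n / \log\log n)$, matching the stated bound. The hard part will be calibrating $\ell(k)$: it must be large enough to invoke the stretch-width theorem for exponential subdivisions, yet small enough that the relation between $k$ and $\log n$ loses only the $\log\log n$ factor appearing in the statement. A secondary consideration is producing $H_k$ with simultaneously bounded degree and linear rank-width, for which cubic expanders are a standard choice, though any bounded-degree family with rank-width linear in $|V(H_k)|$ would do.
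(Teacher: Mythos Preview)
Your approach is correct and in fact yields the stronger bound $\cw(G_k) = \Omega(\log n)$: your final calibration underestimates it, since with $H_k$ cubic on $k$ vertices \cref{thm:subd} needs only $\ell(k) \geq k\, 2^{3k/2}$, whence $\log n = \Theta(k)$ rather than $\Theta(k\log k)$. It is, however, not the paper's primary proof of \cref{thm:separation}. The paper builds an explicit family $A_3(h)$ on $3^h$ vertices from a complete ternary tree, shows $\stw(A_3(h)) \leq 9$ via an explicit partition sequence, and then proves $\ctww(A_3(h)) = \Omega(h/\log h)$ by a direct analysis of red components in an arbitrary partition sequence. That construction is self-contained and does not rely on the later subdivision machinery. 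Your route is precisely the alternative the introduction flags (``After establishing \cref{thm:subd}, we even get a simpler construction\dots''), where the paper uses long subdivisions of grids rather than of expanders. One simplification of your lower-bound step: rather than going through rank-width --- which requires the separate, not entirely trivial, fact that cubic expanders have \emph{linear} rank-width (linear treewidth alone only gives $\text{rw} = \Omega(\log k)$ via $\tw = O(2^{\text{rw}})$) --- observe that $G_k$ has maximum degree~$3$ and that subdivision preserves treewidth, so $\tw(G_k) = \tw(H_k) = \Omega(k)$, and then Gurski--Wanke~\cite{Gurski00} gives $\cw(G_k) = \Omega(\tw(G_k)) = \Omega(k)$ directly.
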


After establishing~\cref{thm:subd}, we even get a simpler construction with bounded degree and stretch-width, but treewidth and clique-width in $\Omega(\log |V(G)|)$.
As was done for twin-width~\cite{twin-width4}, we give an effective characterization of bounded stretch-width for symmetric $0,1$-matrices (or ordered graphs).
\begin{theorem}\label{thm:matrix-characterization}
  A class $\mathcal C$ of symmetric $0,1$-matrices has bounded stretch-width if and only if there is an integer $k$ such that no matrix of $\mathcal C$ has a $k$-wide division.
\end{theorem}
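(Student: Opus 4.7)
The plan is to mirror the matrix-characterization of twin-width from~\cite{twin-width4}, adapting both directions to the fact that stretch-width insists on a single fixed vertex order along the whole contraction sequence. As in that work, a $k$-wide division of a symmetric $0,1$-matrix should be understood as a partition of the rows and columns into contiguous intervals (a~division) such that every zone contains many pairwise distinct row/column patterns. The theorem splits into a ``wide division forbids bounded stretch-width'' part and a ``no wide division yields bounded stretch-width'' part.

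For the forward (easy) direction, I would start from an ordered contraction sequence witnessing $\stw(M) \leq s$, with its fixed linear order $\sigma$ on the rows/columns of $M$. Given a supposed $k$-wide division with $k \gg s$, the rough idea is that because the division's zones are contiguous and $\sigma$ is fixed, there is a moment in the contraction sequence where some zone still hosts many mutually distinguishable parts; the wideness then forces, at that moment, a red pattern around $\sigma$ whose bandwidth-like span exceeds $s$. Quantitatively one wants a Ramsey-type pigeonhole to extract, inside the wide division, a sub-structure so ``rich'' that no single order $\sigma$ can accommodate it within stretch $s$. This should yield a function $k = k(s)$ above which no $k$-wide division can survive.

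For the backward (hard) direction, assume $M$ has no $k$-wide division. The plan is to construct by induction on the size of $M$ both a linear order $\sigma_M$ and an ordered contraction sequence of stretch at most $f(k)$. The absence of a $k$-wide division should, via a Ramsey/regularity argument analogous to the ``grid theorem'' for matrix twin-width, provide two neighbouring rows (resp. columns) whose column- (resp. row-)type profiles are nearly identical on every zone of some shallow division of the remaining matrix. Contracting this pair therefore introduces only a~controlled number of red cells, all localized near the pair in the fixed order; one then places the resulting merged row/column at the same position and recurses. To keep bookkeeping global and to preserve the no-wide-division hypothesis after contraction, the induction should be parameterized by a slightly larger $k'$, essentially $k' = k + O(f(k))$, so that the absence of wide divisions is maintained throughout.

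The main obstacle, and the place where the argument departs most from the twin-width case of~\cite{twin-width4}, is precisely the fixed-order constraint: in twin-width each red graph can be reordered at will to certify its low degree, whereas here the single order $\sigma_M$ must simultaneously control the red-bandwidth of every trigraph in the sequence. This forces the inductive step to locate a contractible pair that is not only locally ``twin-like'' but also \emph{consecutive} in some order compatible with the orders used in all subsequent steps. Handling this compatibility — likely by building $\sigma_M$ from bottom up, anchored on the contracted pair, and verifying that the no-wide-division hypothesis rules out order-reshuffling at later stages — is the technically delicate part I would expect to consume most of the proof.
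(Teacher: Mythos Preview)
Your forward direction is in the right spirit but misses the concrete handle the paper uses: given a division $\mathcal D$ and a symmetric partition sequence of stretch at most $k$, one looks at the \emph{first} step at which the span of some sequence-part $R'$ contains a division-part $R$. At that moment at most $k$ sequence-parts conflict with $R$ (they all conflict with $R'$), and for each such part $T$ the columns where $T$ is non-constant, together with the symmetric of $T$, span at most $k$ sequence-parts; chaining these bounds and converting back to division-parts shows $R$ is not $9k$-wide. No Ramsey-type argument is needed --- it is a direct counting once the right moment is identified.

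The backward direction has a genuine gap. First, for a symmetric $0,1$-matrix the linear order is the given row/column order; there is no $\sigma_M$ to construct, and the paper's own family $H_k$ (the example forcing interleaved parts) shows that bounded stretch-width can require merging far-apart, non-consecutive parts, so insisting on \emph{consecutive} near-twins is a dead end. More importantly, the paper does not proceed by locating contractible pairs and recursing. It shows instead that if $M$ has no $k$-wide division, then starting from the finest division one can greedily merge adjacent parts while keeping every intermediate division $2(k+1)$-\emph{diagonal} (if the greedy process ever gets stuck, pairing up all consecutive parts yields a $k$-wide division, a contradiction). From a $k$-diagonal division one then reads off a low-stretch partition directly: refine each division-part $R_i$ into the at most $k$ equivalence classes of equal rows after deleting the $2k-1$ column-parts around the diagonal. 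These refinements are nested along the division sequence and have stretch $O(k^3)$. Your inductive contract-and-recurse scheme would additionally face the problem that after one contraction the object is a trigraph, not a $0,1$-matrix, so the no-wide-division hypothesis does not carry over; bumping $k$ by $O(f(k))$ at each of $n$ steps makes the bound explode. The division-sequence route avoids all of this by never leaving the world of $0,1$-matrices equipped with divisions.
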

The~\emph{$k$-wide division} (see definition in~\cref{sec:matrices}) is a scaled-down version of the \emph{$k$-rich division} that analogously characterizes matrices of bounded twin-width~\cite{twin-width4}.
\cref{thm:matrix-characterization} yields a~polynomial-time approximation algorithm for the stretch-width of symmetric $0,1$-matrices.
More precisely:
\begin{theorem}\label{thm:approx-stw}
  Given an integer $k$ and a symmetric $n \times n$ $0,1$-matrix $M$, there is an $n^{O(1)}$-time algorithm that outputs a~sequence witnessing that $\stw(M) = O(k^3)$ or correctly reports that $\stw(M) > k$.
\end{theorem}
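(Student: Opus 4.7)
The plan is to follow the blueprint of the approximation algorithm for the twin-width of matrices from~\cite{twin-width4}, with the rich-division obstruction replaced by the $k$-wide-division obstruction of~\cref{thm:matrix-characterization}. The goal is a greedy algorithm that either returns a contraction sequence of stretch-width $O(k^3)$, or halts on a configuration from which one can extract a $k$-wide division; in the latter case, \cref{thm:matrix-characterization} certifies $\stw(M) > k$.

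Concretely, I would maintain an ordered partition of the $n$ rows/columns, compatible with the fixed order underlying $\stw$, starting from the partition into singletons. The algorithm repeatedly looks for two consecutive parts whose merger does not create more than $\Theta(k^3)$ distinct row- and column-types within a window of width $\Theta(k^3)$ centred at the merged part. Such a candidate can be located in polynomial time using a dictionary of type signatures, and each successful merge reduces the number of parts by one, so the whole routine runs in time $n^{O(1)}$. By construction, when it terminates at a single part, the output sequence has stretch-width $O(k^3)$.

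If instead it halts early, every pair of consecutive parts exhibits $\Omega(k^3)$ distinct types within its window. A pigeonhole over the windows, followed by an alignment step selecting $k$ row-zones and $k$ column-zones spaced regularly along the fixed order, then extracts a $k$-wide division of $M$ in the sense of~\cref{sec:matrices}. Invoking~\cref{thm:matrix-characterization} on this division gives the negative output.

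The main obstacle will be controlling the blow-up to $O(k^3)$ under the rigidity of the fixed order: unlike the twin-width algorithm of~\cite{twin-width4}, which is free to reorder parts between stages, here both the merging and the extraction steps must respect the same global order. The cubic loss should decompose as one factor of $k$ for the width of the target division, one for the type multiplicity needed so that pigeonhole returns pairwise distinct blocks, and one for the margin between the rolling window monitored online and the final stretch-width of the assembled sequence. A last technical point will be to verify that the local window bounds imply a global bandwidth bound on the red graph throughout the contraction sequence, which should follow directly from the bandwidth-with-fixed-order interpretation of $\stw$.
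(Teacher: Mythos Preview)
Your high-level plan matches the paper's: greedily merge consecutive parts of a symmetric \emph{division} while a local ``few distinct rows in a bounded window'' condition holds, and, if stuck, extract a wide division certifying $\stw(M)>k$. But two concrete points in your write-up are off and would prevent the argument from closing.

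First, the parameters and the source of the cubic loss. The window width and the type bound monitored during the greedy phase should both be $\Theta(k)$, not $\Theta(k^3)$: the paper maintains a sequence of $2(9k+1)$-\emph{diagonal} divisions (\cref{no large div imp diagonal}). A division sequence, however, is not yet a partition sequence witnessing bounded stretch; each division part may still contain many inequivalent rows. The missing step---and the actual origin of the $k^3$---is \cref{building the contraction}: refine every division part $R_i$ into its $\leq k$ equivalence classes of equal rows outside the $2k{-}1$ central column blocks, and interpolate between consecutive refined partitions. That interpolation costs $2k$ parts per block times a $2k$-wide span, giving stretch $4k^3$. Your proposal to ``verify that the local window bounds imply a global bandwidth bound on the red graph'' elides exactly this refinement; with your stated $\Theta(k^3)$ window you would end up with $O(k^9)$, not $O(k^3)$.

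Second, the extraction of the obstruction is simpler than the pigeonhole-plus-alignment you sketch, and your description (``selecting $k$ row-zones and $k$ column-zones'') suggests a misreading of $k$-wideness: a $k$-wide division need not have $k$ parts; rather, \emph{every} part must be $k$-wide. The paper's move is direct: if no single merge preserves $2(9k+1)$-diagonality, then each merged pair $R_i\cup R_{i+1}$ is itself $2(9k+1)$-wide in $(\R_s,\CC_s)$, and hence the coarser division obtained by pairing \emph{all} consecutive parts is $9k$-wide (since $9k$ consecutive paired column blocks cover at most $2(9k+1)$ original ones). Then \cref{stw bbd imp no large div} gives $\stw(M)>k$.
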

Compared to the approximation algorithm for the twin-width of a~matrix, this is better both in terms of running time (polynomial vs fixed-parameter tractable) and approximation factor (quadratic vs exponential).   

\medskip

Conveniently for the sought algorithmic applications, planar graphs and $n^c$-subdivisions of $n$-vertex graphs (for any constant $c$) both have \emph{unbounded} stretch-width (whereas they have bounded reduced bandwidth if $c \geqslant 1$).
We indeed establish the following upper bound on treewidth, implying that graphs of bounded maximum degree and bounded stretch-width have at most logarithmic treewidth. 

\begin{theorem}\label{thm:tw-bound}
There is a~$c$ such that for every graph $G$, $\tw(G) \leqslant c \Delta(G)^4 \stw(G)^2 \log |V(G)|$.
\end{theorem}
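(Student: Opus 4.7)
The plan is to prove \cref{thm:tw-bound} by establishing a balanced-separator lemma adapted to stretch-width, and then applying the standard recursive tree-decomposition construction. The key step is the following claim: every graph $G$ with $\stw(G) \le k$, $\Delta(G) \le d$, and $|V(G)| = n$ admits a vertex set $S \subseteq V(G)$ of size $O(d^4 k^2)$ such that each connected component of $G - S$ has at most $2n/3$ vertices. Given this, \cref{thm:tw-bound} follows from the standard recursive construction of a tree decomposition from iterated balanced separators, combined with the observation that stretch-width and maximum degree are both hereditary under induced subgraphs.

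To prove the balanced-separator claim, fix a linear order $\sigma$ on $V(G)$ and a contraction sequence witnessing $\stw(G) \le k$: the parts are $\sigma$-intervals and the red graph on parts has bandwidth at most $k$ at every level. Let $T$ denote the associated binary contraction tree. A standard centroid argument on binary trees yields a node $t^\star$ of $T$ with $|P_{t^\star}| \in [n/3, 2n/3]$. In the trigraph at the level where $P_{t^\star}$ is a part, the neighbours of $P_{t^\star}$ split into two types: the \emph{black} neighbours $B_1, \ldots, B_s$, each completely bipartite to $P_{t^\star}$ in $G$ and therefore satisfying $\sum_i |B_i| \le d$ (a single vertex of $P_{t^\star}$ is adjacent to every vertex of every $B_i$), and at most $2k$ \emph{red} neighbours, located within $k$ positions of $P_{t^\star}$ in the part-order by the bandwidth hypothesis.

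I initialise $S$ with all black-neighbour vertices, contributing at most $d$ to $|S|$. For each red neighbour $R$, the induced subgraph $G[R]$ still has $\stw(G[R]) \le k$ and $\Delta(G[R]) \le d$, and the contraction subsequence inherited from the main one provides a hierarchical decomposition of $R$. I plan to use this refined structure together with the bounded-degree assumption to extract a set $S_R \subseteq R$ of size $O(d^2 k)$ that blocks every $G$-edge from $P_{t^\star}$ to $R$ and leaves the residual components inside $R$ each of size at most $|R|/2$. Summing $|S_R|$ over the at most $2k$ red neighbours yields a contribution of $O(d^2 k^2)$, and a second layer of degree-based bookkeeping accounts for the remaining $d^2$ factor, giving $|S| = O(d^4 k^2)$ overall. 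The balance property is then automatic: any component of $G - S$ is contained in $P_{t^\star}$ (size $\le 2n/3$), in some $R \setminus S_R$ (size $\le |R|/2 \le n/3$), or in $V(G) \setminus (P_{t^\star} \cup \bigcup_j R_j)$ (size $\le n - |P_{t^\star}| \le 2n/3$).

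The delicate step and main obstacle is the construction of the inner separator $S_R$ inside each red neighbour $R$. The number of $G$-edges between $R$ and $P_{t^\star}$ can be as large as $d \cdot \min(|R|, |P_{t^\star}|)$, so one cannot simply include all incident vertices of $R$ in $S$. The idea is to combine the contraction subsequence within $R$, the bandwidth-$k$ structure on the red graph at every level, and the bounded-degree hypothesis to inductively cut the $P_{t^\star}$-facing portion of $R$ while keeping the rest of $R$ partitioned into pieces of size at most $|R|/2$. I expect this nested argument, together with the bookkeeping needed to control component sizes simultaneously across all red neighbours, to be the bulk of the proof and to be what produces the $\Delta^4 \stw^2$ factor in the final bound.
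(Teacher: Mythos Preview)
Your central claim---that every graph with $\stw \le k$ and $\Delta \le d$ admits a balanced separator of size $O(d^4k^2)$, with no dependence on $n$---is false. Since stretch-width and maximum degree are hereditary (as you note), such a bound would hold for every induced subgraph, so the separation number would be $O(d^4k^2)$, and by the Dvo\v{r}\'ak--Norin theorem (\cref{lem:tw-sn}) the treewidth would be $O(d^4k^2)$, independent of~$n$. This contradicts \cref{thm:subd} and the discussion following it: the $2^{ck}$-subdivision of the $k\times k$ grid has bounded degree and bounded stretch-width but treewidth $k = \Omega(\log n)$. So the $\log n$ factor cannot be recovered from the recursion; it must already appear in the separator size itself.

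There is also a definitional error in your setup. You write ``the parts are $\sigma$-intervals and the red graph on parts has bandwidth at most~$k$.'' Neither is true. Parts in a stretch-width witness need \emph{not} be intervals of~$\prec$; Section~\ref{sec:stw} gives an explicit family ($H_k$) where interleaved parts are unavoidable, and the definition of \emph{stretch} is built around the span $\conv{\cdot}$ precisely to handle this. Your centroid argument on the contraction tree therefore does not yield an interval $P_{t^\star}$ with only $2k$ nearby red neighbours, and the whole decomposition into ``black neighbours plus $2k$ red neighbours'' collapses.

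The paper's route is quite different: it first passes through the overlap graph $\Ov(G,\prec)$, showing (\cref{bdd deg-bdd stw impl bdd Ktt}) that bounded $\Delta+\stw$ forbids a $K_{N,N}$ subgraph in $\Ov(G,\prec)$ with $N=4td^2$, and then proves (\cref{separation Thm}) that a $K_{t,t}$-free overlap graph forces a balanced separator of order $O(t^2\log n)$. The $\log n$ arises there from an induction on the \emph{length} of a maximum rainbow over a vertex, halving the length at each step; this is the mechanism you are missing.
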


We match~\cref{thm:tw-bound} with a~lower bound.
There are graphs with bounded $\Delta+\stw$ and treewidth growing as a~logarithm of their number of vertices.
This is because, as we prove, very long subdivisions of bounded-degree graphs have bounded stretch-width.

\begin{theorem}\label{thm:subd}
  Every $(\geqslant n2^m)$-subdivision of every $n$-vertex $m$-edge graph $G$ of maximum degree $d$ has stretch-width at most $32(4d+5)^3$.  
\end{theorem}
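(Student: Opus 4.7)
My plan is to construct, for the subdivided graph $G' := G^{(\ell)}$ with $\ell \geqslant n 2^m$, an explicit linear order $\sigma$ on $V(G')$ together with a contraction sequence witnessing the claim. The key feature is that the target bound depends only on the maximum degree $d$, so the construction must turn the exponential subdivision length into a fixed-width scheme by means of a recursive halving strategy.

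First, I would fix an ordering $u_1, \ldots, u_n$ of $V(G)$, for instance a DFS order on a spanning tree of $G$. For each edge $e = u_a u_b$ of $G$, I split its long subdivision path $P_e$ into two halves. The internal vertices of the ``left half'' of $P_e$ are placed in $\sigma$ immediately after $u_a$, and those of the ``right half'' immediately before $u_b$. Since each vertex $u_i$ has degree at most $d$, at most $2d$ such half-blocks sit in the immediate neighborhood of $u_i$ in $\sigma$, and this local packing will control the eventual bandwidth.

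The contraction sequence then proceeds in phases. Each phase halves the remaining length of every half-block by contracting adjacent pairs of subdivision vertices; the red edges produced within a half-block stay within a window of $O(1)$ positions, and red edges between distinct half-blocks only appear at their boundaries. After $O(\log \ell)$ such phases, every half-block is reduced to a single representative sitting adjacent in $\sigma$ to the corresponding original vertex, and a final local pass absorbs these representatives into their endpoints.

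The main obstacle is the two middle representatives of each path $P_e$: after halving, they sit far apart in $\sigma$ (one near $u_a$, one near $u_b$), yet identifying them would create a long red edge that breaks the bandwidth bound. My resolution is to process the middle representatives recursively, treating them as the endpoints of a ``virtual'' edge in a smaller auxiliary graph, and reapplying the same halving scheme one layer inward. The bound $\ell \geqslant n 2^m$ is precisely what is needed to sustain the $m$ levels of this nested recursion, since each inward level roughly doubles the length of path already consumed, and the $n$ factor accounts for the $n$ original vertices acting as anchors at the outer level. A careful accounting of how the $O(d)$ half-blocks around each original vertex interact across the nested layers yields the claimed bandwidth bound $32(4d+5)^3$ on the red edges throughout the sequence, with the cubic factor reflecting the three sources of local crowding: branching at an original vertex, overlap of its incident half-blocks, and the nesting across recursion levels.
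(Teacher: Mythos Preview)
Your plan diverges substantially from the paper's argument, and in its present form it has real gaps.

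The paper does not build an explicit contraction sequence at all. Instead it works through the overlap graph: starting from any order on $V(G)$, it performs an \emph{iterated subdivision} (flattening each original edge in turn, each flattening inserting one new vertex between every existing pair of consecutive vertices spanned by that edge) and proves that the resulting ordered graph has no $K_{2d+2,2d+2}$ in its overlap graph (\cref{lem:decr-biclique}). Since each flattening at most doubles the number of vertices, after processing all $m$ edges the subdivision length is at most $n2^m$, and any further subdivisions are absorbed by \cref{lem:single-subd} without increasing the biclique bound. The numerical bound $32(4d+5)^3$ then drops out of \cref{lem:weakly-sparse-overlap} with $N=2d+2$, which in turn rests on the matrix characterisation of \cref{sec:matrices}. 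So the cubic exponent and the constant $4d+5=2(2d+2)+1$ are artifacts of that chain of lemmas, not of any ``three sources of crowding'' in a direct sequence.

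Your halving-and-recursion scheme is a different idea, and it is not yet a proof. Two concrete problems: first, the recursive step is underspecified. You say the middle representatives become endpoints of a virtual edge in a ``smaller auxiliary graph,'' but you do not say what the order on that auxiliary graph is, how its half-blocks are threaded into the global order $\sigma$ already fixed at the outer level, or why the stretch of parts at level $k$ does not accumulate with the stretch at levels $1,\ldots,k-1$. Second, the budget argument is backwards: you need $m$ levels of recursion (one per edge), but at each level the available path length is halved, not doubled, so to sustain $m$ halvings you need initial length at least~$2^m$---which is consistent with the hypothesis, but you have not shown that the parts created at inner levels stay short enough in $\sigma$ (they sit between original vertices that may be $\Theta(n)$ apart). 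Without a precise invariant on the span of each part's red neighbourhood at every phase, the claimed bound on stretch is unsupported; and as noted above, recovering exactly $32(4d+5)^3$ from a direct construction would require matching a constant that in the paper comes from an entirely separate pipeline.
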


By \emph{$(\geqslant s)$-subdivision} of $G$, we mean every graph obtained by subdividing each edge of $G$ at least $s$~times.
In particular, for every natural $k$, the $n$-vertex $k^2 2^{2k(k-1)}$-subdivision of the $k \times k$ grid has bounded maximum degree (by 4) and stretch-width (by 296352), whereas it has treewidth $k=\Omega(\sqrt{\log n})$.
A~more careful argument and reexamination of~\cref{thm:subd} show that, for some constant $c$, the $n$-vertex $2^{ck}$-subdivision of the $k \times k$ grid has bounded $\Delta+\stw$, and treewidth $k=\Omega(\log n)$ matching the upper bound of~\cref{thm:tw-bound}.

The proofs of~\cref{thm:tw-bound,thm:subd} involve the notion of \emph{overlap graph} of a~graph $G$ whose vertex set is totally ordered by $\prec$, with one vertex per edge of $G$, and an edge between two ``overlapping edges'' of $G$, that is, two edges $ab$ and $cd$ such that $a \prec c \prec b \prec d$.
Using~\cref{thm:matrix-characterization}, we show that finding a~vertex ordering such that the overlap graph has no large biclique allows to bound the stretch-width. 
\begin{lemma}\label{lem:weakly-sparse-overlap}
  For every ordered graph $(G,\prec)$ and every integer $t$, if the overlap graph of $(G,\prec)$ has no $K_{t,t}$ subgraph then $\stw(G) < 32(2t+1)^3$.
\end{lemma}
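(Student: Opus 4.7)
The plan is to reduce the statement to the matrix characterization. Let $M$ be the $V(G) \times V(G)$ symmetric $0,1$-matrix recording the adjacencies of $G$ with both rows and columns indexed by $V(G)$ in $\prec$-order. Any stretch-width-witnessing sequence for $M$ yields the same bound for $\stw(G)$, so it suffices to show $\stw(M) < 32(2t+1)^3$. By \cref{thm:approx-stw} (equivalently, \cref{thm:matrix-characterization}) applied with $k := 2t+1$, this reduces to ruling out the existence of a $(2t+1)$-wide division of $M$.

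I will argue the contrapositive: from a hypothetical $(2t+1)$-wide division of $M$, I will extract a $K_{t,t}$ subgraph of the overlap graph of $(G,\prec)$. Such a division hands us $2t+1$ consecutive intervals $I_1 \prec I_2 \prec \cdots \prec I_{2t+1}$ of $V(G)$ together with a wideness guarantee on the resulting $(2t+1)\times (2t+1)$ block matrix, forcing a substantial diversity of $1$-entries (i.e., edges of $G$) across blocks. The goal is to mine this diversity for two families of $t$ edges each whose $t^2$ cross-pairs all overlap.

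The crucial observation is elementary: if $ab$ and $cd$ are edges of $G$ with $a \in I_i$, $b \in I_k$, $c \in I_j$, $d \in I_\ell$, where $1 \le i < j < k < \ell \le 2t+1$, then automatically $a \prec c \prec b \prec d$, so $ab$ and $cd$ overlap. It therefore suffices to find block indices $i < j < k < \ell$ together with $t$ edges crossing from $I_i$ to $I_k$ and $t$ other edges crossing from $I_j$ to $I_\ell$; these $2t$ edges will induce the forbidden $K_{t,t}$ in the overlap graph. Such a crossing configuration can be pulled out of the wideness condition by a short double-pigeonhole on the $2t+1$ blocks, using that a $(2t+1)$-wide division forces the block pattern of $1$-entries to realize, for at least one quadruple of indices, both an $(I_i, I_k)$-rich and an $(I_j, I_\ell)$-rich behavior simultaneously.

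The main obstacle is this last extraction, since it is where one must match the precise combinatorial definition of $(2t+1)$-wide division to the four-interval overlap scheme above. Once the block indices and the two edge families are produced, everything else --- lifting the matrix bound to a graph bound, and plugging in the constant $32$ supplied by \cref{thm:approx-stw} --- is routine.
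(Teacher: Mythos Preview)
Your reduction to the matrix characterization is right, and the geometric observation that edges in blocks $I_i$--$I_k$ and $I_j$--$I_\ell$ with $i<j<k<\ell$ automatically overlap is exactly the mechanism the paper exploits. The gap is in what you flag as ``the main obstacle,'' and it stems from a misreading of the definition. A $k$-wide division is not a bundle of $k$ consecutive intervals with a dense $k\times k$ block pattern; it is a division into arbitrarily many parts $C_1,\ldots,C_p$ in which \emph{every} row part $R_i$ remains diverse (at least $k$ distinct rows) \emph{after deleting} any $k$ consecutive column parts around the diagonal block $C_i$. The forced $1$-entries therefore live \emph{outside} a diagonal band of width $k$, scattered over the full length of the matrix, and nothing pins them into any single pair $(I_i,I_k)$ carrying $t$ edges. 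A local double-pigeonhole on $2t{+}1$ blocks has nothing to work with.

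The paper's argument is global. From a $2t$-wide division, each $R_i$ splits along the diagonal band into $R_i^\leftarrow$ and $R_i^\rightarrow$, together holding at least $2t$ off-band $1$-entries (and symmetrically $C_j^\uparrow,C_j^\downarrow$). The $K_{t,t}$-freeness of the overlap graph forces $\min(\lVert R_i^\rightarrow\rVert,\lVert C_j^\uparrow\rVert)<t$ whenever $i<j<i+t$, because otherwise those two half-parts already realize your four-index crossing at the level of half-rows against half-columns. Chaining these constraints yields the propagation $\lVert R_{2i}^\rightarrow\rVert>t\Rightarrow\lVert C_{2i+1}^\uparrow\rVert<t\Rightarrow\lVert C_{2i+1}^\downarrow\rVert>t\Rightarrow\lVert R_{2i+2}^\leftarrow\rVert<t\Rightarrow\lVert R_{2i+2}^\rightarrow\rVert>t$, which marches down the diagonal indefinitely and eventually falls off the matrix, a contradiction. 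So the crossing you want is found, but by an inductive sweep across all blocks, not inside a bounded window.
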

\Cref{thm:subd} is then derived by designing a~long subdivision process that, for ordered graphs of maximum degree~$d$, reduces the bicliques in the overlap graph to a~size at~most linear in~$d$.   

\Cref{thm:tw-bound} has direct algorithmic implications for classes of bounded stretch-width.
\begin{proposition}\label{prop:subexp}
  There is an algorithm that solves \mis in graphs of bounded stretch-width with running time $2^{\Tilde{O}(n^{4/5})}$. 
\end{proposition}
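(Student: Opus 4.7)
My plan is to combine a standard branching rule on high-degree vertices with the tree-decomposition dynamic program for \smis, balancing the two at the degree threshold $d := n^{1/5}$. I~rely on the fact (immediate from the definition of stretch-width) that $\stw$ is non-increasing under vertex deletion, so the $\stw(G) = O(1)$ hypothesis propagates to every subinstance produced by branching.

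As long as the current graph contains a vertex $v$ with $\deg(v) > d$, I branch on whether $v$ lies in an optimum independent set: if it does, I recurse on $G - N[v]$ and add one to the returned value; if not, I recurse on $G - v$. An ``include'' branch removes at least $d+2$ vertices while an ``exclude'' branch removes exactly one, so any root-to-leaf path with $k$ include-branches has length at most $n - kd$. Bounding the number of leaves by the number of such include/exclude sequences gives $\sum_{k=0}^{n/(d+1)} \binom{n}{k}$, which for $d = n^{1/5}$ evaluates to $2^{O(n \log d / d)} = 2^{\Tilde O(n^{4/5})}$.

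At every leaf the remaining graph $H$ satisfies $\Delta(H) \le d$ and $\stw(H) = O(1)$, so \cref{thm:tw-bound} gives $\tw(H) = O(d^4 \log n) = \Tilde O(n^{4/5})$. Running the textbook DP for \smis on a tree decomposition of that width resolves the leaf in time $2^{\Tilde O(n^{4/5})}$, and multiplying by the branching-tree bound yields the claimed total. The one point to verify is the algorithmic side of \cref{thm:tw-bound}: either its proof constructs a tree decomposition of the stated width in polynomial time, or one invokes a constant-factor treewidth approximation (which runs in $2^{O(\tw)} n^{O(1)}$); either option fits comfortably within the $2^{\Tilde O(n^{4/5})}$ budget, so this is the only part of the plan that depends on unpacking a theorem the proof of which is not in this excerpt.
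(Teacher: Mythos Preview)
Your proposal is correct and matches the paper's proof essentially step for step: branch on vertices of degree at least $n^{1/5}$, bound the number of leaves by $\binom{n}{n^{4/5}}=2^{\Tilde O(n^{4/5})}$, and at each leaf invoke \cref{thm:tw-bound} to get treewidth $O(n^{4/5}\log n)$ before running the DP. The paper resolves your one remaining concern exactly as you anticipated, by calling Korhonen's single-exponential $2$-approximation for treewidth (as in the proof of \cref{cor:ecml}); your observation that $\stw$ is monotone under vertex deletion is implicit in the paper but worth stating, as you do.
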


Pilipczuk~\cite{Pilipczuk11} showed that any problem expressible in Existential Counting Modal Logic (ECML) admits a single-exponential fixed-parameter algorithm in treewidth.
In particular, ECML model checking can be solved in polynomial time in any class with logarithmic treewidth. 
This logic allows existential quantifications over vertex and edge sets followed by an \emph{arithmetic formula} and a \emph{counting modal formula} that shall be satisfied from every vertex~$v$.
The arithmetic formula is a~quantifier-free expression that may involve the cardinality of the vertex and edge sets, as well as integer parameters.
Counting modal formulas enrich quantifier-free Boolean formulas with $\Diamond^S \varphi$, whose semantics is that the current vertex~$v$ has a number of neighbors satisfying $\varphi$ in a~prescribed ultimately periodic set $S$ of non-negative integers.

The logic ECML+C gives to ECML the power of also using in the arithmetic formula the number of connected components in subgraphs induced by some vertex or edge sets.
There is a~Monte-Carlo polynomial-time algorithm for ECML+C in graphs of treewidth at most a~logarithm function in their number of vertices~\cite{Pilipczuk11}.
Most NP-hard graphs problems, such as \textsc{Maximum Independent Set}, \textsc{Minimum Dominating Set}, \textsc{Steiner Tree}, etc. are expressible in ECML+C; see \cite[Appendix D]{Pilipczuk11} for the ECML+C formulation of several examples. 

\begin{corollary}\label{cor:ecml}
  Problems definable in ECML (resp. ECML+C) can be solved in polynomial time (resp. randomized polynomial time) in bounded-degree graphs of bounded stretch-width.
\end{corollary}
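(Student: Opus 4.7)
The plan is to chain \cref{thm:tw-bound} with Pilipczuk's meta-theorem, after producing a tree decomposition of small width in polynomial time. Fix a class of maximum degree at most $d$ and stretch-width at most $s$, both constants. By \cref{thm:tw-bound}, every $n$-vertex graph $G$ in the class satisfies $\tw(G) \leq c d^4 s^2 \log n = O(\log n)$. To apply any dynamic programming on a tree decomposition, we first need to explicitly compute one of (nearly) optimal width.

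For that, I would invoke a standard polynomial-time approximation algorithm for treewidth, for instance Robertson--Seymour's $4$-approximation running in $2^{O(\tw)} n^{O(1)}$ time, or Bodlaender et al.'s constant-factor approximation. When $\tw(G) = O(\log n)$, this runs in time $2^{O(\log n)} n^{O(1)} = n^{O(1)}$ and returns a tree decomposition of width $O(\log n)$. Importantly, this step requires no knowledge of the stretch-width certificate of $G$: merely knowing that $\tw(G) = O(\log n)$ suffices for the approximation to terminate in polynomial time.

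Next, feed this decomposition to Pilipczuk's algorithm~\cite{Pilipczuk11}: any ECML sentence is model-checked in time $2^{O(\tw)} \cdot n^{O(1)}$, and any ECML+C sentence by a Monte Carlo algorithm with the same running time. With $\tw = O(\log n)$, both become $n^{O(1)}$, deterministic in the first case and randomized in the second. This establishes both halves of the corollary.

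The only step with any subtlety is ensuring that a usable tree decomposition is produced within polynomial time; this is why we appeal to a treewidth approximation rather than, say, exact computation (which would be $n^{O(\log n)}$ or worse). No separate difficulty arises in invoking Pilipczuk's algorithm, since it takes the decomposition as part of its input and its complexity is already single-exponential in the width. Thus the corollary follows essentially by composition of two previously established results.
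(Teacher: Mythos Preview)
Your proof is correct and follows essentially the same approach as the paper: apply \cref{thm:tw-bound} to get $\tw(G)=O(\log n)$, compute a constant-factor approximate tree decomposition in $2^{O(\tw)}n^{O(1)}=n^{O(1)}$ time, and plug it into Pilipczuk's single-exponential algorithm. The only cosmetic difference is that the paper invokes Korhonen's 2-approximation~\cite{Korhonen21} for the decomposition step, whereas you cite Robertson--Seymour/Bodlaender et al.; either choice works here.
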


\textbf{Perspectives.}
\cref{prop:subexp,cor:ecml} constitute some preliminary pieces of evidence of the algorithmic amenability of classes of bounded stretch-width.
We ask several questions.
How can the running time of~\cref{prop:subexp} be improved?
(As far as we know, there could be a polynomial-time algorithm for any problem defined in ECML on graphs of bounded stretch-width.)
As for twin-width, an approximation algorithm for stretch-width of (unordered) graphs remains open.
\Cref{lem:weakly-sparse-overlap} gives some hope that this question might be easier than its twin-width counterpart, especially among \emph{sparse} graphs.

Can we lift the bounded-degree requirement in \cref{thm:subd}, that is, is there a~function $f$ and a~constant $c$, such that the stretch-width of any~$(\geqslant f(n))$-subdivision of any $n$-vertex graph is at~most~$c$?
Our separating example showing that bounded stretch-width is strictly more general than bounded clique-width (\cref{thm:separation}) yields graphs of essentially logarithmic clique-width.
Is that true in general?
\begin{conjecture}\label{conj:log-cw}
  For every class $\mathcal C$ of bounded stretch-width, there is a constant $c$ such that for every $n$-vertex graph $G \in \mathcal C$ the clique-width of $G$ is at most $c \log n$. 
\end{conjecture}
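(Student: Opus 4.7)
The plan is to build a clique-width expression for $G$ of width $O(\stw(G) \log n)$ by balanced divide-and-conquer on the fixed order $\sigma = v_1 \prec \cdots \prec v_n$ witnessing $\stw(G) \leq k$. Concretely, I would recursively bisect $V(G)$ along $\sigma$ to produce a balanced binary decomposition tree $T$ of depth $\lceil \log_2 n \rceil$, and construct the clique-width expression bottom-up along $T$. At each internal node $u$, the two subtrees inductively have produced the subgraphs on the two halves of $\sigma$; the task at $u$ is then to add the bipartite edges across the cut using only $O(k)$ fresh labels.

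The key observation to exploit is that the bandwidth bound on every red graph limits, at any moment of the sequence, the number of parts straddling a given cut of $\sigma$ to at most $k$. For the cut associated with node $u$, one can therefore find a moment $t(u)$ of the contraction sequence at which precisely the few parts responsible for the still-undecided bipartite edges across that cut coincide with the red-graph vertices touching the cut. I would reserve $O(k)$ fresh labels per level to mark these parts and use them to add the required bipartite edges when processing $u$. Since $T$ has $\lceil \log_2 n \rceil$ levels and each level introduces $O(k)$ labels, the total number of labels active at any time is $O(k \log n)$, which is $O(\log n)$ for any class of bounded stretch-width.

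The main obstacle is that parts of a stretch-width contraction sequence need not be intervals of $\sigma$: a single part can contain vertices on both sides of a cut, and even straddle many cuts simultaneously. When a part is split by the cut at $u$, the label representing it must be synchronized between the two recursive subcalls, and the bookkeeping of such shared labels across all overlapping cuts threatens to blow up the label budget. Handling this properly likely requires a refined encoding where a part straddling a cut contributes coordinated labels on each side, with the bandwidth bound used to guarantee that only $O(k)$ such splits are simultaneously active at any point in the construction.

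A potentially cleaner alternative is to bypass the contraction sequence and invoke the matrix characterization \cref{thm:matrix-characterization}: at each cut, the absence of a $k$-wide division should yield a decomposition of the bipartite block $M[L,R]$ into $O(k)$ homogeneous subblocks, from which the clique-width expression can be assembled directly. A third angle, specific to sparse settings, would be to leverage \cref{lem:weakly-sparse-overlap} and combine the weak sparsity of the overlap graph of $(G, \sigma)$ with balanced bisection. In any approach, the delicate task is avoiding a $(\log n)^{O(1)}$ factor and honestly matching the $\Omega(\log n)$ clique-width lower bound implicit in the bounded-degree examples remarked upon after \cref{thm:subd}.
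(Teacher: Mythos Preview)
The statement you are attempting to prove is \cref{conj:log-cw}, which is an \emph{open conjecture} in the paper, not a theorem. The paper offers no proof; it explicitly poses this as a question and discusses its relationship to other open problems. There is therefore no ``paper's own proof'' against which to compare your proposal.

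Your proposal is, accordingly, not a proof but a research plan, and you recognise this yourself: you identify the central obstacle (parts of a stretch-width sequence need not be intervals of $\sigma$, so a single part may straddle many cuts of the balanced bisection tree) and leave it unresolved. That obstacle is real and is precisely why the conjecture is open. The bandwidth-type bound controls how many parts \emph{conflict with} the closed red neighbourhood of a given part, but it does not directly bound, for an arbitrary cut of $\sigma$, the number of parts that straddle that cut at a given moment of the sequence; nor does it give you the homogeneous block structure you would need to add the bipartite edges across a cut with $O(k)$ labels. Your second suggestion, using the absence of a $k$-wide division to decompose the bipartite block $M[L,R]$ into $O(k)$ homogeneous subblocks, is also not obviously available: $k$-diagonality bounds the number of \emph{distinct rows} of $R_i$ after deleting nearby column blocks, which is a rank-type statement about a narrow band around the diagonal, not a global bound on the complexity of an arbitrary off-diagonal block.

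In short: the paper does not prove this statement, your proposal correctly locates the difficulty, and none of the three angles you sketch currently closes the gap.
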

We ask the same question with \emph{rank-width} instead of \emph{clique-width}, which would be more algorithmically helpful.
One interpretation of \cref{thm:tw-bound} is that graphs of bounded maximum degree and bounded stretch-width have logarithmic treewidth.
Whether the \emph{bounded-degree} constraint can be relaxed to the mere absence of large bipartite complete subgraphs is related to~\cref{conj:log-cw}.
A~positive answer to~\cref{conj:log-cw} would indeed imply this relaxation, as Gurski and Wanke have shown that graphs without $K_{t,t}$ subgraphs have treewidth at most their clique-width times~$3t$~\cite{Gurski00}.
A~natural future work would consist of using the witness of low stretch-width to get improved algorithms compared to those attained with a witness of low twin-width. 

\medskip

\textbf{Related work.}
Our work is in line with twin-width~\cite{twin-width1}, and the reduced parameters~\cite{reduced-bdw}.
\Cref{thm:stw-gen-cw} closely follows a~similar proof in the sixth paper of the twin-width series~\cite{twin-width6}, while \cref{thm:matrix-characterization} is inspired by the fourth paper~\cite{twin-width4}, and notably the so-called \emph{rich divisions}.

Finding the right logic for a~given width parameter, or the right width parameter for a~given logic has been a~common goal ever since Courcelle's and Courcelle-Makowsky-Rotics's theorems~\cite{Courcelle90,Courcelle00} relating treewidth with MSO$_2$, and clique-width with MSO$_1$.
Recent developments (all from 2023) include an efficient model checking of the new logic \textsc{A\&C~DN} (an extension of Existential MSO$_1$) on classes of bounded mim-width~\cite{Bergougnoux23}, the new parameter flip-width~\cite{Torunczyk23}, which could lead to an efficient first-order (FO) model checking in a~very general class, and efficient model checking algorithms for FO extensions with disjoint-paths predicates in proper minor-closed classes~\cite{Golovach23}, and in proper topological-minor-closed classes~\cite{Schirrmacher23}.

Classes with logarithmic treewidth, although not a priori defined as such, are somewhat rare.
To our knowledge, the first such example is the class of triangle-free graphs with no theta (see~\cite{Sintiari21} for the lower bound, and \cite{Abrishami22}, for the upper bound).
Another example consists of graphs without $K_{t,t}$ subgraph and bounded induced cycle packing number~\cite{Bonamy23}.
We add a~new family: graphs of bounded maximum degree and bounded stretch-width.
Note that these three families are pairwise incomparable.

\section{Preliminaries}\label{sec:prelim}

For $i \leqslant j$ two integers, we denote the set of integers that are at least $i$ and at most $j$ by $[i,j]$, and $[i]$ is a~short-hand for $[1,i]$.
We use the standard graph-theoretic notations.
In particular, for a graph $G$, we denote by $V(G)$ its set of vertices and by $E(G)$ its set of edges.
If $S \subseteq V(G)$, the \emph{subgraph of $G$ induced by $S$}, denoted $G[S]$ is the graph obtained from $G$ by removing the vertices not in $S$.

\subsection{Contraction sequences and twin-width}

Twin-width is a graph parameter introduced by Bonnet, Kim, Thomassé, and Watrigant~\cite{twin-width1}.
A~possible definition involves the notions of \emph{trigraphs}, \emph{red graphs}, and \emph{contraction sequences}.
A~\emph{trigraph} is a graph with two types of edges: black (regular) edges and red (error) edges.
The \emph{red graph} $\R(H)$ of a trigraph $H$ consists of ignoring its black edges, and considering its red edges as being normal (black) edges.
We may say~\emph{red neighbor} (or \emph{red neighborhood}) to simply mean a~neighbor (or neighborhood) in the red graph.
A~(vertex) \emph{contraction} consists of merging two (non-necessarily adjacent) vertices, say, $u, v$ into a~vertex~$w$, and keeping every edge $wz$ black if and only if $uz$ and $vz$ were previously black edges.
The other edges incident to $w$ become red (if not already), and the rest of the trigraph remains the same.
A~\emph{contraction sequence} of an $n$-vertex graph $G$ is a sequence of trigraphs $G=G_n$, $\ldots, G_1$ such that $G_i$ is obtained from $G_{i+1}$ by performing one contraction.
A~\mbox{\emph{$d$-sequence}} is a contraction sequence in which every vertex of every trigraph has at most $d$ red edges incident to it.
In other words, every red graph of the sequence has maximum degree at most~$d$.
The~\emph{twin-width} of~$G$, denoted by $\tww(G)$, is then the minimum integer~$d$ such that $G$ admits a $d$-sequence.
\Cref{fig:contraction-sequence} gives an example of a graph with a 2-sequence, i.e., of twin-width at most~2.

\begin{figure}[h!]
  \centering
  \resizebox{400pt}{!}{
  \begin{tikzpicture}[
      vertex/.style={circle, draw, minimum size=0.68cm}
    ]
    \def\s{1.2}
    \foreach \i/\j/\l in {0/0/a,0/1/b,0/2/c,1/0/d,1/1/e,1/2/f,2/1/g}{
      \node[vertex] (\l) at (\i * \s,\j * \s) {$\l$};
    }
    \foreach \i/\j in {a/b,a/d,a/f,b/c,b/d,b/e,b/f,c/e,c/f,d/e,d/g,e/g,f/g}{
      \draw (\i) -- (\j);
    }

    \begin{scope}[xshift=3 * \s cm]
    \foreach \i/\j/\l in {0/0/a,0/1/b,0/2/c,1/0/d,2/1/g}{
      \node[vertex] (\l) at (\i * \s,\j * \s) {$\l$};
    }
    \foreach \i/\j/\l in {1/1/e,1/2/f}{
      \node[vertex,opacity=0.2] (\l) at (\i * \s,\j * \s) {$\l$};
    }
    \node[draw,rounded corners,inner sep=0.01cm,fit=(e) (f)] (ef) {ef};
    \foreach \i/\j in {a/b,a/d,b/c,b/d,b/ef,c/ef,c/ef,d/g,ef/g,ef/g}{
      \draw (\i) -- (\j);
    }
    \foreach \i/\j in {a/ef,d/ef}{
      \draw[red, very thick] (\i) -- (\j);
    }
    \end{scope}

    \begin{scope}[xshift=6 * \s cm]
    \foreach \i/\j/\l in {0/1/b,0/2/c,2/1/g,1/1/ef}{
      \node[vertex] (\l) at (\i * \s,\j * \s) {$\l$};
    }
    \foreach \i/\j/\l in {0/0/a,1/0/d}{
      \node[vertex,opacity=0.2] (\l) at (\i * \s,\j * \s) {$\l$};
    }
    \draw[opacity=0.2] (a) -- (d);
    \node[draw,rounded corners,inner sep=0.01cm,fit=(a) (d)] (ad) {ad};
    \foreach \i/\j in {ad/b,b/c,b/ad,b/ef,c/ef,c/ef,ef/g,ef/g}{
      \draw (\i) -- (\j);
    }
    \foreach \i/\j in {ad/ef,ad/g}{
      \draw[red, very thick] (\i) -- (\j);
    }
    \end{scope}

    \begin{scope}[xshift=9 * \s cm]
    \foreach \i/\j/\l in {0/2/c,2/1/g,0.5/0/ad}{
      \node[vertex] (\l) at (\i * \s,\j * \s) {$\l$};
    }
    \foreach \i/\j/\l in {0/1/b,1/1/ef}{
      \node[vertex,opacity=0.2] (\l) at (\i * \s,\j * \s) {$\l$};
    }
    \draw[opacity=0.2] (b) -- (ef);
    \node[draw,rounded corners,inner sep=0.01cm,fit=(b) (ef)] (bef) {bef};
    \foreach \i/\j in {ad/bef,bef/c,bef/ad,c/bef,c/bef,bef/g}{
      \draw (\i) -- (\j);
    }
    \foreach \i/\j in {ad/bef,ad/g,bef/g}{
      \draw[red, very thick] (\i) -- (\j);
    }
    \end{scope}

    \begin{scope}[xshift=11.7 * \s cm]
    \foreach \i/\j/\l in {0/2/c}{
      \node[vertex] (\l) at (\i * \s,\j * \s) {$\l$};
    }
     \foreach \i/\j/\l in {0.5/0/adg,0.5/1.1/bef}{
      \node[vertex] (\l) at (\i * \s,\j * \s) {\footnotesize{\l}};
    }
    \foreach \i/\j in {c/bef}{
      \draw (\i) -- (\j);
    }
    \foreach \i/\j in {adg/bef}{
      \draw[red, very thick] (\i) -- (\j);
    }
    \end{scope}

    \begin{scope}[xshift=13.7 * \s cm]
    \foreach \i/\j/\l in {0.5/0/adg,0.5/1.1/bcef}{
      \node[vertex] (\l) at (\i * \s,\j * \s) {\footnotesize{\l}};
    }
    \foreach \i/\j in {adg/bcef}{
      \draw[red, very thick] (\i) -- (\j);
    }
    \end{scope}

    \begin{scope}[xshift=15 * \s cm]
    \foreach \i/\j/\l in {1/0.75/abcdefg}{
      \node[vertex] (\l) at (\i * \s,\j * \s) {\tiny{\l}};
    }
    \end{scope}
    
  \end{tikzpicture}
  }
  \caption{A 2-sequence witnessing that the initial graph has twin-width at most~2.}
  \label{fig:contraction-sequence}
\end{figure}

\subsection{Partition sequences}

Partition sequences yield an equivalent viewpoint to contraction sequences.
Instead of dealing with a sequence of trigraphs $G=G_n, \ldots, G_1$, we now have a~sequence of partitions $\P_n, \ldots, \P_1$ of $V(G)$, with $\P_n = \{\{v\}~|~v\in V(G)\}$ and for every $i \in [n-1]$, $\P_i$ is obtained from $\P_{i+1}$ by merging two parts $X, Y \in \P_{i+1}$ into one ($X \cup Y$).
In particular $\P_1=\{V(G)\}$.
Now one can obtain the red graph $\R(G_i)$ of $G_i$, as the graph whose vertices are the parts of $\P_i$, and whose edges link two parts $X \neq Y \in P_i$ whenever there is $u, u' \in X$ and $v, v' \in Y$ such that $uv \in E(G)$ and $u'v' \notin E(G)$.
We may call two such parts $X, Y$ \emph{inhomogeneous}.
On the contrary, two parts $X, Y$ are \emph{homogeneous} in $G$ when every vertex of $X$ is adjacent to every vertex of $Y$, or no vertex of $X$ is adjacent to a~vertex of $Y$.
We will also denote~$\R(G_i)$ by~$\R(\P_i)$.

\subsection{Reduced parameters and functional equivalence}

We detail the definition of reduced parameters mainly for the introduction, and the notions of \emph{functional equivalence}, \emph{component of twin-width}, and \emph{reduced bandwidth}; the latter being conceptually close to stretch-width.
The reduced parameters will not be useful, \emph{per se}, in the rest of the paper.  

As we mentioned in the introduction, there are stronger\footnote{Note that every graph admits a sequence where all the red graphs consist of one star together with isolated vertices (namely, any partition sequence having, at every step, only one part that is not a~singleton). Stars form arguably the simplest class of unbounded degree. Thus trading the condition of maximum degree to an incomparable property on the red graphs should likely be accompanied by some extra requirement, like forcing the partitions to be reasonably ``balanced.''} constraints that one can put on the red graphs than merely having bounded maximum degree.
This leads to the \emph{reduced parameters} as defined in~\cite{reduced-bdw}.
If $p$ is a~graph parameter, one can define the parameter \emph{reduced~$p$}, denoted by $p^{\downarrow}$, of a~$n$-vertex graph $G$ is the minimum over every contraction sequence $G=G_n, \ldots, G_1$ of $\max_{i \in [n]}p(\R(G_i))$.
Then $\Delta^\downarrow$ is the twin-width, when $\Delta$ denotes the maximum degree.
For $p=\star$, the maximum size of a connected component, $p^\downarrow$ is the so-called \emph{component twin-width} (see~\cite{twin-width6}).

A~graph parameter $p$ is \emph{functionally bounded} by a~graph parameter $q$, denoted by $p \sqsubseteq q$, if there is a~function $f$ such that for every graph $G$, $p(G) \leqslant f(q(G))$.
Parameters $p, q$ are \emph{functionally equivalent} or \emph{tied} if $p \sqsubseteq q$ and $q \sqsubseteq p$.
We finally denote by $p \sqsubset q$ the fact that $p \sqsubseteq q$ holds but $q \sqsubseteq p$ does not.
It is shown in~\cite{reduced-bdw} that, under some relatively mild assumptions, the strict ``inclusion'' $p \sqsubset q$ implies the strict ``inclusion'' $p^\downarrow \sqsubset q^\downarrow$.

It can be seen that $\star \sqsubset \bw \sqsubset \cutw \sqsubset (\Delta+\tw) \sqsubset \Delta$, where $\bw$, $\cutw$, $\tw$ are the bandwidth, cutwidth, and treewidth, respectively.
We give a~definition of bandwidth here (mainly because of the apparent similarity between stretch-width and reduced bandwidth).
The treewidth of a graph is defined in the next subsection, while we omit the definition of \emph{cutwidth} as we will not need it.
A~\emph{linear layout} of an $n$-vertex graph $G$ is a~bijective map $\sigma$ from $V(G)$ to $[n]$.
The~\emph{length} of an edge $e=uv \in E(G)$ under the linear layout $\sigma$ is defined as $|\sigma(u)-\sigma(v)|$.
The \emph{bandwidth} $\bw(G)$ of a graph $G$ is the minimum over every linear layout $\sigma$ of the maximum length of an edge $e \in E(G)$ under $\sigma$.

By~\cite{reduced-bdw}, $\ctww=\star^\downarrow \sqsubset \bw^\downarrow \sqsubset \cutw^\downarrow \sqsubset (\Delta+\tw)^\downarrow \sqsubset \Delta^\downarrow=\tww$.
As component twin-width $\ctww$ and clique-width $\cw$ are functionally equivalent~\cite{twin-width6}, we get a~strict ladder of classes interpolating between clique-width and twin-width.
However, for every parameter~$p$ considered so far except $\star$ (even bandwidth), the classes $\{G^{(n)}~|~n~\in~\mathbb N,~G~\text{has}~n~\text{vertices}\}$ and of all planar graphs have bounded $p^\downarrow$~\cite{reduced-bdw}, where $G^{(n)}$ denotes the $n$-subdivision of~$G$, that is, the graph obtained after replacing every edge of~$G$ by a~$(n+1)$-edge path.

As this is an obstacle to exactly solving more general problems than first-order model checking, the current paper is about a~new parameter $\stw$ (stretch-width) satisfying, as we will prove, $\cw \sqsubset \stw \sqsubset \bw^\downarrow$, while not containing the class of all $n$-subdivisions nor the one of all planar graphs. 

\subsection{Treewidth, separation number, and clique-width}

We recall the definition of treewidth and clique-width, for completeness.
We also state a~useful characterization of bounded treewidth in terms of balanced separators. 

A~\emph{tree-decomposition} of a~graph $G$ is a~pair $(T,\beta)$ where $T$ is a~tree, and $\beta$ is a~map from $V(T)$ to $2^{V(G)}$, such that the following three conditions are met:
\begin{compactitem}
\item for every $v \in V(G)$, there is a~$t \in V(T)$ such that $v \in \beta(t)$;
\item for every $uv \in E(G)$, there is a~$t \in V(T)$ such that $\{u,v\} \subseteq \beta(t)$;
\item for every $v \in V(G)$, $\{t \in V(T)~|~v \in \beta(t)\}$ induces a connected graph in $T$ (i.e., a~subtree).
\end{compactitem}

When dealing with treewidth in~\cref{sec:tw-bound} it will more convenient to think of it in terms of the functionally equivalent \emph{separation number}.
A~\emph{separation} $(A,B)$ of a~graph $G$ is such that $A \cup B=V(G)$ and there is no edge between $A \setminus B$ and $B \setminus A$.
The \emph{order} of the separation $(A,B)$ is $|A \cap B|$.
A~separation $(A,B)$ is \emph{balanced} if $\max(|A \setminus B|, |B \setminus A|) \leqslant \frac{2}{3} |V(G)|$.
The \emph{separation number} $\sn(G)$ of $G$ is the smallest integer $s$ such that every subgraph of $G$ admits a~balanced separation of order at most~$s$.
It is not difficult to show that for every graph $G$, $\sn(G) \leqslant \tw(G)+1$.
Dvorák and Norin showed the converse linear dependence:
\begin{lemma}[\cite{Dvorak19}]\label{lem:tw-sn}
  For every graph $G$, $\tw(G) \leqslant 15 \sn(G)$.
\end{lemma}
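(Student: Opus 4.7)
The plan is to construct a tree decomposition of $G$ of width at most $15\sn(G)$ by recursive balanced separation. The naive recursion---repeatedly pick a~balanced separator of size at most $\sn(G)$, add it to the current bag, and recurse on each side---yields only $\tw(G) = O(\sn(G) \cdot \log |V(G)|)$, since separators accumulate along the recursion depth. To remove the logarithmic factor, one must \emph{bound} the accumulated boundary rather than let it grow.

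Set $s := \sn(G)$. I~would proceed by induction on $|V(H)|$, strengthening the statement to: \emph{for every subgraph $H$ of $G$ and every $W \subseteq V(H)$ with $|W| \le 14s$, there is a tree decomposition of $H$ with bags of size at most $15s+1$ and a~designated root bag containing $W$.} Note that $\sn(H) \le s$ by monotonicity. If $|V(H)| \le 15s+1$, take the trivial one-bag decomposition. Otherwise apply the key technical lemma: there is a~separation $(V_1 \cup X, V_2 \cup X)$ of $H$ with $|X| \le s$, $V_1 \cap V_2 = \emptyset$, and each side $V_i$ satisfying both $|W \cap V_i| \le \tfrac{2}{3}|W|$ and $|V_i \setminus W| \le \tfrac{2}{3}|V(H) \setminus W|$. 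The root bag is $W \cup X$, of size at most $15s$, and one recurses on $H[V_i \cup X]$ with new anchor $W_i := (W \cap V_i) \cup X$; this anchor has size at most $\tfrac{2}{3}|W| + s \le \tfrac{2}{3}\cdot 14s + s = \tfrac{31}{3}s < 14s$, so the induction hypothesis applies. The two subtrees are glued at the root bag.

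The main obstacle is the weighted balanced separator lemma: showing that $\sn$ is robust under reweighting, i.e., that for every nonnegative $w$ on $V(H)$ with $w(V(H))=1$, there is a~separation of $H$ of order $O(\sn(H))$ splitting $w$ in a $\tfrac{2}{3}$--$\tfrac{1}{3}$ balanced way. The two balancing conditions above can be combined into a~single weight function (e.g.\ weight $(2|W|)^{-1}$ on $W$ and $(2|V(H)\setminus W|)^{-1}$ elsewhere), so that a~separator balancing this combined weight balances both sets simultaneously. The weighted version is derivable by a~vertex-multiplicities argument (replacing each vertex $v$ by $\approx w(v)$ copies and applying the unweighted bound on $\sn$), but tightening the constant to exactly $15$ requires the careful bookkeeping of~\cite{Dvorak19}.
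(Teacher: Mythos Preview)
The paper does not give its own proof of this lemma: it is stated as a black-box citation of Dvo\v{r}\'ak and Norin~\cite{Dvorak19}, with no argument supplied. So there is nothing in the paper to compare your proposal against.

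That said, what you have written is a faithful sketch of the actual Dvo\v{r}\'ak--Norin argument. The strengthened induction hypothesis (carry a bounded-size anchor set $W$ that must land in the root bag), the recursion via a separator that is balanced simultaneously with respect to $W$ and to $V(H)\setminus W$, and the reduction of this two-constraint balance to a single weighted balanced-separator lemma are exactly the ingredients of~\cite{Dvorak19}. Your arithmetic with the invariant $|W|\le 14s$ and bag size $\le 15s$ is also in line with how the constant $15$ arises there. The one place where your sketch is honest about being incomplete---the weighted separator lemma and the exact constant---is indeed where the real work in~\cite{Dvorak19} lies; the vertex-duplication idea you mention gives the right intuition but, as you note, getting $15$ rather than a larger constant needs their more careful argument.
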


Note that if for some positive constant $c<1$, every subgraph $H$ of $G$ has a~separation $(A,B)$ that is \emph{$c$-balanced}, in the sense that $\max(|A \setminus B|, |B \setminus A|) \leqslant c |V(H)|$ of order at most~$s$, then every subgraph of $G$ has a~balanced separation of order $\lceil \frac{\log c}{\log(2/3)} \rceil \cdot s$.
In particular, by~\cref{lem:tw-sn}, $\tw(G) = O(s)$.

We finish~\cref{sec:prelim} with a~brief definition of clique-width, just for completeness, because the introduction contains several occurrences of it. 
This definition can be ignored in the rest of the paper, and clique-width thought as the reduced parameter \emph{component twin-width} (or $\star^\downarrow$).
The \emph{clique-width} $\cw(G)$ of a graph $G$ is the least number $k$ of colors, called \emph{labels}, needed to build $G$ from the following operations:
\begin{compactitem}
\item create a~vertex with a label $i \in [k]$,
\item make the union of two labeled graphs,
\item relabel every vertex colored $i$ with the label $j$, for some $i \neq j \in [k]$,
\item add all edges between vertices labeled $i$ and vertices labeled $j$, for some $i \neq j \in [k]$.
\end{compactitem}

\subsection{Outline}

In~\cref{sec:stw}, we define stretch-width and prove \cref{thm:stw-gen-cw,thm:separation}.
In~\cref{sec:matrices}, we show \cref{thm:matrix-characterization,thm:approx-stw}.
  In~\cref{sec:overlap}, we prove \cref{lem:weakly-sparse-overlap}.
  In~\cref{sec:subd}, we establish~\cref{thm:subd}.
  Finally in~\cref{sec:tw-bound}, we show \cref{thm:tw-bound} and draw the algorithmic consequences of~\cref{prop:subexp,cor:ecml}.

\section{Stretch-width}\label{sec:stw}

An ordered graph is a pair $(G, \prec)$ where $G$ is a graph and $\prec$ a strict total order on $V(G)$.
We write $u \preceq v$ whenever $u \prec v$ or $u = v$.  
Let $(G, \prec)$ is an ordered graph, and $X \subseteq V(G)$.
We now define some objects depending on $\prec$, but as the order will be clear from the context, we omit it from the corresponding notations.

The minimum and maximum of $X$ along $\prec$ are denoted by $\min(X)$ and $\max(X)$, respectively.
The \emph{convex closure} or \emph{span} of $X$ is $\conv{X} := \{v \in V(G)~|~\min(X) \preceq v \preceq \max(X)\}$.
Two sets $X,Y \subseteq V(G)$ are \emph{in conflict}\footnote{In a similar context in~\cite{twin-width4}, the verb \emph{overlap} was also used. In this paper, we will reserve \emph{overlap} for intersecting intervals (actually edges) that are not nested, notion which we will later use.}, or $X$ \emph{\crosses}~$Y$, if $\conv{X} \cap \conv{Y} \neq \emptyset$.
Note that this does not imply that $X$ and~$Y$ themselves intersect, and indeed we will mostly use this notion for two disjoint sets~$X,Y$.

Let now $\P$ be a~partition of $V(G)$, $\R(\P)$ its red graph, and $X \in \P$.
We say that $Y \in \P \setminus \{X\}$ \emph{\interfs} with $X$ if $Y$ \crosses $N_{\R(\P)}[X]$.
Note that it may well be that $Y$ \interfs with $X$, but not vice versa.
The \emph{\stret} of the part $X \in \P$, denoted by $\str{X}$, is then defined as the number of parts in $\P$ \interfing with $X$.
In turn, the \emph{\stret} of $\P$ is the maximum over every part $Z \in \P$ of~$\str{Z}$. 
The \emph{stretch-width} of the ordered graph $(G,\prec)$, denoted by $\stw(G,\prec)$, is the minimum, taken among every partition sequence $\P_n, \ldots, \P_1$ of~$G$, of $\max_{i \in [n]} \str{\P_i}$.
Finally the \emph{stretch-width} of $G$, denoted by $\stw(G)$, is the minimum of $\stw(G,\prec)$ taken among every total order $\prec$ on $V(G)$.

Notice the similarity with reduced bandwidth.
We also seek a~sequence without ``long'' red edges.
When witnessing low reduced bandwidth, one could use different (and incompatible) vertex orderings for the different red graphs.
To witness low stretch-width, we need a~stronger property: the existence of a~``global'' vertex ordering such that no red graph of the sequence has a~long edge along this single order.

\subsection{An example forcing interleaved parts}

The definition of the stretch-width of an unordered graph may seem somewhat contrived.
If we are to pick the order~$\prec$, why not choosing one along which we will perform the contractions (thereby making every part an interval, and simplifying greatly the definition)?
We will now see that this is in fact not always possible.
There is a~family of very simple graphs, with bounded treewidth, hence bounded stretch-width (see~\cref{sec:stww-cw}), but such that there is no vertex ordering $\prec$ that simultaneously~witnesses the low stretch-width, and invariably presents the two vertices to be contracted (or parts to be merged) consecutively.

For every positive integer $k$, let $H_k$ be the (series-parallel) graph obtained by adding $k$ internally vertex-disjoint 4-edge paths $s,a_i,b_i,c_i,t$ (for $i \in [k]$) between two fixed vertices $s$ and $t$; see~left of~\cref{fig:H(k)}.
The graph $H_k$ has $3k+2$ vertices and treewidth at most~2.
We will see in the next section that the much more general classes of bounded clique-width have bounded stretch-width.
But let us give a~direct argument for $H_k$, to get familiar with this new width parameter.

We choose the order $s \prec a_1 \prec b_1 \prec c_1 \prec a_2 \prec b_2 \prec c_2 \prec \ldots a_k \prec b_k \prec c_k \prec t$; see~right of~\cref{fig:H(k)}.
For the partition sequence, we will maintain three parts $A, B, C$ such that all the other parts are singletons.
Initially, we have $A=\{a_1\}$, $B=\{b_1\}$, and $C=\{c_1\}$.
Then, for $i$ going from 2 to $k$, we merge $A$ with $\{a_i\}$, then $B$ with $\{b_i\}$, and $C$ with $\{c_i\}$.
In the figure, the parts $A, B, C$ are represented after the iteration $i=3$.
When there are only five parts left (namely $\{s\}, A, B, C, \{t\}$), we finish the sequence in any fashion.

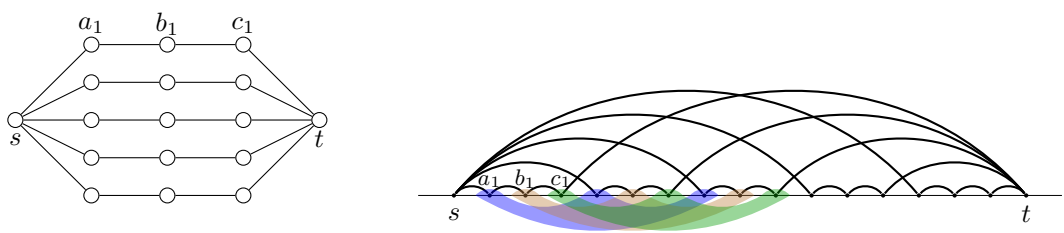
\begin{figure}[h!]
  \centering
  \begin{tikzpicture}[vertex/.style={fill,circle,inner sep=0.02cm},lvertex/.style={draw,circle,inner sep=0.07cm}]
    \def\k{5}
    \def\s{0.5}
    \node[lvertex] (a) at (0,\s * \k/2 + \s/2) {} ;
    \node[lvertex] (b) at (4,\s * \k/2 + \s/2) {} ;
    \foreach \i in {1,...,\k}{
      \foreach \j in {1,2,3}{
        \node[lvertex] (a\i\j) at (\j,\s * \i) {} ;
      }
      \draw (a) -- (a\i1) -- (a\i2) -- (a\i3) -- (b) ;
    }
    \foreach \i/\j/\p in {0/{\s * \k/2 + \s/2 - 0.25}/$s$,4/{\s * \k/2 + \s/2 - 0.25}/$t$,1/{\s * \k+0.25}/{$a_1$},2/{\s * \k+0.28}/{$b_1$},3/{\s * \k+0.25}/{$c_1$}}{
      \node at (\i,\j) {\p} ;
    }

    \begin{scope}[xshift=5.3cm, yshift=0.5cm]
    \pgfmathtruncatemacro\n{3*\k+2} 
    \def\l{8} 

    \draw[very thin] (0,0) -- (\l+\l/\n,0) ;
    \foreach \i in {1,...,\n}{
      \node[vertex] (\i) at (\l * \i/\n,0) {} ;
    }
    \foreach \i/\j/\p in {{\l / \n}/{-0.25}/$s$,{\l}/{-0.25}/$t$, {2 * \l / \n}/{0.19}/{\footnotesize{$a_1$}},{3 * \l / \n}/{0.22}/{\footnotesize{$b_1$}},{4 * \l / \n}/{0.19}/{\footnotesize{$c_1$}}}{
      \node at (\i,\j) {\p} ;
    }
    
    \foreach \i/\j/\b in {2/3/50,3/4/50, 5/6/50,6/7/50, 8/9/50,9/10/50, 11/12/50,12/13/50, 14/15/50,15/16/50,
      1/2/50, 1/5/50, 1/8/50, 1/11/50, 1/14/50,
      16/17/50, 13/17/50, 10/17/50, 7/17/50, 4/17/50}{
      \draw[thick] (\i) to [bend left=\b] (\j) ;
    }
    \foreach \i/\c in {0/blue, 1/brown, 2/{green!60!black}}{
        \fill[opacity=0.4,\c] (1.6 * \l/\n + \i * \l/\n,0) to [bend left=50] (2.4 * \l/\n + \i * \l/\n,0) to [bend left=-30] (4.6 * \l/\n + \i * \l/\n,0) to [bend left=50] (5.4 * \l/\n + \i * \l/\n,0) to [bend left=-30] (7.6 * \l/\n + \i * \l/\n,0) to [bend left=50] (8.4 * \l/\n + \i * \l/\n,0) to [bend left=30] (1.6 * \l/\n + \i * \l/\n,0) ;
    }
    \end{scope}
  \end{tikzpicture}
  \caption{Left: The graph $H_5$. Right: The same graph drawn along an order able (together with an appropriate sequence) to witness low stretch-width. The parts $A, B, C$ maintained by the partition sequence are depicted in blue, brown, green, after the iteration $i=3$.}
  \label{fig:H(k)}
\end{figure}

Importantly, the vertices within $A$, $B$, or $C$ have the same neighborhood in $\{s,t\}$.
Thus the long black edges incident to $s$ and $t$ never become long red edges.
Note that in the midst of the $i$-th iteration (also at its start and end), every part $P$ of the current partition $\P$ has a~closed red neighborhood contained in $Z := A \cup B \cup C \cup \{a_i, b_i, c_i\}$.
Set $Z$ is an interval along $\prec$, and intersects at most~six parts of $\P$ among $A$, $B$, $C$, $\{a_i\}$, $\{b_i\}$, $\{c_i\}$.
Hence at most these six parts can \cross $Z$, and thus \interf with $P$.
This implies that $\stw(H_k,\prec)$, hence $\stw(H_k)$, is bounded by a~constant (with greater care, one can show the upper bound of~3).

Now a~vertex ordering such that the sequence can be done without \confing parts forces the $a_i$'s, the $b_i$'s, and the $c_i$'s to be essentially consecutive.
Any such attempt, like the one depicted in~\cref{fig:H(k)-wrong-order} with $s \prec a_1 \prec \ldots \prec a_k \prec b_k \prec \ldots \prec b_1 \prec c_1 \prec \ldots \prec c_k \prec t$, creates a~structure which, as~we will show in~\cref{sec:overlap}, entails large stretch-width: a large ``biclique'' of overlapping edges spanning vertices of bounded degree.  

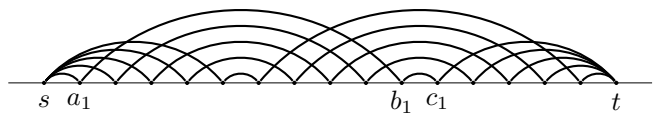
\begin{figure}[h!]
  \centering
  \begin{tikzpicture}[vertex/.style={fill,circle,inner sep=0.02cm},lvertex/.style={draw,circle,inner sep=0.07cm}]
    \def\k{5}
    \pgfmathtruncatemacro\n{3*\k+2} 
    \def\l{8} 

    \draw[very thin] (0,0) -- (\l+\l/\n,0) ;
    \foreach \i in {1,...,\n}{
      \node[vertex] (\i) at (\l * \i/\n,0) {} ;
    }
      \foreach \i/\j/\p in {{\l / \n}/{-0.25}/$s$,{\l}/{-0.25}/$t$, {2 * \l / \n}/{-0.25}/$a_1$,{11 * \l / \n}/{-0.25}/$b_1$,{12 * \l / \n}/{-0.25}/$c_1$}{
      \node at (\i,\j) {\p} ;
    }
    \foreach \i/\j/\b in {2/11/50,3/10/50,4/9/50,5/8/50,6/7/50, 7/16/50,8/15/50,9/14/50,10/13/50,11/12/50,
      1/2/50,1/3/50,1/4/50,1/5/50,1/6/50,
      12/17/50,13/17/50,14/17/50,15/17/50,16/17/50}{
      \draw[thick] (\i) to [bend left=\b] (\j) ;
    }

  \end{tikzpicture}
  \caption{A~failed attempt at contracting along the chosen order. Observe that the first contraction involving some $b_i$ (in the ``middle'' of the order) necessarily creates a~long red edge.}
  \label{fig:H(k)-wrong-order}
\end{figure}

\subsection{Graphs of bounded clique-width have bounded stretch-width}\label{sec:stww-cw}

Clique-width and component twin-width are \emph{functionally equivalent}, that is, a~graph class has bounded clique-width if and only if it has bounded component twin-width~\cite{twin-width1,twin-width6}.
More quantitatively, it can be observed that, for every graph $G$, $\cw(G) \leq \ctww(G)+1 \leq 2 \cdot \cw(G)$; see~\cite{BarilSlides}.
We show that the stretch-width of a graph is at most its component twin-width.

\begin{reptheorem}{thm:stw-gen-cw}\label{from cw to stw}
  For every graph $G$, $\stw(G) \leq \ctww(G)-1$.
  Hence, $\stw(G) \leq 2(\cw(G) - 1)$.
\end{reptheorem}
\begin{proof}
  Let $\P_n, \ldots, \P_1$ be a~partition sequence of a~graph $G$ such that every red graph $\R(\P_i)$ has all its connected components of size at most $t := \ctww(G)$.
  We define a~total order $\prec$ on $V(G)$ as the last order of a~sequence of partial orders $\prec_i$ on $\P_i$ for~$i$ going from 1 to~$n$.
  That is, $\prec_n$ is a~total order, and we imply set $\prec~:=~\prec_n$.
  We maintain the following invariants:
  \begin{compactitem}
  \item Parts $X, Y \in \P_i$ are comparable with respect to $\prec_i$ if and only if they are in distinct connected components of~$\R(\P_i)$, and
  \item \emph{$\prec_i$ is a~total order on the connected components of $\R(\P_i)$}, that is, for every distinct connected components $C, C'$ of $\R(\P_i)$, if $X \prec_i Y$ for some $X \in C$ and $Y \in C'$, then $X \prec_i Y$ holds for every $X \in C$ and $Y \in C'$.
  \end{compactitem}

We define $\prec_1$ as the empty relation, which satisfies the invariant since $\P_n$ has a single part, $V(G)$.
We define $\prec_{i+1}$ from $\prec_i$ in the following way.
Let $X, Y$ be the two parts of $\P_{i+1}$ being merged to $Z=X \cup Y \in \P_i$.
Let $C$ be the connected component of $Z$ in $\R(\P_i)$.
Let $C_1, \ldots, C_h$ be the connected components of $C \setminus \{Z\} \cup \{X,Y\}$ in $\R(\P_{i+1})$.
We then set $P \prec_{i+1} Q$ whenever either $P \prec_i Q$ or $P \in C_a$ and $Q \in C_b$ for some pair $a < b \in [h]$. 
One can check that $\prec_{i+1}$ is a partial order satisfying the invariants.

As the connected components of $\R(\P_n)$ are singletons, $\prec_n$ is a total order.
We now use~$\prec$ as a~witness of low stretch-width for the same partition sequence $\P_n, \ldots, \P_1$.
Let $X$ be any part of any partition $\P_i$, and let $C$ be the connected component of $\R(\P_i)$ containing $X$.
By the second invariant, no part of $C$ can cross a~part of $\P_i \setminus C$.
Also, by definition, $X$ may only be inhomogeneous to parts of $C$.
Therefore, the only parts that can \interf with $X$ are in~$C$; thus $\str{X} \leqslant |C|-1 \leqslant t-1$.
Finally, $\stw(G) \leqslant \stw(G,\prec) \leqslant t-1=\ctww(G)-1$. 
\end{proof}

\subsection{Separating construction}

We present a graph family with bounded stretch-width but unbounded clique-width.
Let $b \geqslant 2$ and $h \geqslant 1$ be two integers.
We build a~graph $\at{b}{h}$ on vertex set $[b^h]$.

Informally, $\at{2}{h}$ is built by first adding the paths $1,3,5,\ldots$ on ``odd'' vertices, and $2,4,6,\ldots$ on ``even'' vertices.
Then identifying the pairs $2i-1, 2i$, renaming them $i$, and adding the ``odd'' and ``even'' paths (that is, the edge between the first and the third vertices at this step corresponds to the complete adjacency between $\{1,2\}$ and $\{5,6\}$, in terms of original vertices).
And iterating this process until it runs out of vertices; see~\cref{fig:A2(5)}.

\begin{figure}[h!]
  \centering
  \resizebox{390pt}{!}{
\begin{tikzpicture}[vertex/.style={draw,circle,inner sep=0.02cm}]
  \def\z{0.5}
  \def\b{65}
  \def\t{5}
  \pgfmathtruncatemacro\s{2^\t}
  \pgfmathtruncatemacro\sm{\s-1}
  \pgfmathtruncatemacro\smm{\s-2}

  \foreach \i in {1,...,\s}{
    \node[vertex] (\i) at (\i * \z,0) {} ;
  }

  \foreach \i [count=\hh from 1] in {1,3,...,\smm}{
    \pgfmathtruncatemacro\h{2*\hh+1}
    \pgfmathtruncatemacro\ip{\i+1}
    \pgfmathtruncatemacro\hp{\h+1}
    \draw[very thick,blue] (\i) to [bend left=\b] (\h) ;
    \draw[very thick,blue] (\ip) to [bend left=-\b] (\hp) ;
    \node[draw,rounded corners,thick,blue!75!red,inner sep=0.1cm,fit=(\i) (\ip)] (a\hh) {} ;
  }
  \node[draw,rounded corners,thick,blue!75!red,inner sep=0.1cm,fit=(31) (32)] (a16) {} ;

  \foreach \i [count=\hh from 1] in {1,3,...,14}{
    \pgfmathtruncatemacro\h{2*\hh+1}
    \pgfmathtruncatemacro\ip{\i+1}
    \pgfmathtruncatemacro\hp{\h+1}
    \draw[very thick,blue!75!red] (a\i) to [bend left=\b] (a\h) ;
    \draw[very thick,blue!75!red] (a\ip) to [bend left=-\b] (a\hp) ;
    \node[draw,rounded corners,thick, blue!50!red,inner sep=0.03cm,fit=(a\i) (a\ip)] (a-2-\hh) {} ;
  }
  \node[draw,rounded corners,thick,blue!50!red,inner sep=0.03cm,fit=(a15) (a16)] (a-2-8) {} ;

  \foreach \i [count=\hh from 1] in {1,3,...,6}{
    \pgfmathtruncatemacro\h{2*\hh+1}
    \pgfmathtruncatemacro\ip{\i+1}
    \pgfmathtruncatemacro\hp{\h+1}
    \draw[very thick,blue!50!red] (a-2-\i) to [bend left=\b] (a-2-\h) ;
    \draw[very thick,blue!50!red] (a-2-\ip) to [bend left=-\b] (a-2-\hp) ;
    \node[draw,rounded corners,thick,blue!25!red,inner sep=0.03cm,fit=(a-2-\i) (a-2-\ip)] (a-3-\hh) {} ;
  }
  \node[draw,rounded corners,thick, blue!25!red,inner sep=0.03cm,fit=(a-2-7) (a-2-8)] (a-3-4) {} ;

  \draw[very thick,blue!25!red] (a-3-1) to [bend left=\b] (a-3-3) ;
  \draw[very thick,blue!25!red] (a-3-2) to [bend left=-\b] (a-3-4) ;
  
\end{tikzpicture}
}
\caption{The graph $\at{2}{5}$, the edge colors correspond to a~step in the recursive construction.}
\label{fig:A2(5)}
\end{figure}
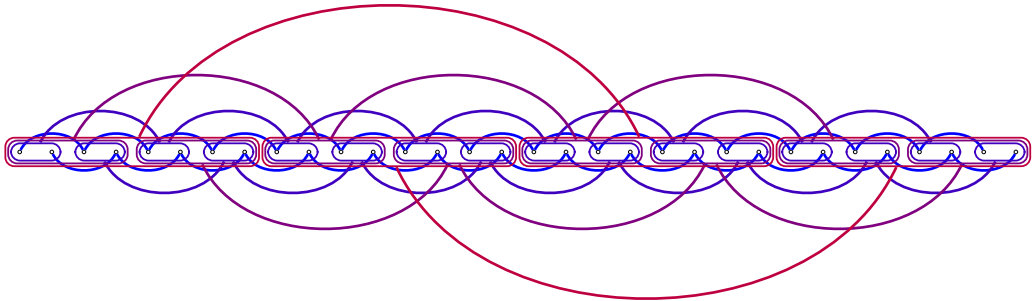

We give a~general definition, but will only use $\at{3}{h}$.
Let $\tree{b}{h}$ the complete $b$-ary tree of depth $h$.
Think of the vertices of $\at{b}{h}$ as the leaves of $\tree{b}{h}$, named from left to right $1, 2, \ldots, b^h$.
The level of a node of $\tree{b}{h}$ is $h$ minus the depth of the node. For example the level of a leaf is $0$.
We call $\Root{l}{i}$ for $i$-th internal node (from left to right) on the $l$-th level.
At every level $l$ of $\tree{b}{h}$, we add $b$ node-disjoint paths $P_1, \dots, P_b$ such that for each $i$ of $[0, b-1]$, $P_i = \Root{l}{i}, \Root{l}{i+b}, \Root{l}{i + 2b}, \ldots$; see \Cref{fig:A3(4)}.
When two nodes at level $l$ are linked by an edge of such a path, we say that they are \emph{$l$-linked}.

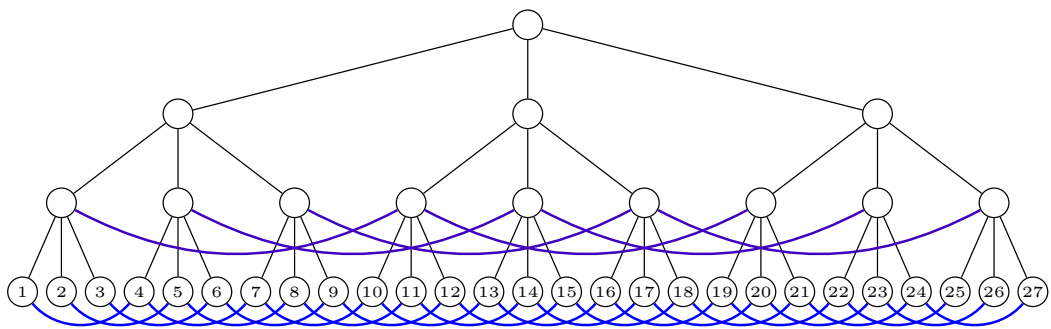
\begin{figure}
  \resizebox{400pt}{!}{
  \begin{tikzpicture}
    \def\b{3} 
    \def\s{1}     
    \def\sh{1.3}  
    \def\z{4}     
      \pgfmathtruncatemacro\zm{\z-1}
      \pgfmathtruncatemacro\zmm{\zm-1} 

    \foreach \i in {1,...,\z}{
      \pgfmathtruncatemacro\k{\b^\i/\b} 
      \foreach \j in {1,...,\k}{
        \pgfmathsetmacro\h{\b^\z/(\b * \b^\i)}  
        \node[draw,circle] (a\i\j) at (\j * \h * \sh - \h * \sh / 2, - \i * \s) {} ;
      }
    }
    \pgfmathtruncatemacro\k{\b^\z/\b}
    \foreach \j in {1,...,\k}{
        \pgfmathsetmacro\h{1/\b}  
        \node at (\j * \h * \sh - \h * \sh / 2, - \z * \s) {\tiny{$\j$}} ;
    }

    \foreach \i [count = \ip from 2] in {1,...,\zm}{
      \pgfmathtruncatemacro\k{\b^\i/\b}
      \foreach \j in {1,...,\k}{
        \pgfmathtruncatemacro\jmm{\b * \j - 2}
        \pgfmathtruncatemacro\jm{\b * \j - 1}
        \pgfmathtruncatemacro\jp{\b * \j}
        \draw (a\i\j) -- (a\ip\jmm) ;
        \draw (a\i\j) -- (a\ip\jm) ;
        \draw (a\i\j) -- (a\ip\jp) ;
        }
    }

    \foreach \i in {3,...,\z}{
      \pgfmathtruncatemacro\f{100 * \i/\z}
      \pgfmathtruncatemacro\be{-50 * \i/\z * \i/\z}
      \def\c{blue!\f!red}
     \pgfmathtruncatemacro\km{\b^\i/\b - 3} 
     \foreach \j [count = \jp from 4] in {1,...,\km}{
       \draw[thick,\c] (a\i\j) to [bend left=\be] (a\i\jp) ;
     }
     }
  \end{tikzpicture}
  }
  \caption{Tree representation of $\at{3}{4}$. The vertex set of $\at{3}{4}$ is made by the leaves, and a~colored edge $uv$ (one not part of the tree) is a~biclique between the two disjoint sets of leaves of the two subtrees rooted at $u$ and $v$.}
  \label{fig:A3(4)}
\end{figure}

From the tree $\tree{b}{h}$, we build $\at{b}{h}$ as follows: the vertices of $\at{b}{h}$ are the leaves of $\tree{b}{h}$ and two vertices $u$ and $v$ are adjacent in $\at{b}{h}$ if and only if there exists a level $l$, an ancestor $\Root{l}{i}$ of $u$ and an ancestor $\Root{l}{j}$ of $v$ such that $\Root{l}{i}$ and $\Root{l}{j}$ are $l$-linked, that is, $|j - i| = b$.

For a given node $\Root{l}{i}$ of $\tree{b}{h}$, we denote by $V(\Root{l}{i})$ the set of leaves (hence, vertices of $\at{b}{h}$) that are descendant of $\Root{l}{i}$.
Observe that $\at{b}{h}[V(\Root{l}{i})]$ is isomorphic to $\at{b}{l}$.
If $S$ is a set of vertices of $\at{b}{h}$ then the \emph{level of $S$} is defined as the smallest level $l$ for which there exists at most two consecutive nodes $\Root{l}{i}, \Root{l}{i+1}$ on the $l$-th level of $\tree{b}{h}$ such that every element of $S$ is a~descendant of $\Root{l}{i}$ or $\Root{l}{i+1}$.

We can directly establish the following bound on the stretch-width of $\at{3}{h}$:
\begin{lemma}\label{stw-at}
For every integer $h$, the stretch-width of $\atG{3}{h}$ is at most 9.
\end{lemma}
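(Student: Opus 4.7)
My plan is to exhibit an explicit linear order and partition sequence on $\at{3}{h}$ and bound the stretch of every partition in the sequence. Take $\prec$ to be the natural left-to-right order on the leaves of $\tree{3}{h}$, that is, $1 \prec 2 \prec \ldots \prec 3^h$, and build the sequence bottom-up along the tree: for $l$ from $0$ to $h-1$, and for each internal node $\Root{l+1}{k}$ of $\tree{3}{h}$ taken in left-to-right order, merge the three level-$l$ parts $V(\Root{l}{3k-2})$, $V(\Root{l}{3k-1})$, $V(\Root{l}{3k})$ into $V(\Root{l+1}{k})$ via two successive contractions. Every intermediate part is an interval of $\prec$, and at any moment the partition consists of three contiguous blocks along $\prec$: completed level-$(l+1)$ parts on the left, a partial merge of at most two of the three children of the currently processed node $\Root{l+1}{k}$ in the middle, and untouched level-$l$ parts on the right.

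The heart of the analysis is a homogeneity fact: two level-$l$ subtree-parts $V(\Root{l}{i})$ and $V(\Root{l}{j})$ are inhomogeneous in $\at{3}{h}$ only when $|i-j|=1$. Indeed, if $|i-j|=3$ they are directly $l$-linked, so completely bipartite; if $|i-j|=1$, then at every lower level $l^* < l$ the positional pairs of the form $(3^{l-l^*}i-2,3^{l-l^*}i+1)$, $(3^{l-l^*}i-1,3^{l-l^*}i+2)$, $(3^{l-l^*}i,3^{l-l^*}i+3)$ are $l^*$-linked and induce a genuinely partial bipartite relation; and if $|i-j|\notin\{1,3\}$, the positional gap is too wide for any $l^*$-link at $l^* < l$ to straddle both subtrees, while any $l^*$-link between ancestors at $l^* \geq l$ gives a complete bipartite (hence homogeneous) adjacency. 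The same fact extends naturally to cross-level comparisons and to the partial-merge part, each handled as the union of its level-$l$ sub-pieces.

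Applying this, at any intermediate partition $\P$ the red neighbors of a part $X$ are confined to a short contiguous block of parts around $X$ along $\prec$: on the left the at most one level-$(l+1)$ part immediately preceding $X$ (plus possibly its partial-merge companion), and on the right the at most three level-$l$ parts at relative index $+1,+2,+3$ corresponding to the $l$-linked targets of the sub-pieces of $X$. Since every part of $\P$ is an interval of $\prec$, the parts $Y$ crossing $\conv{N_{\R(\P)}[X]}$ are exactly those overlapping this block, so $\str{X}$ is bounded by its size minus one. A short case analysis on the three kinds of parts (completed subtree, partial merge, or still-level-$l$) yields the uniform bound~$9$. The main obstacle is the partial-merge case: after absorbing two out of three level-$l$ siblings, the partial part inherits a complete bipartite adjacency to one right-neighbor via its first sub-piece, another complete bipartite adjacency to a second right-neighbor via its second sub-piece, plus the consecutive-inhomogeneous red edge to a third, and the resulting span of red-neighborhood parts must be counted carefully to confirm it stays within the claimed bound.
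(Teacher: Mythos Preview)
Your setup is exactly the paper's: the natural leaf order $\prec$ and the level-by-level coarsenings $\P_l = \{V(\Root{l}{i}) : i\}$, together with the key homogeneity fact that two level-$l$ parts $V(\Root{l}{i}), V(\Root{l}{j})$ are inhomogeneous precisely when $|i-j|=1$. Where you diverge is in the analysis of the intermediate partitions between $\P_l$ and $\P_{l+1}$, and there your justification is off. You claim that the rightward red neighbours of a completed or partial level-$(l{+}1)$ part $X$ sit at relative level-$l$ indices $+1,+2,+3$ because these are ``the $l$-linked targets of the sub-pieces of $X$''. But $l$-linked pairs are \emph{fully} adjacent, hence homogeneous to the corresponding sub-piece; that cannot be the source of a red edge. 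The inhomogeneity between $X$ and, say, $V(\Root{l}{3k+2})$ arises because one sub-piece of $X$ is $l$-linked (fully adjacent) to it while another sub-piece is fully non-adjacent to it, a mechanism you never articulate. So the conclusion that the red neighbourhood is narrow is true, but your stated reason does not establish it, and the promised ``short case analysis'' is not actually carried out.

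The paper sidesteps all of this with a one-line containment argument you should use instead. Every part of an intermediate partition between $\P_l$ and $\P_{l+1}$ is a union of at most three consecutive level-$l$ parts, all contained in one level-$(l{+}1)$ block. If two intermediate parts $P,Q$ are inhomogeneous, then some level-$l$ sub-part of $P$ is inhomogeneous to some level-$l$ sub-part of $Q$, so those sub-parts are at level-$l$ distance~$1$; hence $P$ and $Q$ lie in level-$(l{+}1)$ blocks at distance at most~$1$. Therefore $\conv{N_{\R(\P)}[X]}$ is contained in three consecutive level-$(l{+}1)$ blocks, which hold at most $3\cdot 3 = 9$ intermediate parts. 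No left-to-right processing order, no partial-merge cases, no tracking of $l$-links: the bound drops out immediately.
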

\begin{proof}
Let $h$ be a positive integer.
Let $\prec$ be the left-right order on the leaves of $\tree{b}{h}$, and for each $l$ in $[h]$, let $\mathcal{P}_l$ be the partition $\{V(\Root{l}{1}), V(\Root{l}{2}), \ldots \}$.

Observe that $V(\Root{l}{i})$ is not homogeneous to $V(\Root{l}{j})$ if and only if $|j-i| = 1$. As for any $i \neq j$ $\conv{V(\Root{l}{i}} \cap \conv{V(\Root{l}{j}}$ is empty, the stretch of $\mathcal{P}_l$ along $\prec$ is at~most $2$ for every $l$.
In addition, observe that one can go from $\mathcal{P}_l$ to $\mathcal{P}_{l+1}$ by merging only consecutive parts. As each part of $\mathcal{P}_{l+1}$ is composed of exactly 3 consecutive parts of $\mathcal{P}_l$, the stretch of the partitions between $\mathcal{P}_l$ and $\mathcal{P}_{l+1}$ is at most $3 \cdot 3 = 9$.
\end{proof}

\begin{lemma}\label{bound on degree}
If $v$ is a vertex of $\at{3}{h}$ then the degree $d$ of $v$ verifies
$\frac{3^{h-1} - 1}{2} \leqslant d \leqslant 3^{h-1} - 1.$
\end{lemma}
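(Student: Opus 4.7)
The plan is to count, for a fixed leaf $v$ at position $p \in [1, 3^h]$, exactly how many neighbors $v$ picks up from the $l$-links at each level $l$, and then sum these contributions. Let $i_l := \lceil p / 3^l \rceil$, so that $\Root{l}{i_l}$ is the ancestor of $v$ at level $l$. By definition of the construction, the neighbors of $v$ produced by edges at level $l$ are exactly the leaves of $V(\Root{l}{i_l - 3}) \cup V(\Root{l}{i_l + 3})$, each summand being either empty (if the shifted index falls outside $[1, 3^{h-l}]$) or of size $3^l$.

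First I would note that the only levels that can contribute edges are $l \in [0, h-2]$: at level $l$ there are $3^{h-l}$ nodes, and a shift of $3$ requires $3^{h-l} \geq 4$, i.e.\ $l \leq h-2$; at levels $l \in \{h-1, h\}$ the paths $P_1, P_2, P_3$ are too short to contain an edge. The key step is then to argue that the contributions at different levels are pairwise disjoint, so that $\deg(v)$ is exactly $\sum_{l=0}^{h-2} \bigl(|V(\Root{l}{i_l - 3})| + |V(\Root{l}{i_l + 3})|\bigr)$ with invalid terms dropped. For this, I would check that a shift of exactly $3$ at level $l$ becomes a shift of exactly $1$ at level $l+1$ (via $\lceil \cdot / 3 \rceil$), and a shift in $\{7,8,9,10,11\}$ at level $l-1$ (since each node has three children with consecutive indices); iterating these two directions, no level $l' \neq l$ can witness a shift equal to $3$. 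Thus every neighbor of $v$ is accounted for at a unique level.

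The two bounds then follow by a geometric sum. The upper bound $\deg(v) \leq \sum_{l=0}^{h-2} 2 \cdot 3^l = 3^{h-1} - 1$ is immediate. For the lower bound, I would observe that whenever $l \leq h-2$ we have $3^{h-l} \geq 9$, and consequently at least one of $i_l - 3 \geq 1$ and $i_l + 3 \leq 3^{h-l}$ holds (indeed $i_l \leq 3$ forces $i_l + 3 \leq 6 \leq 3^{h-l}$, while $i_l \geq 4$ forces $i_l - 3 \geq 1$). So at least $3^l$ neighbors are contributed at each such level, giving $\deg(v) \geq \sum_{l=0}^{h-2} 3^l = (3^{h-1} - 1)/2$. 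The main obstacle, modest as it is, is the disjointness argument; beyond that, everything reduces to a clean sum of powers of~$3$.
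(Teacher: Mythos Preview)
Your proposal is correct and follows essentially the same approach as the paper: decompose the neighbor count by level, observe that only levels $l \in [0,h-2]$ contribute, argue uniqueness of the witnessing level, and sum the per-level contributions of $3^l$ or $2\cdot 3^l$. The paper simply asserts ``observe that such a level is unique'' and that each relevant ancestor is incident to one or two $l$-linked nodes, whereas you spell out both facts explicitly via the shift-tracking argument and the inequality $3^{h-l}\geq 9$; these are exactly the details the paper elides.
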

\begin{proof}
Let $v$ a vertex of $\at{3}{h}$.
For a vertex $w$ to be adjacent to $v$, it is necessary that there exists a level $l$ and two $l$-linked nodes $\Root{l}{i}$, $\Root{l}{j}$ that are ancestors of $v$ and $w$, respectively.
Observe that such a level is unique.
Hence, the ancestor of $v$ on the $l$-th level for $l < h-1$ yields either $3^l$ or $2 \cdot 3^l$ neighbors of $v$, depending on whether it is incident to one or two $l$-linked nodes in $\at{b}{h}$.
In addition, there is no $l$-linked nodes for $l \geqslant h-1$, thus ancestors at level $h$ and $h-1$ do not contribute to any edge of~$\at{b}{h}$.

As $\sum_{i \leq h-2} 3^l = \frac{3^{h-1} - 1}{2}$, the degree of $v$ is lowerbounded by $\frac{3^{h-1} - 1}{2}$ and upperbounded by $3^{h-1} - 1$.
\end{proof}

\begin{lemma}\label{degree-type}
For every level $l$, for every two integers $i \neq j$, the number of neighbors of a~vertex $v \in V(\Root{l}{i})$ that are in $V(\Root{l}{j})$ is:
\begin{compactenum}
	\item in the interval $[3^{l-1}, \frac{3^{l} - 1}{2}]$ when $|j - i| = 1$, and
	\item equal to $3^l$ or $0$ when $|j - i| > 1$.
\end{compactenum}
\end{lemma}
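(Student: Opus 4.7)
The plan is to count, for fixed $v \in V(\Root{l}{i})$, its neighbors in $V(\Root{l}{j})$ by decomposing the sum over the \emph{witness level} of adjacency. From the proof of~\cref{bound on degree}, if $v$ and $w$ are adjacent then there is a unique level $l'$ at which the ancestors $a_{l'}$ of $v$ and $b_{l'}$ of $w$ satisfy $|a_{l'} - b_{l'}| = 3$. Consequently, the sought count equals $\sum_{l'} |\{ w \in V(\Root{l}{j}) : b_{l'} = a_{l'} \pm 3 \}|$. Since $\Root{l'}{a_{l'} \pm 3}$ has $3^{l'}$ leaf descendants, each term is either $0$ or $3^{l'}$: at most one of $a_{l'} - 3, a_{l'} + 3$ can land inside the ancestor range of $V(\Root{l}{j})$ at level $l'$.

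For the case $|j - i| > 1$, I would first rule out witness levels $l' < l$: the ancestor ranges $[(i-1) \cdot 3^{l-l'} + 1, i \cdot 3^{l-l'}]$ and $[(j-1) \cdot 3^{l-l'} + 1, j \cdot 3^{l-l'}]$ are separated by at least $(|j-i|-1) \cdot 3^{l-l'} + 1 \geq 4 > 3$. At levels $l' \geq l$, all of $V(\Root{l}{i})$ shares a common ancestor and so does $V(\Root{l}{j})$, hence adjacency is either witnessed for every pair $(v,w) \in V(\Root{l}{i}) \times V(\Root{l}{j})$, giving count $3^l = |V(\Root{l}{j})|$, or for no pair, giving count $0$.

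For the case $|j - i| = 1$, no witness level lies at or above $l$, since the ancestors of $\Root{l}{i}$ and $\Root{l}{j}$ there differ by $0$ or $1$. Assume without loss of generality $j = i+1$. At level $l' < l$, the constraint $b_{l'} - a_{l'} = 3$ forces $a_{l'}$ into the three rightmost positions $[i \cdot 3^{l-l'} - 2, i \cdot 3^{l-l'}]$ of $V(\Root{l}{i})$'s ancestor range (and then $a_{l'} + 3$ automatically stays inside $V(\Root{l}{j})$'s range, by a simple interval check relying only on $l - l' \geq 1$). At $l' = l - 1$ this range has exactly those three positions, so every $v$ contributes $3^{l-1}$ irrespective of its location; at lower levels $l' < l - 1$ the contribution is $3^{l'}$ or $0$, according to whether $v$ sits in the three rightmost subtrees of $V(\Root{l}{i})$ at level $l'$. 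Summing, the total count lies in $[3^{l-1}, \sum_{l' = 0}^{l-1} 3^{l'}] = [3^{l-1}, \frac{3^{l} - 1}{2}]$, with the minimum attained by the leftmost leaf of $V(\Root{l}{i})$ (which is ``close to the boundary'' only at level $l-1$) and the maximum attained by the rightmost leaf (whose ancestor is the rightmost node of the range at every level).

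The main bookkeeping subtlety is the $|j-i|=1$ case, namely keeping straight that the linked partner $\Root{l'}{a_{l'}+3}$ actually lies inside $V(\Root{l}{j})$ at every level $l' < l$ where $a_{l'}$ is in the last three positions, and pinning down exactly which leaves $v$ achieve the two extremes. Both reduce to elementary arithmetic on the contiguous index ranges of the subtrees.
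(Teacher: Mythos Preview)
Your proof is correct and follows essentially the same approach as the paper: both arguments decompose the neighbor count by the unique witness level of adjacency, then show that for $|j-i|>1$ no witness level lies below $l$ (giving the all-or-nothing dichotomy from levels $\geq l$), while for $|j-i|=1$ no witness level lies at or above $l$, level $l-1$ always contributes $3^{l-1}$, and each lower level contributes at most $3^{l'}$. Your version is more explicit---you write out the index ranges $[(i-1)\cdot 3^{l-l'}+1,\, i\cdot 3^{l-l'}]$ and verify the separation and inclusion conditions arithmetically, and you also identify the leaves achieving the two extremes---but the structure and key observations are identical to the paper's proof.
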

\begin{proof}
Let $v$ a vertex of $\at{3}{h}$ such that $v$ is in $V(\Root{i}{l})$ for a level $l$.
Let $j$ be such that $|j - i| = 1$.
Then if $\Root{l-1}{i'}$ is the ancestor of $v$ at the $(l-1)$-st level, there exists a child $\Root{l-1}{j'}$ of $V(\Root{l}{j})$ which is $(l-1)$-linked to $a_v$.
Thus the number of neighbors of $v$ that are in $V(\Root{l}{j})$ is at least $3^{l-1}$.
Observe that every ancestor of $v$ at level $k < l$ is $k$-linked to at most one descendant of $\Root{l}{j}$.
In addition, for every integer $k \geqslant l$, the ancestor at level $k$ of $\Root{l}{i}$ and the one at level $k$ of $\Root{l}{j}$ are not $k$-linked.
Hence the maximum number of neighbors of $v$ that are in $V(\Root{l}{j})$ is upperbounded by $\sum_{k < l} 3^k = \frac{3^{l} - 1}{2}$.

Assume now that $|j - i|$ is at least $2$.
Then, none of the ancestors of $v$ at level at most $k < l$ is $k$-linked to a descendant of $\Root{l}{j}$.
Thus, either an ancestor of $v$ at level $k \geqslant l$ is $k$-linked to a (non-strict) ancestor of $\Root{j}{l}$, in which case $v$ is fully adjacent to $V(\Root{l}{j})$, or none of the ancestors of $v$ is linked to an ancestor of $\Root{l}{j}$, in which case $v$ has no neighbors in $V(\Root{l}{j})$.
\end{proof}

Remember that the level of a subset~$S$ of the vertices of~$\at{3}{h}$ is the smallest level~$l$ for which there is at most two consecutive nodes $\Root{l}{i}, \Root{l}{i+1}$ at level $l$ such that every vertex of $l$ is either a descendant of $\Root{l}{i}$ or of $\Root{l}{i+1}$.

We will now prove that for any partition sequence $\mathcal{P}_1, \mathcal{P}_2, \ldots$ of $\at{3}{h}$, the size of a~largest red component among every partition is lowerbounded by a function of $h$.
To do so, we show that as long as there is no large part in the sequence, any part of the sequence has a small level, and thus is ``localized'' in the graph.
We then prove that, given a large part $S$ in one of the partitions, this part needs to have a red neighbor the size of which is at least linear in the size of $S$.
We finally build a large red component based on these two facts.
Informally, we take the largest part of a suitable partition.
This part is localized, say on the left side, and we use the second fact to build a smaller red neighbor to the right of this part, and continue this process until we reach a~part of constant size.

\begin{lemma}\label{localisation}
  Let $\mathcal{P}_1, \mathcal{P}_2, \ldots$ be a~partition sequence of $\at{3}{h}$ with maximum red degree at most $d$.
  Let, for any integer $t$, $s(t)$ be the smallest integer such that one of the parts of $\mathcal{P}_{s(t)}$ has size at least $t$.
  Then for every positive integer~$t$, for every part $Q \in \mathcal{P}_{s(t)}$, the level of $Q$ is at most ${\log_3(t) + \log_3(d) + 5}$.
\end{lemma}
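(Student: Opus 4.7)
Plan: I argue by contraposition. Assume that $Q$ has level $l > \log_3(t) + \log_3(d) + 5$ and show that $Q$ has more than $d$ red neighbors in $\R(\mathcal{P}_{s(t)})$, contradicting the hypothesis on red degrees.

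First I would control part sizes. By the minimality of $s(t)$, every part of $\mathcal{P}_{s(t)-1}$ has fewer than $t$ vertices, and the single merge taking $\mathcal{P}_{s(t)-1}$ to $\mathcal{P}_{s(t)}$ creates a~part of size strictly less than $2t$, so every part of $\mathcal{P}_{s(t)}$ (including $Q$) has size less than $2t$. Since $l > \log_3(2t)$, this gives $|Q| < 3^{l-1}$, meaning $Q$ is a~strict subset of every level-$(l-1)$ subtree it meets.

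Next I would translate the level assumption into a concrete spread in $\tree{3}{h}$. By definition of level, $Q$ lies in the union of the (at most) six consecutive level-$(l-1)$ subtrees arising as children of two consecutive level-$l$ nodes, yet is not contained in any two consecutive level-$(l-1)$ subtrees. Hence either $Q$ meets at least three of these six subtrees, or exactly two of them with index gap at least two; in both scenarios I select indices $a < c$ with $c - a \geq 2$ and pick vertices $v_a \in Q \cap V(\Root{l-1}{a})$ and $v_c \in Q \cap V(\Root{l-1}{c})$.

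The core step is producing many parts inhomogeneous with $Q$. For any level-$(l-1)$ subtree $V(\Root{l-1}{\star})$ with $|\star - a|, |\star - c| \geq 2$, case~2 of~\cref{degree-type} forces each of $v_a, v_c$ to be either fully adjacent to $V(\Root{l-1}{\star})$ or fully non-adjacent; which alternative holds is determined by the linking pattern among the ancestors of $\Root{l-1}{a}$, $\Root{l-1}{c}$, $\Root{l-1}{\star}$ at levels $\geq l-1$. A short check on these ancestors lets me pick $\star$ (inside the six-span when $c - a \geq 4$, slightly outside otherwise) such that $v_a$ and $v_c$ disagree on $V(\Root{l-1}{\star})$. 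Every part $P \subseteq V(\Root{l-1}{\star})$ is then inhomogeneous with $Q$ because $v_a, v_c$ impose opposite adjacencies to $P$, so $P$ is a~red neighbor of $Q$. Because $|V(\Root{l-1}{\star})| = 3^{l-1}$ and every part has size less than $2t$, at least $(3^{l-1} - 4t)/(2t)$ parts lie entirely inside $V(\Root{l-1}{\star})$, the $-4t$ accounting for the at most two parts straddling its boundaries. This count exceeds $d$ whenever $l > \log_3(t) + \log_3(d) + 5$, yielding the desired contradiction.

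The main obstacle is the third step: ensuring that a witness $\star$ on which $v_a$ and $v_c$ actually disagree exists. When $c - a \in \{2,3\}$, no index inside the six-span satisfies both $|\star - a|, |\star - c| \geq 2$, so I would consult a $\star$ outside the span and carefully trace the $(l-1)$- and $l$-linking paths emanating from the ancestors of $\Root{l-1}{a}$ and $\Root{l-1}{c}$. A~short finite case analysis exploiting the self-similarity of $\at{3}{h}$ and the periodicity of the paths $P_1, P_2, P_3$ at each level settles this and explains the additive slack $+5$ in the stated bound.
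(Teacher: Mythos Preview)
Your overall plan---bound every part of $\mathcal{P}_{s(t)}$ by $2t$, exhibit a large set of vertices on which two elements of $Q$ have different adjacencies, and conclude that $Q$ has too many red neighbours---is exactly the paper's strategy. The difference lies in how the ``witness subtree'' is produced, and here your proposal has a real gap.

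You look for a subtree $V(\Root{l-1}{\star})$ with $|\star-a|,|\star-c|\ge 2$ so that both $v_a$ and $v_c$ fall under case~2 of \cref{degree-type} (each is fully adjacent or fully non-adjacent), and then hope they disagree. You yourself flag this as ``the main obstacle'' and defer it to a ``short finite case analysis''. That analysis is not short: whether $v_a$ and $v_c$ agree on $V(\Root{l-1}{\star})$ depends on the linking pattern of the ancestors of $\Root{l-1}{a}$, $\Root{l-1}{c}$, $\Root{l-1}{\star}$ at \emph{every} level $\ge l-1$, and when $a,c$ have different level-$l$ parents you have to track several levels up; you also need $\star$ to stay in range and to avoid $Q$ itself. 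None of this is carried out.

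The paper sidesteps the whole issue with one clean idea: instead of comparing two vertices that are both \emph{outside} the target subtree, compare one outside vertex with one \emph{inside} it. Concretely, the paper drops two levels (not one) to get $u\in V(\Root{l-2}{i'})$ and $v\in V(\Root{l-2}{j'})$ with $j'-i'\ge 4$, and then looks at neighbourhoods \emph{within} $V(\Root{l-2}{j'})$. By case~2 of \cref{degree-type}, $u$ has either $0$ or $3^{l-2}$ neighbours there; by \cref{bound on degree} applied to $\at{3}{l-2}$, $v$ has between $\tfrac{3^{l-3}-1}{2}$ and $3^{l-3}-1$ neighbours there---strictly between $0$ and $3^{l-2}$. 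Hence the symmetric difference is at least $\tfrac{3^{l-3}-1}{2}$ \emph{unconditionally}, with no case split. Counting parts covering this symmetric difference (each of size $<2t$, at most $d+1$ of them) gives $l\le\log_3 t+\log_3 d+5$ directly. Replacing your third step with this ``one vertex inside, one outside'' comparison closes the gap and shortens the argument.
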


\begin{proof}
Each part of $\mathcal{P}_{s(t)}$ contains at most $2t$ vertices.
Let $Q$ be~a part of $\mathcal{P}_{s(t)}$ and $l$ be its level.
As $l-1$ is strictly smaller than the level of $Q$, there are two integers $i, j$ with $i+1 < j$ such that $Q \cap V(\Root{l-1}{i})$ and $Q \cap V(\Root{l-1}{j})$ are both non-empty.
Going one level further down, the intermediate set $V(\Root{l-1}{i+1})$ splits into three parts.
Thus, there are two integers $i'$, $j'$ such that $i' + 3 < j'$ and $Q$ has a vertex $u$ in $V(\Root{l-2}{i'})$, and a vertex $v$ in $V(\Root{l-2}{j'})$.

By \Cref{degree-type}, as $|j' - i'| > 1$, $u$ has either $3^{l-2}$ or $0$ neighbors in $V(\Root{l-2}{j'})$, while by \Cref{bound on degree}, $v$ has between $\frac{3^{l-3}-1}{2}$ and $3^{l-3} - 1$ neighbors in $V(\Root{l-2}{j'})$.
Thus the symmetric difference of the neighborhoods of $u$ and $v$, say $A$, contains at least $\frac{3^{l-3}-1}{2}$ vertices.
The maximum red degree of $Q$ is at~most $d$, and for each part $T \neq Q$ of $\mathcal{P}_{s(t)}$ containing a vertex of $A$, $Q$ has a red edge toward $T$.
Thus $A$ is contained in at most $d+1$ parts of $\mathcal{P}_{s(t)}$.
Hence $|A| \leqslant 2t\cdot (d+1)$.
Since $|A| \geqslant \frac{3^{l-3}-1}{2}$, we get $\frac{3^{l-3}-1}{2} \leqslant 2t \cdot (d+1)$, thus $l-3 \leqslant \log_3(4t \cdot (d+1))$.
\end{proof}

\begin{lemma}\label{right-large-red-neigh}
Let $\mathcal{P}$ any partition of $\at{3}{h}$ within a~partition sequence of maximum red degree at~most $d$.
Let $S$ a part in $\mathcal{P}$, at level $l$, such that there is at least $3^l$ vertices in $\at{3}{h}$ strictly to the right of $S$.
Then, there is a part $T$ of $\mathcal{P}$ in the red neighborhood of $S$ with $|T| \geqslant 3^{l-1}/d$, and such that $T$ has a vertex strictly to the right of $S$.
Furthermore, if $S$ is contained in $V(\Root{l}{i}) \cup V(\Root{l}{i+1})$ (resp. only $V(\Root{l}{i})$), then $T$ contains a vertex $v$ in $V(\Root{l}{i+2})$ (resp. $V(\Root{l}{i+1})$).
\end{lemma}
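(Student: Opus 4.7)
The plan is to exhibit two vertices $u, v \in S$ whose neighborhoods disagree on a sizable subset of a target region $Z$ strictly to the right of $S$, and then apply pigeonhole on the red-degree bound: every $w \in (N(u) \triangle N(v)) \cap Z$ lies in a part of $\mathcal{P}$ that is inhomogeneous with $S$ (since $u, v \in S$ witness this inhomogeneity via $w$), hence is a red neighbor of $S$; with at most $d$ red neighbors of $S$, some $T$ among them must contain at least $|(N(u) \triangle N(v)) \cap Z|/d$ of these vertices, giving $|T| \geqslant 3^{l-1}/d$ as soon as the symmetric difference has size at least $3^{l-1}$.

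I would split on the two sub-cases of the statement. Suppose first $S \subseteq V(\Root{l}{i})$ and set $Z := V(\Root{l}{i+1})$. The level-$l$ assumption forces $S$ to escape every pair of consecutive level-$(l-1)$ parts, and the only such pairs that could cover $S$ (given $S \subseteq V(\Root{l}{i})$) are $\{3i-2, 3i-1\}$ and $\{3i-1, 3i\}$, so $S$ must meet both the leftmost child $V(\Root{l-1}{3i-2})$ and the rightmost child $V(\Root{l-1}{3i})$ of $V(\Root{l}{i})$; pick $u$ and $v$ accordingly. In the other sub-case, $S$ meets both $V(\Root{l}{i})$ and $V(\Root{l}{i+1})$; set $Z := V(\Root{l}{i+2})$ and take any $u \in S \cap V(\Root{l}{i})$ and $v \in S \cap V(\Root{l}{i+1})$. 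In both cases, the hypothesis that at least $3^l$ vertices lie strictly to the right of $S$ guarantees that $Z$ exists as a full level-$l$ subtree of $\tree{3}{h}$.

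Next I would compute $|(N(u) \triangle N(v)) \cap Z|$ using \Cref{degree-type} and the link structure. In the first sub-case, applied at level $l-1$: the $(l-1)$-link $\Root{l-1}{3i-2} \leftrightarrow \Root{l-1}{3i+1}$ makes $u$ fully adjacent to $V(\Root{l-1}{3i+1})$, while the distances from $u$'s ancestors to $\Root{l-1}{3i+2}$ and $\Root{l-1}{3i+3}$ (at every level from $l-1$ upward) lie in $\{0,1,2,4,5\}$ and never equal $3$, so by \Cref{degree-type} $u$ has no other neighbor in $Z$. Symmetrically, the link $\Root{l-1}{3i} \leftrightarrow \Root{l-1}{3i+3}$ makes $v$ fully adjacent to $V(\Root{l-1}{3i+3})$, $v$ has no neighbor in $V(\Root{l-1}{3i+2})$, and at most $\frac{3^{l-1}-1}{2}$ neighbors in $V(\Root{l-1}{3i+1})$. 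Summing over the three children of $V(\Root{l}{i+1})$ yields $|(N(u) \triangle N(v)) \cap Z| \geqslant \frac{3^{l-1}+1}{2} + 3^{l-1} > 3^{l-1}$. In the second sub-case, applied at level $l$ directly: $\Root{l}{i}$ and $\Root{l}{i+2}$ are at distance $2$, and their higher-level ancestors differ by $0$ or $1$, never by $3$, so by \Cref{degree-type} $u$ has no neighbor in $Z$, while $v$ has at least $3^{l-1}$ neighbors in $Z$; hence $|(N(u) \triangle N(v)) \cap Z| \geqslant 3^{l-1}$.

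The main delicate point is the ``never $3$'' bookkeeping of higher-level ancestor distances, needed to rule out in \Cref{degree-type} the undesired alternative in which $u$ would be fully (rather than not at all) adjacent to a distant subtree of $Z$ through a higher-level link. Once these computations are in place, the pigeonhole step of the opening paragraph produces the required red neighbor $T$ of $S$, with $|T| \geqslant 3^{l-1}/d$ and $T$ meeting $V(\Root{l}{i+1})$ (resp.\ $V(\Root{l}{i+2})$) as required.
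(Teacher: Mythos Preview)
Your argument is correct and follows the same strategy as the paper: pick $u,v\in S$ whose neighbourhoods disagree on a region strictly to the right, then pigeonhole over the at most $d$ red neighbours of $S$. The paper carries this out in one stroke at level $l-1$, choosing $u\in V(\Root{l-1}{i})$ and $v\in V(\Root{l-1}{j})$ with $j>i+1$ and targeting the single block $V(\Root{l-1}{j+1})$; your case split (on whether $S$ lies in one or two level-$l$ blocks) and your choice of a full level-$l$ subtree as the target $Z$ make the arithmetic for the $3^{l-1}$ lower bound cleaner and address the ``Furthermore'' clause more explicitly, but the underlying idea is the same.
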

\begin{proof}
Let $l$ be the level of $S$.
There are two integers $i+1 < j$ such that $S \cap V(\Root{l-1}{i}$ contains a vertex $u$, and  $S \cap V(\Root{l-1}{j}$ contains a vertex $v$.
As there is at least $3^{l}$ vertices strictly to the left of $S$, $\Root{l-1}{j+1}$ is well defined.
By \cref{degree-type}, the number of neighbors of $u$ within $V(\Root{l-1}{j+1})$ is either $0$ or $3^{l-1}$, and the number of neighbors of $v$ within $V(\Root{l-1}{j+1})$ is between $3^{l-1}$ and $\frac{3^{l} - 1}{2}$.
Thus the symmetric difference of neighborhoods of $u$ and $v$ inside $V(\Root{l-1}{j+1})$, say $A$, contains at least $3^{l-1}$ vertices.
As each part of $\mathcal{P}$ containing a vertex of $A$ is inhomogeneous to $S$, there is a~part $T \in \mathcal{P}$ containing at least $3^{l-1}/d$ vertices of $A$.
\end{proof}

\begin{theorem}\label{large-ctww-at}
The component twin-width of $\at{3}{h}$ is $\Omega(h/\log h)$.
\end{theorem}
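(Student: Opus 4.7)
I will argue by contradiction. Suppose $\at{3}{h}$ admits a partition sequence in which every red graph has all connected components of size at most $d$. Then in particular the maximum red degree is at most $d$, so Lemmas~\ref{localisation} and~\ref{right-large-red-neigh} both apply with this value of $d$. The aim is to derive $d \log d = \Omega(h)$, which then yields $d = \Omega(h/\log h)$.

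The strategy is to construct, inside a single partition $\mathcal{P}_{s(t)}$ for a carefully chosen threshold $t$, a red-connected chain of pairwise distinct parts $S_0, S_1, \ldots, S_k$, so that the red-component containing $S_0$ has size at least $k+1$; the assumed bound $d$ on red-component sizes will then translate into an upper bound on $k$. Choose $t$ so that $L := \log_3(t) + \log_3(d) + 5$ is (strictly) less than $h$; by Lemma~\ref{localisation}, every part of $\mathcal{P}_{s(t)}$ then has level at most $L$. Take $S_0$ to be a part of $\mathcal{P}_{s(t)}$ positioned far to the left of the order (for instance, the part containing vertex $1$, which, having level at most $L$, fits inside the first two level-$L$ subtrees and therefore satisfies $\max S_0 \leq 2 \cdot 3^L$). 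Iteratively apply Lemma~\ref{right-large-red-neigh}: at step $i$, $S_i$ has some level $l_i \leq L$, and provided that $\max S_i \leq 3^h - 3^{l_i}$, the lemma yields $S_{i+1}$ in the red neighborhood of $S_i$, of size at least $3^{l_i - 1}/d$, containing a vertex in the level-$l_i$ subtree immediately to the right of $S_i$'s span. Since $\max S_{i+1} > \max S_i$, the parts $S_0, S_1, \ldots$ are pairwise distinct and all red-connected, so the red-component of $S_0$ has size at least $k+1$, where $k$ is the number of successful iterations.

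The main obstacle is to lower-bound $k$. Two ingredients are available. First, since any part $P$ of level $l'$ is contained in two consecutive level-$l'$ subtrees and thus satisfies $|P| \leq 2 \cdot 3^{l'}$, the size bound $|S_{i+1}| \geq 3^{l_i - 1}/d$ forces $l_{i+1} \geq l_i - 1 - \log_3(2d)$, so the levels $l_i$ can decrease by at most $1 + \log_3(2d)$ per step. Second, because $S_{i+1}$ lies in two consecutive level-$l_{i+1}$ subtrees adjacent to the level-$l_i$ subtree immediately to the right of $S_i$, its rightmost vertex exceeds $\max S_i$ by at most $O(3^{\max(l_i, l_{i+1})}) = O(3^L)$, so the chain cannot ``skip past'' the right end of the order all at once. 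By choosing $t$ so that $L$ is close to $h$ and starting from a part $S_0$ whose level $l_0$ is maximal among parts of $\mathcal{P}_{s(t)}$ (for instance, a part of size at least $t$, which by the universal bound $|P| \leq 2 \cdot 3^{l'}$ has $l_0 \geq \log_3(t/2)$), the levels remain positive for $\Omega(l_0 / \log d) = \Omega(h / \log d)$ iterations, during which the precondition of Lemma~\ref{right-large-red-neigh} continues to hold. Consequently $k + 1 = \Omega(h / \log d)$, and the bound $k + 1 \leq d$ rearranges to $d \log d = \Omega(h)$, i.e., $d = \Omega(h / \log h)$.
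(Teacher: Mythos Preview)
Your overall strategy is the paper's: start at a large part in $\mathcal P_{s(t)}$, iterate Lemma~\ref{right-large-red-neigh} to build a red path $S_0,S_1,\ldots$ of pairwise distinct parts (distinctness via $\max S_{i+1}>\max S_i$), and lower-bound its length. The size-ratio ingredient $l_{i+1}\ge l_i-1-\log_3(2d)$ is also exactly what the paper uses (its Item~2). The problem is the quantitative part of the termination analysis.

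You instruct to pick $t$ so that $L$ is ``close to~$h$'' and then assert that the precondition of Lemma~\ref{right-large-red-neigh} continues to hold for $\Omega(h/\log d)$ iterations. Your only tool for this is the step-size bound $\max S_{i+1}-\max S_i=O(3^{\max(l_i,l_{i+1})})=O(3^L)$. But with $L$ close to $h$ this bound is $\Theta(3^h)$, so a single step can reach the right boundary and the precondition can fail immediately; nothing you wrote rules this out. (The phrase ``levels remain positive'' is also a red herring: the chain never stops because a level hits~$0$; it stops only when there are fewer than $3^{l_i}$ vertices to the right.) A separate inconsistency: you first take $S_0$ to be the part containing vertex~$1$, which could be a singleton of level~$0$, and only later switch to a part of size~$\ge t$.

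The paper resolves this by choosing $t=3^{h/2}$, so $L\approx h/2$ and $l_0\gtrsim h/2$, and by tracking position not at the varying level $l_i$ but at the fixed level~$l=L$: since every part has level $\le l$, each $P_m$ lies in $V(\Root{l}{i_m})\cup V(\Root{l}{i_m+1})$ with $i_m\le i_{m+1}\le i_m+2$. The two termination modes are then separated: if the chain stops with $|P_n|\le 3d$, the size-ratio bound gives $n\ge \log_{3d}t=\Omega(h/\log d)$; if it stops because too few vertices remain on the right, then $i_n\gtrsim 3^{h-l}$ and the bounded increments of $i_m$ force $n\gtrsim 3^{h/2}$, which is far more than needed. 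Your sketch collapses these two cases and, with $L$ near $h$, the second case is not controlled at all.
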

\begin{proof}
Consider a~partition sequence $\mathcal{P}_1, \mathcal{P}_2, \ldots$ of $\at{3}{h}$ of maximum red degree at most $d$.
Let $t = 3^{h/2}$, and let $s$ be the smallest integer such that $\mathcal{P}_s$ contains a part of size at least $t$.
We will show that the red component of $\mathcal{P}_s$ containing its largest part is large.

Consider $P$ the only part of $\mathcal{P}_s$ of size at least $t$.
Let $l = h/2 + \log_3(d) + 5$.
By \Cref{localisation}, the level of $P$ is at most $l$.
Thus, up to symmetry, there is at least $3^h/3$ vertices strictly to the right of $P$.
We consider the sequence of parts $(P_k)_k$ such that $P_0 = P$, and $P_{k+1}$ is the part obtained by the application of \Cref{right-large-red-neigh} on $P_k$, as long as the conditions of \Cref{right-large-red-neigh} are satisfied.
We state the three following facts on the sequence $(P_k)_k$.
\begin{compactenum}
	\item $P_k$ is different from $P_i$ for every  $i < k$.
	\item $|P_k| \geqslant 3^{\text{level}(P_{k-1}) - 1}/d \geqslant |P_{k-1}|/(3d)$.
	\item There is a finite sequence of positive integers $(i_k)_k$ such that for any $m$, $P_m$ is contained in $V(\Root{l}{i_m}) \cup V(\Root{l}{i_m + 1})$, and $i_m \leqslant i_{m+1} \leqslant i_m + 2$.
\end{compactenum}
Indeed for every $k$, $P_{k+1}$ always has a vertex strictly to the right of $P_k$, thus Item 1 holds.
Item 2 directly follows from the bound on the size of the part of \Cref{right-large-red-neigh}.
The lower bound of Item 3 comes from the fact that $P_m$ has a vertex strictly to the right of $P_{m-1}$, and the upper bound from the fact that $P_m$ contains a vertex in $V(\Root{l}{i_{m-1} + 1})$. 
 
Consider the largest integer $n$ for which $P_n$ is defined.
Integer $n$ satisfies $3^{\text{level}(P_n) - 1} > d$, or there is less than $3^{\text{level}(P_n)}$ (thus, less than $3^l$) vertices strictly to the right of $P_n$.
As the level of $P_n$ is at least $\log_3 |P_n|$, if the former condition is satisfied, $|P_n| \leqslant 3d$.
But Item 2 ensures that $|P_n|$ is at least $P_0 / (3d)^n$, thus $n \geqslant \log_{3d} t$.
If the latter condition is satisfied, then $i_n$ is at least $(1/3^{h/2}) \cdot (3^h/3) - 1 = 3^{h/2 - 1} - 1$, and thus Item 3 ensures that $n$ is at least $\frac{3^{h/2 - 1} - 1}{2}$.

Hence, as Item 1 ensures that the red component containing $P$ is of size at least $n$, this component is of size at least $\log_{3d} t$.
Let $x$ be the component twin-width of $\at{3}{h}$.
Then in a sequence witnessing that fact, the size of a largest red component is at most $x$, thus the maximum red degree is at most $x$ (even $x-1$).
Therefore $\log_{3x} t \leqslant x$, and so $\log t \leqslant x \log (3x)$.
Thus $3x \geqslant \frac{h}{2}/\log{\frac{h}{2}}$, and $x=\Omega(h/\log h)$.
\end{proof}

\cref{thm:separation} is a~consequence of~\cref{stw-at,large-ctww-at}. 

\section{Matrix characterization}\label{sec:matrices}

Let us first reinterpret the definition of stretch-width on symmetric (ordered) matrices.
A~symmetric \emph{partition} of a~symmetric matrix $M$ is a~pair $(\R,\CC)$ such that $\R$ is a partition of the row set of $M$, $\rows(M)$, $\CC$ is a~partition of the column set, $\cols(M)$, and $\CC$ is symmetric to $\R$, i.e., two rows $r_i$ and $r_j$ are in the same part if and only if the symmetric columns $c_i$ and $c_j$ are in the same part.
Hence each row part corresponds to a (unique) symmetric column part.

A~(symmetric) \emph{division} of a~symmetric matrix $M$ is a (symmetric) partition of $M$ every row (resp.~column) part of which is on consecutive rows (resp.~columns).
Given a~row part $R \in \R$, and a~column part $C \in \CC$, the~\emph{zone $R \cap C$} of $M$ is the submatrix of $M$ with row set~$R$ and column set~$C$.
A~zone $R \cap C$ is \emph{diagonal} if $R$ and $C$ are symmetric parts.
A~zone is \emph{non-constant} if it contains two distinct entries.
A~symmetric \emph{partition sequence} of an $n \times n$ $0,1$-matrix $M$ is a sequence $(\R_n,\CC_n), \ldots, (\R_1,\CC_1)$ where $(\R_n,\CC_n)$ is the \emph{finest partition} (with $n$ row parts and $n$ column parts), $(\R_1,\CC_1)$ is the coarsest partition (with one row part and one column part), and for every $i \in [2,n]$, $(\R_{i-1},\CC_{i-1})$ is obtained from $(\R_i,\CC_i)$ by merging together two row parts, and the symmetric two column parts.

So far, we were following the definitions of~\cite{twin-width1,twin-width2} (in the symmetric case).
Instead of defining the \emph{error value} which leads to the twin-width of a~matrix, we introduce the \emph{stretch value}.
The \emph{stretch value} of a~row part $R$ of a~matrix partition $(\R,\CC)$ is the number of column parts \confing with the union of columns parts $C$ such that $C$ is the symmetric of $R$, or $R \cap C$ is non-constant.
The \emph{stretch value} of a~column part is defined symmetrically.
The \emph{stretch value} of a~partition $(\R,\CC)$ is the maximum stretch value of a~part of~$(\R,\CC)$.
Finally, the \emph{stretch-width} of a~symmetric $0,1$-matrix $M$ is the minimum among every symmetric partition sequence $\mathcal S$ of~$M$ of the maximum stretch value among partitions of~$\mathcal S$.
Observe that for any ordered graph $(G, \prec)$, the stretch-width of $(G, \prec)$ is equal to the stretch-width of its adjacency matrix.

\begin{figure}[h!]
  \centering
  \begin{tikzpicture}[scale=0.5]
    \pgfmathsetseed{1236} 
    \def\n{14}
    \pgfmathtruncatemacro\nm{\n-1}
    \foreach \i in {1,...,\nm} {
      \pgfmathtruncatemacro\nj{\nm-\i+1}
        \foreach \j in {1,...,\nj} {
          \pgfmathrandominteger{\rand}{0}{1}
          \xdef\myarray{\i,\j,\rand} 
          \node at (\i,\j) {\rand};
          \node at (\n-\j+1,\n-\i+1) {\rand};
        }
    }
    \foreach \i in {1,...,\n}{
      \node at (\i,\n-\i+1) {\textcolor{blue}{0}} ;
    }

    \draw (0.3,0.4) -- (0.3,\n+0.6)--++(0.3,0) ;
    \draw (0.3,0.4) --++(0.3,0) ;
    
    \draw (0.7+\n,0.4) -- (0.7+\n,\n+0.6)--++(-0.3,0) ;
    \draw (0.7+\n,0.4) --++(-0.3,0) ;
    
    \foreach \i in {2.5,5.5,8.5,12.5}{
      \draw[very thick] (\i,0.5) -- (\i,\n+0.5);
      \draw[very thick] (0.5,\n-\i+1) -- (\n+0.5,\n-\i+1);
    }

    \foreach \i/\j in {4.5/$R_4$,8/$R_3$,11/$R_2$,13.5/$R_1$}{
      \node at (-0.5,\i) {\j}; 
    }
     \foreach \i/\j in {7/$C_3$}{
      \node at (\i,-0.3) {\j}; 
     }

     \fill[opacity=0.2] (5.5,9.5) -- (8.5,9.5) -- (8.5,12.5) -- (5.5,12.5) -- cycle;
  \end{tikzpicture}
  \caption{A symmetric division of a~symmetric $0,1$-matrix. $R_3$ and $C_3$ are symmetric parts. The zone $R_2 \cap C_3$ is shaded. Column part $C_3$ is not 3-wide since the deletion of $R_2, R_3, R_4$ leaves only two distinct column vectors in $C_3$, namely $(0,1,0,1)$ and $(1,0,0,0)$. It is however 2-wide since $R_1 \cap C_3$ has two distinct column vectors (and $R_1$ is too far from the diagonal zone to be removed).} 
  \label{fig:symm-div}
\end{figure}

The following is the counterpart of the so-called \emph{rich divisions}~\cite{twin-width4} tailored for stretch-width.
If $R$ is a set of rows, and $C$ is a set of columns of a~matrix~$M$, we denote by $R \setminus C$ the zone $R \cap (\cols(M) \setminus C)$, that is the submatrix formed by $R$ deprived of the columns of~$C$ (and symmetrically for $C \setminus R$). 
In a~division $(\R=(R_1, \dots, R_n), \CC=(C_1, \dots, C_m))$, a row part $R_i$ is $k$-wide if for every $k$~consecutive columns parts $C_j, \ldots, C_{j + k - 1}$ containing the symmetric of~$R_i$, $R_i \setminus \bigcup_{j \leq h \leq j+k-1} C_h$ contains at least $k$ distinct rows.
The \emph{$k$-wideness} of column parts is defined symmetrically; see~\cref{fig:symm-div}.
A~division $(\R, \CC)$ is \emph{$k$-wide} if all its row and column parts are $k$-wide.
The division is \emph{$k$-diagonal} if none of the row and column parts is $k$-wide.

Given a~set of rows (or columns) $X$ of a matrix $M$, we keep the notation $\conv{X}$ for the set of rows (or columns) of $M$ with indices between the minimum and the maximum indices of~$X$.

\begin{theorem}\label{stw bbd imp no large div}
    For every symmetric $0,1$-matrix $M$ and natural $k$, if $\stw(M) \leq k$, then $M$ has no $9k$-wide division. 
\end{theorem}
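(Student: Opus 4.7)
My plan is to prove the contrapositive: assume $M$ admits a $9k$-wide symmetric division with row parts $R^*_1, \ldots, R^*_N$ and symmetric column parts $C^*_1, \ldots, C^*_N$, and let $((\mathcal R_t, \mathcal C_t))_t$ be any symmetric partition sequence of~$M$; I~will exhibit a partition in the sequence whose stretch value strictly exceeds~$k$. For a row part $P$ in some $\mathcal R_t$, set $I(P) := \{j : P \cap R^*_j \neq \emptyset\}$ and $r(P) := \max I(P) - \min I(P) + 1$, the number of consecutive division row parts spanned by $P$. Going fine-to-coarse, let $t^*$ be the smallest step at which some row part $P^* \in \mathcal R_{t^*}$ satisfies $r(P^*) \geq 3k+1$; such a step exists provided $N \geq 3k+1$, which I may assume. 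By the symmetry of the sequence, the column part $Q^* \in \mathcal C_{t^*}$ symmetric to $P^*$ satisfies $r(Q^*) = r(P^*) \geq 3k+1$, and no other row or column part at step $t^*$ spans more than $3k$ consecutive division parts.

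Because $Q^* \subseteq S_{P^*}$ and $Q^*$ has columns in each of the $3k+1$ consecutive division column parts $C^*_a, \ldots, C^*_{a+3k}$, the convex hull $\conv{S_{P^*}}$ contains the full union $C^*_a \cup \cdots \cup C^*_{a+3k}$; by $9k$-wideness each $C^*_b$ has at least $9k$ columns, so this union has at least $9k(3k+1)$ columns. A useful observation is that if a column part $C'' \in \mathcal C_{t^*}$ does not conflict with $S_{P^*}$, then $P^* \cap C''$ is forced to be constant: otherwise $C'' \subseteq S_{P^*}$ yet $\conv{C''} \cap \conv{S_{P^*}} = \emptyset$, a~contradiction. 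Hence under the hypothesis $\str{P^*} \leq k$, every pair of rows of $P^*$ can differ only on columns lying in the at most $k$ sequence column parts that conflict with $S_{P^*}$.

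To finish, apply the $9k$-wideness of $R^*_b$ for some $b$ in the middle of $[a, a+3k]$, using a window $W$ of $9k$ consecutive division column parts chosen to contain $C^*_b$ together with the bulk of the conflicting column parts (each of which, other than $Q^*$, spans at most $3k$ division column parts by the minimality of~$t^*$, so their footprint fits into such a window). Wideness then yields at least $9k$ pairwise distinct rows of $R^*_b$ when restricted to columns outside~$W$, and by choosing $b$ so that enough of these rows lie in~$P^*$, their distinguishing columns must be distributed across strictly more than $k$ distinct sequence column parts, contradicting $\str{P^*} \leq k$ and hence proving $\stw(M) > k$.

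The chief difficulty lies in the last paragraph: transforming the $9k$ pairwise distinct rows of $R^*_b$ produced by wideness into a strict lower bound of $>k$ on the number of sequence column parts conflicting with $S_{P^*}$. The bookkeeping has to align three quantities---the $3k+1$ threshold defining $t^*$, the $9k$ window in the wideness hypothesis, and the at most $k$ conflicting column parts---which is precisely why the coefficient~$9$ appears in $9k$-wideness. A careful choice of both $b$ and $W$, combined with a counting argument limiting how $9k$ distinct binary patterns can concentrate on the columns of few sequence parts, will be required to nail down the contradiction.
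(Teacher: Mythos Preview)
Your proposal has a genuine gap in the final step, and the threshold $r(P^*)\geq 3k+1$ you chose makes it hard to close. You correctly observe that rows of $P^*$ can differ only on columns lying in the at most $k$ sequence column parts conflicting with $S_{P^*}$. But to contradict $9k$-wideness of some division part $R^*_b$ you must control \emph{all} rows of $R^*_b$, not only those in $P^*$. Your plan relies on ``choosing $b$ so that enough of these rows lie in $P^*$'', yet nothing forces $P^*$ to have more than two rows in total (one from each part just merged); $P^*\cap R^*_b$ may well be empty or a singleton for every~$b$. The remaining rows of $R^*_b$ lie in other sequence row parts $T$, whose number you have not bounded and whose non-constant column zones you have not located. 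There is also no reason the $k$ conflicting column parts should sit inside a single window of $9k$ consecutive division column parts, since they need only conflict with $\conv{S_{P^*}}$, which can be very long. So the sketched counting argument cannot be completed from the information your threshold provides.

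The paper closes exactly this gap by taking a much smaller threshold: let $s$ be the first step at which some sequence row part $R'$ satisfies $\conv{R'}\supseteq R$ for some division row part~$R$. At step $s$ every other sequence part meets at most two consecutive division parts. Crucially, every sequence row part meeting $R$ also conflicts with $R'$; since the stretch value is at most~$k$, at most $k$ column parts conflict with the symmetric $C'$ of $R'$, and by symmetry at most $k$ row parts conflict with $R'$, so at most $k$ row parts (call them $\mathcal S$) meet~$R$. For each $T\in\mathcal S$ the stretch bound confines its non-constant columns $C_T$ to a set conflicting with at most $k$ sequence column parts; since every $C_T$ conflicts with $C'$, an interval argument packs $\bigcup_{T\in\mathcal S} C_T$ into at most $3k$ sequence column parts, hence into at most $9k$ consecutive division column parts. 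Removing these columns makes every $T\in\mathcal S$ constant, so $R$ has at most $|\mathcal S|\leq k<9k$ distinct rows there and is not $9k$-wide. The missing idea in your approach is to bound the \emph{number} of sequence row parts covering a fixed division part; that is what the smaller threshold buys, and it is not available at your threshold $3k+1$.
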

\begin{proof}
  Let $\mathcal D= (\R, \CC)$ be a~symmetric division of $M$.
  Let $\mathcal P= (\R'_n, \CC'_n), \ldots, (\R'_1, \CC'_1)$ be a~symmetric partition sequence of $M$ with stretch value at most~$k$. 
  Let $s$ be the largest integer (that is, first time within the partition sequence) for which there is a~row part~$R' \in \R'_s$ such that $\conv{R'}$ contains a~row part $R \in \R$ of the division $\mathcal D$.
  Observe that, by symmetry of $\mathcal D$ and $\mathcal P$, it happens at the same time in columns.
  We will prove that $R$ is not $9k$-wide.

  Set $\mathcal S := \left \{ T \in \mathcal{R}'_s~|~\conv{T} \cap R \neq \emptyset \right \}$.
  Note that $\mathcal{S}$ is the set of row parts of $\R'_s$ \crossing $R$, and that $R' \in \mathcal{S}$.
  As $\conv{R} \subset \conv{R'}$, every part in $\mathcal S$ \crosses $R'$.
  Thus, it should hold that $|\mathcal S| \leqslant k$ because $(\R'_s, \CC'_s)$ has stretch value at~most~$k$.

  For each $T$ in $\mathcal S$, we define $C_T:= \{c \in \cols(M)~|~c~\in~C,~C~\in~\CC'_s,~\text{and}~C~\cap~T~\text{is non-}$ $\text{constant or } C \text{ is the symmetric of } T\}$.
  For every $T \in \mathcal{S}$, $C_T$ \crosses at~most $k$~parts of~$\mathcal{C}'_s$, as the stretch value of $T$ is at~most~$k$.
  Let $C'$ be the symmetric of $R'$.
  As $T$ \crosses $R'$, the symmetric of $T$ \crosses $C'$.
  Since the symmetric of $T$ is contained in $C_T$, $C_T$ \crosses $C'$, thus $\conv{C'} \cap \conv{C_T} \neq \emptyset$.
  
  As for every $T \in \mathcal{S}$, $\conv{C'} \cap \conv{C_T} \neq \emptyset$, there is $T_1, T_2 \in \mathcal{S}$ such that
  $$\conv{C_{T_1}} \cup \conv{C_{T_2}} \cup \conv{C'} = \conv{\bigcup_{T \in \mathcal S} C_T}.$$
  Take indeed $T_1 \in \mathcal S$ such that $C_{T_1}$ contains the column of minimum index in $\bigcup_{T \in \mathcal S} C_T$, and $T_2 \in \mathcal S$ such that $C_{T_2}$ realizes the maximum index. 
  As $C_{T_1}$ and $C_{T_2}$ \cross $C'$, the parts in conflict with $C_{T_1}$, with $C'$, and with $C_{T_2}$ are consecutive, so their union contains at most $3k$ parts of $\CC'_s$.
  Thus $\bigcup_{T \in \mathcal{S}} C_T$ \crosses at most $3k$ parts of $\CC'_s$. 
 
  Observe that, except for $C'$, every part in $\CC'_s$ is covered by the union of two consecutive parts of $\CC$.
  Part $C'$ is itself covered by the union of three consecutive parts of $\CC$. 
  Thus, overall, each part of $\CC'_s$ is covered by the union of at most three consecutive parts of~$\CC$.
  Hence, as we showed that $\bigcup_{T \in \mathcal{S}} C_T$ \crosses at~most~$3k$ consecutive parts of $\CC'_s$, it is contained in $9k$ consecutive parts of $\CC$, say $C_{j}, \ldots, C_{j + 9k - 1}$.
  Thus for any $T \in \mathcal{S}$, $T \setminus \bigcup_{j \leq h \leq j + 9k-1}C_h$ is constant.
  Therefore $R \setminus \bigcup_{j \leq h \leq j + 9k-1}C_h$ contains at most $k$ distinct rows, as $|\mathcal S| \leqslant k$.
\end{proof}

\begin{theorem}\label{no large div imp diagonal}
    For every symmetric $0,1$-matrix $M$ and natural $k$, if $M$ does not have a~$k$-wide division, then $M$ admits a~sequence of~symmetric $2(k+1)$-diagonal divisions.
\end{theorem}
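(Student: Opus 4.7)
The plan is to construct the partition sequence $D_n, D_{n-1}, \ldots, D_1$ from the finest division downward, merging two consecutive row parts together with their symmetric column parts at each step while preserving the $2(k+1)$-diagonal property. The finest division is trivially $2(k+1)$-diagonal because each part consists of a single row and hence has at most one distinct row on any column restriction.

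The core lemma I need is that any $2(k+1)$-diagonal division $D$ with $p > 1$ parts admits a merge of two consecutive parts yielding a $2(k+1)$-diagonal division. A first observation is that non-merged parts automatically remain not $2(k+1)$-wide: given a non-merged row part $R_j$ with a witnessing $2(k+1)$-range in $D$, a short case analysis according to whether this old range contains zero, one, or both of the merged column parts $C_i, C_{i+1}$ shows that a new-division $2(k+1)$-range covering a superset of the old range's columns always exists (possibly after shifting or extending by one part), so the number of distinct rows of $R_j$ outside this new range is still $<2(k+1)$. The only remaining obstacle is therefore that the merged part $R_i \cup R_{i+1}$ might become $2(k+1)$-wide.

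If some adjacent pair has $|R_i|+|R_{i+1}| \leq 2k+1$, merging it trivially succeeds because the merged part has too few rows to be $2(k+1)$-wide; also, if $p \leq 2(k+1)$ the new division has fewer than $2(k+1)$ column parts and is vacuously $2(k+1)$-diagonal. In the remaining regime I argue by contradiction. Assume that for every $i$ the merge of $R_i, R_{i+1}$ fails, i.e., for every $(2k+3)$-range of old column parts containing both $C_i, C_{i+1}$, the union $R_i \cup R_{i+1}$ on columns outside has $\geq 2(k+1)$ distinct rows. I construct a $k$-wide division of $M$ from this assumption, contradicting the theorem's hypothesis. The candidate is the coarsening of $D$ pairing adjacent parts $(R_1,R_2), (R_3,R_4), \ldots$, with a trailing triple $(R_{p-2},R_{p-1},R_p)$ when $p$ is odd. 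For any pair $R_{2i-1}\cup R_{2i}$ in this coarsening, any $k$-range of coarsening parts containing its symmetric spans at most $2k+1$ old column parts; extending on one side by at most three more old parts yields a $(2k+3)$-range containing $C_{2i-1}, C_{2i}$, and the contradiction assumption then provides $\geq 2(k+1) \geq k$ distinct outside rows for the pair, so it is $k$-wide. For the trailing triple, the same reasoning applied to the subset $R_{p-2}\cup R_{p-1}$ (whose distinct outside rows lower-bound those of the triple) yields $k$-wideness. All parts are $k$-wide, contradicting the hypothesis.

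The main obstacle I foresee lies in the careful bookkeeping of how $2(k+1)$-ranges in the old and new divisions relate under a single merge (where the natural shift or one-part extension drives the bound $2(k+1) = 2k+2$ past the naive $2k+1$), and how $k$-ranges of the coarsening correspond to $(2k+3)$-ranges of $D$ (where the extra three old parts absorb both the pairing's length-doubling and the parity adjustment via the trailing triple). Boundary effects --- divisions with few parts, the parity of $p$, and the feasibility of extending ranges near the matrix boundary --- are precisely what the constant $2(k+1)$ and the triple-pairing trick are calibrated to handle.
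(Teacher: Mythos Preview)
Your proposal is correct and follows essentially the same approach as the paper: greedily merge adjacent parts while the $2(k+1)$-diagonal property is maintained, and if blocked, pair up all adjacent parts (with a trailing triple for odd $p$) to exhibit a $k$-wide division contradicting the hypothesis. Your exposition is more explicit in two places where the paper is terse---the case analysis showing non-merged parts stay non-$2(k+1)$-wide after a single column merge, and the translation of $2(k+1)$-wideness in the single-merge division into the statement about $(2k+3)$-ranges of original column parts---but the underlying argument is identical. The extra observations about small pairs and $p\le 2(k+1)$ are harmless but unnecessary, since the general contradiction already covers those cases.
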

\begin{proof}
  Let $M$ be a~symmetric $n \times n$ matrix that does not admit a~$k$-wide division.
  The finest division $(\R_n, \CC_n)$ of $M$ is 2-diagonal, hence, $2(k+1)$-diagonal.
  Now, we greedily merge some consecutive parts, to form a division sequence in which every division is $2(k+1)$-diagonal.
  Assume, for the sake of contradiction, that after a~partial sequence of symmetric $2(k+1)$-diagonal divisions $(\R_n, \CC_n), \ldots, (\R_s, \CC_s)$ of $M$, no division $(\R, \CC)$ obtained by merging two symmetric pairs of consecutive parts of $(\R_s, \CC_s)$ is $2(k+1)$-diagonal.

Say, $\R_s = (R_1, \ldots, R_s)$ and $\CC_s = (C_1, \ldots, C_s)$.
By assumption, for each $i \in [s-1]$, the division $((R_1, \ldots, R_i \uplus R_{i+1}, \ldots, R_s), (C_1, \ldots, C_i \uplus C_{i+1}, \ldots, C_s))$ is not $2(k+1)$-diagonal.
As $(\R_s, \CC_s)$ is $2(k+1)$-diagonal, $R_i \uplus R_{i+1}$ has to be the (only) $2(k+1)$-wide row part in $(R_1, \ldots, R_i \uplus R_{i+1}, \ldots, R_s)$.
Indeed, for any other row part $R_j$ (with $j \in [i-1] \cup [i+2,s]$), $R_j$ was also a part of $\R_s$, while each part of $(C_1, \ldots, C_i \uplus C_{i+1}, \ldots, C_s)$ contains a~part of~$\CC_s$.
Thus, as $R_j$ is not $2(k+1)$-wide in $(\R_s,\CC_s)$, it is not $2(k+1)$-wide in $((R_1, \ldots, R_i \uplus R_{i+1}, \ldots, R_s), (C_1, \ldots, C_i \uplus C_{i+1}, \ldots, C_s))$.

Hence, in the division $((R_1 \uplus R_2, R_3 \uplus R_4, \ldots, R_{s-1} \uplus R_s), (C_1 \uplus C_2, C_3 \uplus C_4, \ldots, C_{s-1} \uplus C_s))$ (resp. $((R_1 \uplus R_2, R_3 \uplus R_4, \ldots, R_{s-2} \uplus R_{s-1} \uplus R_s), (C_1 \uplus C_2, C_3 \uplus C_4, \ldots, C_{s-2} \uplus C_{s-1} \uplus C_s))$ when $s$ is odd), each row part is at least $k+1$-wide (resp. $k$-wide).
Indeed, $k$ consecutive parts of $(C_1 \uplus C_2, C_3 \uplus C_4, \ldots, C_{s-1} \uplus C_s)$ (resp.~$(C_1 \uplus C_2, C_3 \uplus C_4, \ldots, C_{s-1} \uplus C_{s-1} \uplus C_s)$) are always covered by $2(k+1)$-consecutive parts of $\CC_s$.
Thus $M$ admits a~$k$-wide division, a~contradiction.
\end{proof}

\begin{observation}\label{central removings}
If $M$ is a matrix on which $\mathcal D=(\R = (R_1, \ldots, R_p), \CC = (C_1, \ldots, C_p))$ is a~symmetric $k$-diagonal division, then for every $i \in [p]$, $R_i \setminus \bigcup_{i - k + 1 \leq h \leq i+k -1}C_h$ (for the sake of legibility, an out-of-range value $h$ indexes an empty $C_h$) has less that $k$ distinct rows.
\end{observation}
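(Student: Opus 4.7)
The plan is to unpack the definition of $k$-wide for a row part and observe that removing more columns can only decrease the number of distinct rows in the resulting submatrix.

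First, I would recall that by the definition of the stretch-width setting, the symmetric partner of the row part $R_i$ is exactly the column part $C_i$. Since the division $\mathcal D$ is $k$-diagonal, no row (or column) part is $k$-wide. Therefore $R_i$ fails to be $k$-wide, meaning there exists some block of $k$ consecutive column parts $C_j, \ldots, C_{j+k-1}$ that contains $C_i$, such that the submatrix $R_i \setminus \bigcup_{j \leq h \leq j+k-1} C_h$ has strictly fewer than $k$ distinct rows. The condition that the block contains $C_i$ is equivalent to $j \leq i \leq j+k-1$, i.e.\ $j \in [i-k+1, i]$.

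Next I would observe the set-theoretic inclusion
\[
\bigcup_{h=j}^{j+k-1} C_h \ \subseteq\ \bigcup_{h=i-k+1}^{i+k-1} C_h,
\]
which holds because $[j, j+k-1] \subseteq [i-k+1, i+k-1]$ whenever $j \in [i-k+1,i]$. Consequently the column set over which the second submatrix is taken is contained in that of the first, so $R_i \setminus \bigcup_{i-k+1 \leq h \leq i+k-1} C_h$ is obtained from $R_i \setminus \bigcup_{j \leq h \leq j+k-1} C_h$ by deleting further columns. Deleting columns can only identify more rows, so the number of distinct rows of the former is at most that of the latter, which is strictly less than $k$. This gives the claimed inequality.

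There is essentially no obstacle here beyond a careful reading of the definitions: the statement is a straightforward monotonicity remark on top of the defining failure of $k$-wideness for $R_i$. The only thing to be mindful of is the symmetry convention, which ensures the right choice of $C_i$ as the symmetric partner of $R_i$, and the boundary convention that an out-of-range index $h$ corresponds to an empty $C_h$, so the union $\bigcup_{i-k+1\leq h\leq i+k-1} C_h$ makes sense near the two ends of the division.
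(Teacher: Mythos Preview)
Your proof is correct and follows essentially the same approach as the paper: both use the failure of $k$-wideness of $R_i$ to obtain a block $C_j,\ldots,C_{j+k-1}$ with $j\le i\le j+k-1$, observe that this block is contained in $\{C_{i-k+1},\ldots,C_{i+k-1}\}$, and conclude by monotonicity of the number of distinct rows under further column deletion. Your write-up is simply a bit more explicit about the monotonicity step and the symmetry convention.
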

\begin{proof}
By definition, for each $i \in [p]$, there are $k$ consecutive column parts $C_j, \ldots, C_{j+k-1}$ such that $R_i \setminus \bigcup_{j \leq h \leq j+k-1}C_h$ contains less than $k$ different rows, with $j \leq i \leq j+k-1$.
Hence, $\{C_{i-k+1}, \ldots, C_{i+k-1}\}$ contains $\{C_j, \ldots, C_{j+k-1}\}$, and thus $R_i \setminus \bigcup_{i-k+1 \leq j \leq i+k-1} C_h$ also contains less that $k$ distinct rows.
\end{proof}

\begin{theorem}\label{building the contraction}
    If a symmetric $0,1$-matrix $M$ admits a~sequence of symmetric $k$-diagonal divisions, then $\stw(M) \leq 4k^3$.
\end{theorem}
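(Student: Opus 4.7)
The plan is to construct a~partition sequence of~$M$ with maximum stretch value at~most~$4k^3$ by leveraging the class structure imposed by the given $k$-diagonal divisions. The given sequence cannot in general serve as the witnessing partition sequence: one can construct $k$-diagonal divisions whose stretch value greatly exceeds~$4k^3$ (for example, when a~row part contains rows that interact with far-apart columns, the relevant column parts may span nearly the entire matrix even though the division is $k$-diagonal). So the $k$-diagonal sequence will be used as a~structural tool guiding a~different, carefully chosen partition sequence.

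The key structural ingredient is Observation~\ref{central removings}: in every $k$-diagonal division $\mathcal D_l$ of the given sequence, each row part~$R_i$ partitions into fewer than~$k$ equivalence classes based on the row restriction to the columns outside the $(2k-1)$-wide central window $W_i := \bigcup_{|h-i| \leq k-1} C_h$. Because all rows of a~single class agree outside this window, any non-constant interaction of the class with a~far-away column part is~``witnessed'' by a~single row vector pattern, and the sequence of divisions provides matching such information across all levels~$l$. I~would build the partition sequence in phases, one per level~$l$ of the given sequence: within phase~$l$, the active partition is~first refined so that every current part is a~union of rows from a~single class at level~$l$; then atomic merges combine classes across the pair of parts that are merged in going from $\mathcal D_{l+1}$ to $\mathcal D_l$, respecting the symmetric column-side operations throughout.

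For each intermediate part~$R$, the bound $4k^3$ on the stretch value would be obtained as a~product of three factors: the central window width ($2k-1$) of the relevant level~$l$, the number of classes into which each row part of~$\mathcal D_l$ is refined (at~most~$k$), and a~factor of at~most~$2k$ arising from the partial completion of merges within a~phase, yielding $(2k-1) \cdot k \cdot 2k \leq 4k^3$ column parts of the intermediate partition conflicting with the union of~$R$'s symmetric and non-constant column parts. The main obstacles will be to ensure that the class-based refinement truly localizes the non-constant interactions of each intermediate part to the central window, and to handle the case where classes are not contiguous inside their parent row part of~$\mathcal D_l$. Aligning the row- and column-side merges under the symmetric constraint while maintaining the stretch bound at every atomic merge—particularly at the~transition moments between phases—constitutes the core technical challenge.
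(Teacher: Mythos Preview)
Your proposal is correct and follows essentially the same approach as the paper: refine each $k$-diagonal division into equivalence classes of equal rows outside the central $(2k-1)$-window (via Observation~\ref{central removings}), observe that these refined partitions form a coarsening chain, and interpolate between consecutive refined partitions by arbitrary symmetric merges, yielding exactly the three-factor bound $(2k-1)\cdot k\cdot 2k \leq 4k^3$. The point you flag as an obstacle---that the class partitions at successive levels nest---is indeed the only thing to verify, and it holds because a row-equivalence class at level $s$ is contained in one at level $s-1$ (the central window of the coarser division covers more columns and the row parts are unions of the finer ones).
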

\begin{proof}
Let $M$ be an $n \times n$ $0,1$-matrix, and $(\R_n, \CC_n), \ldots, (\R_1, \CC_1)$, a sequence of symmetric $k$-diagonal divisions.
For any $s \in [2,n]$, let $\R_s = (R_1, \ldots, R_s)$ and  $\CC_s = (C_1, \ldots, C_s)$.
By \Cref{central removings}, for any $i \in [s]$, $R_i \setminus \bigcup_{i-k+1 \leq h \leq i+k-1} C_h$ contains less that $k$ distinct rows.
Let $(\R'_s, \CC'_s)$ be the symmetric partition of $M$ such that for each $P \in \R'_s$, there is $R_i \in \R_s$ with $P \subseteq R_i$, and $P$ is a maximal subset of equal rows of $R_i \setminus \bigcup_{i-k+1 \leq h \leq i+k-1} C_h$.

As each $R \in \R_s$ is split into at most $k$ parts in $\R'_s$, the stretch value of $(\R'_s, \CC'_s)$ is at most $(2k-1)k$.
By construction, $(\R'_ {s-1}, \CC'_{s-1})$ is a coarsening of $(\R'_s, \CC'_s)$.
Indeed, every $R'_i \in \R'_s$ is an equivalence class of $R_i \setminus \bigcup_{i-k+1 \leq h \leq i+k-1} C_h$.
Thus it is contained in an equivalence class of the row part of $(\R_{s-1},\CC_{s-1})$ containing $R'_i$.

To go from $(\R'_s, \CC'_s)$ to $(\R'_{s-1}, \CC'_{s-1})$, we perform any symmetric partition sequence.
As any part in the latter partition contains at most $2k$ parts in the former one (all the possible parts partitioning two row parts $R_i$ and $R_{i+1}$ that are merged), the stretch value in between these two partitions is bounded by $2k^2 \cdot 2k$.
Hence $\stw(M) \leq 4k^3$.
\end{proof}

\begin{theorem}\label{bdd-wideness-impl-bdd-stw}
If a~symmetric $0,1$-matrix $M$ does not admit a~$k$-wide division, then $\stw(M) \leqslant 32(k+1)^3$.
\end{theorem}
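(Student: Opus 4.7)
The plan is to simply chain together the two preceding technical theorems, since the statement is essentially the composition of those two conclusions. First I would apply~\cref{no large div imp diagonal}: the assumption that $M$ does not admit a $k$-wide division gives that $M$ has a~sequence of symmetric $2(k+1)$-diagonal divisions. Then I would feed that sequence of divisions into~\cref{building the contraction}, invoked with parameter $k' = 2(k+1)$, to obtain a~symmetric partition sequence whose stretch value never exceeds $4(k')^3 = 4\bigl(2(k+1)\bigr)^3 = 32(k+1)^3$.

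Thus the argument reduces to the arithmetic identity $4 \cdot 2^3 = 32$, and the final bound $\stw(M) \leq 32(k+1)^3$ falls out immediately. No additional combinatorial work is required, since the matrix-to-partition-sequence construction is already packaged inside~\cref{building the contraction}, and the diagonalization of a~non-$k$-wide matrix is packaged inside~\cref{no large div imp diagonal}.

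There is no real obstacle: the content of the theorem is that the two lemmas compose cleanly, with the constant $32$ arising from the factor of $2$ in the diagonalization step raised to the cubic power present in the stretch-width bound. The only thing to be slightly careful about is to use the correct parameter substitution (namely $2(k+1)$, not $k$) when invoking~\cref{building the contraction}, so that the final constant is stated correctly.
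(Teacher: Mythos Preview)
Your proposal is correct and matches the paper's own proof essentially verbatim: the paper also just invokes \cref{no large div imp diagonal} to get a sequence of symmetric $2(k+1)$-diagonal divisions and then plugs this into \cref{building the contraction} to obtain the bound $4\cdot(2(k+1))^3 = 32(k+1)^3$. There is nothing to add.
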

\begin{proof}
  Indeed, by \Cref{no large div imp diagonal}, $M$ admits a~sequence of $2(k+1)$-diagonal divisions.
  Applying \Cref{building the contraction} on this sequence yields a~witness of stretch-width $4 \cdot (2(k+1))^3 = 32 (k+1)^3$.
\end{proof}

\begin{reptheorem}{thm:approx-stw}
  Given an integer $k$ and a symmetric $n \times n$ $0,1$-matrix $M$, there is an $n^{O(1)}$-time algorithm that outputs a~symmetric partition sequence witnessing that $\stw(M) = O(k^3)$ or correctly reports that $\stw(M) > k$.
\end{reptheorem}
\begin{proof}
Given any $n \times n$ $0,1$-matrix $M$, division $(\R = (R_1, \ldots, R_p), \CC = (C_1, \ldots, C_p))$ of $M$, part $R \in \R$, and integer $q$, one can decide in polynomial time if the row part $R_i$ is $q$-wide in $(\R, \CC)$.
Indeed, it suffices to check for every $i-q+1 \leq j \leq i$, if $R_i \setminus \bigcup_{j \leq h \leq j+q-1} C_h$ contains at least $q$ different rows, which can be done in $n^{O(1)}$.
Thus, in time $n^{O(1)}$, one can check if the division $(\R, \CC)$ is $q$-diagonal, and for every $i \in [p-1]$ if there is a $q$-diagonal division of the form $((R_1, \dots, R_i \uplus R_{i+1}, \dots, R_p), (C_1, \dots, C_i \uplus C_{i+1}, \dots, C_p))$.

Let $M$ an $n \times n$ symmetric $0, 1$-matrix, and $k$ an integer.
We start from the finest division $(\R_n, \CC_n)$.
If at some point we have a division $(\R_s = (R_1, \ldots, R_s), \CC_s = (C_1, \ldots, C_s))$ such that none of the divisions $(R_1, \dots, R_i \uplus R_{i+1}, \dots, R_s), (C_1, \dots, C_i \uplus C_{i+1}, \dots, C_s)$ are $2(9k+1)$-diagonal, then by~\Cref{no large div imp diagonal}, the division $((R_1 \uplus R_2, \ldots, R_{s-1} \uplus R_s)$, $(C_1 \uplus C_2, \ldots, C_{s-1} \uplus C_s))$ is $9k$-wide.
Thus by \Cref{no large div imp diagonal}, $\stw(M) > k$.

Otherwise, in time $n^{O(1)}$, we get~a sequence of $2(9k+1)$-diagonal divisions $(\R_n, \CC_n), \ldots,$  $(R_1, \CC_1)$.
At this point, we can find a sequence witnessing that $\stw(M) = O(k^3)$, by the proof of~\Cref{building the contraction}.
We build the symmetric partitions $(\R'_n, \CC'_n), \ldots, (\R'_1,\CC'_1)$ of $M$, where for each $P \in \R'_s$, there is $R_i \in \R_s$ with $P \subseteq R_i$, and $P$ is a maximal subset of equal rows of $R_i \setminus \bigcup_{i-k+1 \leq h \leq i+k-1} C_h$.
The partition $\R'_s$ (and its symmetric $\CC'_s$) can be found in polynomial time: For each $R_i \in \R_s$, one can sort the rows of $R_i \setminus \bigcup_{i-k+1 \leq h \leq i+k-1} C_h$ by lexicographic order, and obtain the desired equivalence classes of equal rows.

\Cref{building the contraction} ensures that any sequence going from $(\R'_s, \CC'_s)$ to $(\R'_{s-1}, \CC'_{s-1})$ maintains a~stretch value of $O(k^3)$. 
\end{proof}



\section{Overlap graph}\label{sec:overlap}

Consider an ordered graph $(G, \prec)$, and think of $\prec$ as a~\emph{left-to-right} order (with the smallest vertex being the leftmost one).
For any edge $e \in E(G)$, we denote by $L(e)$ (resp.~$R(e)$) the left (resp.~right) endpoint of~$e$.
Given two edges $e, f \in E(G)$, we say that $e$ is \emph{left of}~$f$ if $L(e) \preceq L(f)$, and $e$ is~\emph{strictly left of}~$f$ if $L(e) \prec L(f)$.
By extension, we say that $X \subset E(G)$ is \emph{left of} (resp.~\emph{strictly left of}) $Y \subset E(G)$ if for every $e \in X$ and $f \in Y$, $L(e) \preceq L(f)$ (resp.~$L(e) \prec L(f)$).
If $u, v$ are vertices of $(G,\prec)$, we denote by $[u,v]$ the set of vertices that are, in $\prec$, at least $u$ and at most $v$.
We also denote by $[\leftarrow,u]$ (resp.~$[u,\rightarrow]$) the set of vertices that are at most $u$ (resp.~at least $u$). 

We say that two edges $e, f$ are \emph{crossing} if $L(e) \prec L(f) \prec R(e) \prec R(f)$ (or symmetrically) and we denote $e \times f$ this relation.
Observe that two edges sharing an endpoint are not crossing.
The relation $\times$ is symmetric and anti-reflexive, hence defines an~undirected graph on $E(G)$.
We denote by $\Ov(G,\prec)$ the graph $(E(G), \times)$.
$\Ov(G,\prec)$ is called the \emph{overlap graph} of $(G, \prec)$; see~\cref{fig:overlap-graph}.

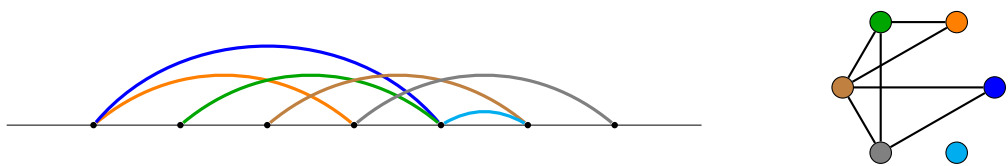
\begin{figure}[h!]
  \centering
  \begin{tikzpicture}[vertex/.style={fill,circle,inner sep=0.03cm},lvertex/.style={draw,circle,inner sep=0.1cm}]
    \def\n{7} 
    \def\l{8} 

    \draw[very thin] (0,0) -- (\l+\l/\n,0) ;
    \foreach \i in {1,...,\n}{
      \node[vertex] (\i) at (\l * \i/\n,0) {} ;
    }
    \def\g{green!65!black}
    \foreach \i/\j/\b/\c in {1/4/40/orange,1/5/50/blue,2/5/40/\g,3/6/40/brown,4/7/40/gray,5/6/30/cyan}{
      \draw[very thick,\c] (\i) to [bend left=\b] (\j) ;
    }

    \def\r{1}
    \pgfmathsetmacro\z{360/6} 
    \begin{scope}[xshift=12cm,yshift=0.5cm]    
      \foreach \i/\c/\p in {0/blue/bl,1/orange/or,2/\g/g,3/brown/br,4/gray/gr,5/cyan/cy}{
        \node[lvertex,fill=\c] (\p) at (\i * \z:\r) {} ;
      }
      \foreach \i/\j in {bl/br,bl/gr,or/g,or/br,g/br,g/gr,br/gr}{
        \draw[thick] (\i) -- (\j) ;
      }
  \end{scope}
\end{tikzpicture}
  \caption{An ordered graph (left) and its overlap graph (right).}
  \label{fig:overlap-graph}
\end{figure}

We relate the structure of $\Ov_\prec(G)$ and the stretch-width of $G$ among bounded-degree graphs, by proving the following theorem:
\begin{theorem}
A~class $\mathcal C$ of ordered graphs of bounded degree has bounded stretch-width if and only if $\{\Ov(G,\prec)~|~G \in \mathcal C\}$ does not admit $K_{t,t}$ subgraph, for some integer $t$.
\end{theorem}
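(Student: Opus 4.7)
The plan is to prove both directions of the biconditional. The ``if'' direction follows from~\cref{lem:weakly-sparse-overlap}: although stated for the unordered stretch-width $\stw(G)$, its proof in fact produces a~partition sequence compatible with the given ordering $\prec$, so $\stw(G,\prec) < 32(2t+1)^3$ whenever $\Ov(G,\prec)$ is $K_{t,t}$-free. Consequently, if no overlap graph $\Ov(G,\prec)$ with $(G,\prec)\in\mathcal{C}$ contains a~$K_{t,t}$ subgraph, then $\mathcal{C}$ has bounded stretch-width (and the bounded-degree hypothesis is not needed for this direction).

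For the ``only if'' direction, I would prove the contrapositive: if $\mathcal{C}$ has maximum degree $d$ and its overlap graphs contain $K_{t,t}$ for arbitrarily large $t$, then $\stw$ is unbounded on $\mathcal{C}$. The strategy is to build arbitrarily wide divisions of the adjacency matrix of $(G,\prec)$, which then force large stretch-width via~\cref{stw bbd imp no large div}. I would first \emph{canonicalize} the $K_{t,t}$: for any crossing edges $e,f$ either $L(e)\prec L(f)$ or $L(f)\prec L(e)$, giving a~$2$-coloring of $K_{t,t}$; a~bipartite Ramsey argument extracts a~monochromatic $K_{s,s}$ with $s=\Omega(\log t)$, so that (up to symmetry) $L(e_i)\prec L(f_j)\prec R(e_i)\prec R(f_j)$ for all $i,j$. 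Applying the degree bound twice, from these $s$ edges one extracts $m=\Omega(s/d^2)$ edges on each side whose $L$- and $R$-endpoints are pairwise distinct. This yields four sets $U,V,W,X$ of size $m$ each, appearing in this order along $\prec$ in disjoint zones, with bijections $\pi\colon U\to W$ and $\sigma\colon V\to X$ induced by the selected edges.

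Next, for a~parameter $k$ to be chosen, I would build a~symmetric division with $4h$ parts where $h = \lfloor m/(dk)\rfloor$. The row/column parts are the consecutive intervals $R^U_1,\ldots,R^U_h,R^V_1,\ldots,R^V_h,R^W_1,\ldots,R^W_h,R^X_1,\ldots,R^X_h$, each $R^U_j$ containing $dk$ consecutive vertices of $U$ together with the non-canonical vertices falling between them (the extreme parts absorbing the external regions). The $k$-wideness of $R^U_j$ reduces to two observations. First, $R^U_j$ sits at position $j\leq h$ while the $W$-parts occupy positions $2h+1,\ldots,3h$; any $k$ consecutive column parts containing $R^U_j$ span positions $[a,a+k-1]$ with $a\leq j$ and avoid $W$ as long as $j+k-1<2h+1$, which holds for every $j\leq h$ provided $k\leq h+1$. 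Second, once $W$-columns are preserved, each $u_i\in R^U_j$ carries its canonical $1$ at the distinct column $w_{\pi(i)}$; a~pigeonhole using the degree bound shows that any equivalence class of rows equal on $W$ has size at~most~$d$ (two equal rows force the canonical neighbor of one to sit inside the $W$-neighborhood of the other, consuming degree), yielding at~least $dk/d=k$ distinct rows in $R^U_j$. The symmetric argument handles $R^V_j$, $R^W_j$ and $R^X_j$. Choosing $k=\Theta(\sqrt{m/d})$ satisfies $k\leq h+1$ and gives a~$k$-wide division.

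Combining the estimates, a~$K_{t,t}$ in the overlap graph produces a~$\Theta(\sqrt{\log t/d^3})$-wide division of the adjacency matrix of $(G,\prec)$, and~\cref{stw bbd imp no large div} then forces $\stw(G,\prec) = \Omega(\sqrt{\log t/d^3})$. For fixed $d$, letting $t\to\infty$ contradicts bounded stretch-width. The main obstacle is the simultaneous $k$-wideness of all four families of parts, which couples the block size to the number of blocks; the bounded-degree hypothesis is used crucially both to extract distinct endpoints after canonicalization and to prevent collapse of rows inside a~part caused by possible extra edges between $U$ and $W$ (and symmetrically between $V$ and $X$).
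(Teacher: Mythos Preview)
Your ``if'' direction is the same as the paper's. Your ``only if'' direction is correct but takes a genuinely different and much heavier route than the paper.

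The paper (\cref{from Ktt to simple Ktt,bdd deg-bdd stw impl bdd Ktt}) argues directly on an arbitrary partition sequence of the subgraph $H$ induced by the endpoints of a clean $K_{N/2,N/2}$: at the first moment some part $P$ reaches size $d+1$, it must be inhomogeneous to the part containing the other endpoint of any $X\cup Y$-edge touching $P$; that edge has length at least $N/(2d)$ in $H$, and since every part has size at most $2d$, at least $N/(4d^2)$ parts cross the resulting red span. This yields $\stw(G,\prec)\geq N/(4d^2)$ with no detour through wide divisions.

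Two specific contrasts. First, your bipartite Ramsey step is wasteful: the coloring by $L(e)\prec L(f)$ has enough structure that a median argument (the paper's \cref{from Ktt to simple Ktt}) extracts a clean $K_{\lfloor t/2\rfloor,\lfloor t/2\rfloor}$ rather than $K_{\Omega(\log t),\Omega(\log t)}$. Second, your route through $k$-wide divisions and \cref{stw bbd imp no large div} works---your pigeonhole showing each equal-on-$W$ class has size at most $d$ is correct, and the positional constraint $k\leq h+1$ does keep the relevant ``far'' zone intact for all four families---but it costs a square root on top of the Ramsey loss, yielding $\stw(G,\prec)=\Omega(\sqrt{\log t/d^3})$ versus the paper's linear $\Omega(t/d^2)$. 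For the qualitative theorem this is fine, but the paper's direct argument is both shorter and quantitatively stronger; and if you ever need the explicit bound $N=4td^2$ (as \cref{thm:tw-bound} does downstream), your approach would not deliver it.
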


The next two lemmas prove the forward implication, by considering a special point in the partition sequence.
The last lemma of this section proves the backward implication, using the matrix characterization of~\cref{sec:matrices}.
We say that a~$K_{t,t}$ subgraph of $\Ov(G,\prec)$ is \emph{clean} if the sides of the $K_{t,t}$ are $X, Y \subset E(G)$ such that $X$ is strictly left of $Y$.

\begin{lemma}\label{from Ktt to simple Ktt}
  For every ordered graph $(G, \prec)$, if $\Ov(G,\prec)$ contains a~$K_{t, t}$ as a subgraph, then~$\Ov(G,\prec)$ contains a clean~$K_{\lfloor t/2 \rfloor, \lfloor t/2 \rfloor}$ subgraph.
\end{lemma}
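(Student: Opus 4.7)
The plan is to split the $K_{t,t}$ evenly into a left half and a right half along the left-endpoint order, then extract a monochromatic (within $A$ or within $B$) pair of large subsets by pigeonhole. The main conceptual point is that ties in left endpoints are harmless, because any two ``cross-colored'' edges are adjacent in $\Ov(G,\prec)$ and hence share no endpoint.

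Concretely, let $A,B\subseteq E(G)$ with $|A|=|B|=t$ and $A\cap B=\emptyset$ be the two sides of the $K_{t,t}$ subgraph of $\Ov(G,\prec)$. First I would sort the $2t$ edges of $A\cup B$ by $L(\cdot)$ (breaking ties arbitrarily) into a~sequence $e_1,\ldots,e_{2t}$ with $L(e_i)\preceq L(e_j)$ for $i<j$, and define $X:=\{e_1,\ldots,e_t\}$ and $Y:=\{e_{t+1},\ldots,e_{2t}\}$. By pigeonhole applied to $X$, either $|X\cap A|\geq\lceil t/2\rceil$ or $|X\cap B|\geq\lceil t/2\rceil$; up to swapping the roles of $A$ and $B$, assume the former. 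Then $|Y\cap A|=t-|X\cap A|\leq\lfloor t/2\rfloor$, hence $|Y\cap B|=t-|Y\cap A|\geq\lceil t/2\rceil$.

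The key step is to argue that $X':=X\cap A$ is \emph{strictly} left of $Y':=Y\cap B$. Take any $e\in X'$ and $f\in Y'$. From the sorting, $L(e)\preceq L(f)$. Moreover, since $e\in A$ and $f\in B$ lie on opposite sides of the $K_{t,t}$, they are adjacent in $\Ov(G,\prec)$, i.e.\ they cross; by definition of crossing, $e$ and $f$ share no endpoint, so $L(e)\neq L(f)$, hence $L(e)\prec L(f)$. Therefore $(X',Y')$ is a~clean $K_{\lceil t/2\rceil,\lceil t/2\rceil}$ subgraph of $\Ov(G,\prec)$, and restricting each side to any $\lfloor t/2\rfloor$-subset yields the desired clean $K_{\lfloor t/2\rfloor,\lfloor t/2\rfloor}$.

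I do not foresee a serious obstacle: the only delicate point is the possibility of ties in left endpoints between elements straddling the $X$/$Y$ boundary, but as observed above this can happen only within $A$ or only within $B$, never across, because cross-side pairs must cross. The whole argument is a~short median/pigeonhole computation once this observation is in place.
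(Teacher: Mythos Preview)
Your proof is correct and follows essentially the same median/pigeonhole idea as the paper. The paper sorts the two sides separately, compares their $\lfloor t/2\rfloor$-th left endpoints (which are distinct since those two edges cross), and keeps the lower half of one side and the upper half of the other; your merged-sort-then-pigeonhole version, with the crossing observation ruling out cross-colour ties, is a minor repackaging of the same argument.
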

\begin{proof}
  Assuming that $\Ov(G,\prec)$ has a~$K_{t,t}$ subgraph, there is two disjoint sets $X, Y \subset E(G)$ each of size $t$ such that for every $x \in X$ and $y \in Y$, $x \times y$.
Let $L(x_1) \preceq L(x_2) \preceq \ldots \preceq L(x_t)$ be the elements of $X$, and $L(y_1) \preceq L(y_2) \preceq \ldots \preceq L(y_t)$, the elements of $Y$.
As $x_{\lfloor t/2 \rfloor}$ and $y_{\lfloor t/2 \rfloor}$ are crossing, either $L(x_{\lfloor t/2 \rfloor}) \prec L(y_{\lfloor t/2 \rfloor})$ or $L(y_{\lfloor t/2 \rfloor}) \prec L(x_{\lfloor t/2 \rfloor})$.
The sides of the clean $K_{\lfloor t/2 \rfloor, \lfloor t/2 \rfloor}$ are $\{x_1, \ldots, x_{\lfloor t/2 \rfloor}\}$ and $\{y_{\lfloor t/2 \rfloor}, \ldots, y_t\}$ in the former case, and $\{y_1, \ldots, y_{\lfloor t/2 \rfloor}\}$ and $\{x_{\lfloor t/2 \rfloor}, \ldots, x_t\}$ in the latter.  
\end{proof}

\begin{lemma}\label{bdd deg-bdd stw impl bdd Ktt}
For any ordered graph $(G, \prec)$, if $\Delta(G) \leqslant d$ and $\stw(G, \prec) \leqslant t$, then $\Ov(G,\prec)$ does not contain $K_{N, N}$ with $N = 4td^2$ as a subgraph.
\end{lemma}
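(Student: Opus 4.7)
The plan is to suppose toward a contradiction that $\Ov(G,\prec)$ contains a $K_{N,N}$ with $N=4td^2$, and exhibit a part with stretch exceeding $t$ in some partition of a witnessing sequence. First, by \cref{from Ktt to simple Ktt}, reduce to a clean $K_{N/2,N/2}$ whose sides $X$ (left) and $Y$ (right) satisfy $|X|=|Y|=2td^2$, with $X$ strictly left of $Y$. Let $a=\max L(Y)$ and $b=\min R(X)$; then $a\prec b$, and every edge of $X\cup Y$ has left endpoint in $[\leftarrow,a]$ and right endpoint in $[b,\rightarrow)$.

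Fix a partition sequence $\P_n,\ldots,\P_1$ witnessing $\stw(G,\prec)\leq t$, and let $s$ be the largest index such that no part of $\P_{s+1}$ contains both a vertex $\preceq a$ and a vertex $\succeq b$; this index is well-defined since $\P_n$ has only singleton parts while $\P_1$ is the whole vertex set. Call the parts of $\P_{s+1}$ containing some vertex $\preceq a$ (resp.\ $\succeq b$) the \emph{$\mathcal{L}$-parts} (resp.\ \emph{$\mathcal{R}$-parts}); by the choice of $s$ these two families are disjoint, and every edge of $X\cup Y$ runs from an $\mathcal{L}$-part to an $\mathcal{R}$-part.

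For the counting, each edge $e\in X\cup Y$ determines a meta-pair $(P_e,Q_e)$ with $P_e$ its $\mathcal{L}$-side part and $Q_e$ its $\mathcal{R}$-side part. Either $(P_e,Q_e)$ is fully adjacent, in which case the degree bound forces $|P_e|,|Q_e|\leq d$ and the pair therefore carries at most $d^2$ edges of $G$, or $(P_e,Q_e)$ is inhomogeneous and hence a red edge of $\R(\P_{s+1})$. The stretch bound in $\P_{s+1}$ then limits every $\mathcal{L}$-part to at most $t$ red $\mathcal{R}$-neighbors, and symmetrically from the $\mathcal{R}$-side. Charging each edge to its ``small side''---an endpoint-part of size $\leq d$, for which the degree bound caps the outgoing edges at $d^2$---or, when both sides are large (so the meta-pair is forced to be red), to one of the at most $t$ red neighbors, one shows that the $\mathcal{L}$-side and the $\mathcal{R}$-side each contribute at most $2td^2$ edges, for a total of at most $4td^2-1<N$, the desired contradiction.

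The main obstacle lies in the big--big case, where both endpoints of an edge sit in parts of size larger than $d$: the trivial per-pair estimate $\min(|P_e|,|Q_e|)\cdot d$ can be large, and one must simultaneously exploit the stretch bound on the $\mathcal{L}$-side and the $\mathcal{R}$-side---together with the fact that each large part contains at least $d+1$ vertices, so only few such parts appear around the gap $(a,b)$---to keep the double-counting tight. The two factors of $d$ in the $4td^2$ bound arise precisely from the interplay between the degree bound and the stretch constraint acting on each of the two sides.
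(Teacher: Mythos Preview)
Your counting argument in the third paragraph is not a proof: you identify the ``big--big'' case as the obstacle and then assert that ``one shows'' the bound holds, but the charging scheme you describe does not close. Concretely, nothing in your setup bounds the number of $\mathcal{L}$-parts (there can be arbitrarily many parts containing a vertex $\preceq a$), and nothing bounds the number of $X\cup Y$-edges between a single big $\mathcal{L}$-part and one of its at most $t$ red $\mathcal{R}$-neighbors (both parts may be huge, and $\min(|P_e|,|Q_e|)\cdot d$ is useless here). Your hint that ``only few such parts appear around the gap $(a,b)$'' is not justified: the open interval $(a,b)$ may contain many vertices of $G$ unrelated to $X\cup Y$, and your choice of $s$ gives no control over how many or how large the parts meeting that interval are. (There is also a minor indexing slip: with $\P_n$ finest and $\P_1$ coarsest, the \emph{largest} $s$ with your property is always $n-1$; you presumably want the smallest.)

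The paper's argument avoids these issues by two moves you are missing. First, it passes to the induced subgraph $H$ on the endpoints of $X\cup Y$; stretch-width is monotone under induced subgraphs (restrict the partition sequence), so it suffices to show $\stw(H,\prec)>t$. In $H$ the four intervals delimited by $v_1=\max L(X)$, $v_2=\max L(Y)$, $v_3=\max R(X)$ each contain at least $N/(2d)$ vertices (pigeonhole on degree), so every edge of $X\cup Y$ has length at least $N/(2d)$ \emph{in $H$}. Second, the paper picks the partition by \emph{part size}: take the first moment a part $P$ reaches $d{+}1$ vertices. Then every part has size at most $2d$, and since $\Delta(H)\le d$, every part adjacent to $P$ is inhomogeneous to $P$. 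Any edge of $X\cup Y$ incident to $P$ (and there is one, as every vertex of $H$ is such an endpoint) forces $\conv{N_{\R}[P]}$ to contain an interval of length $\geq N/(2d)$, which is crossed by at least $N/(4d^2)=t$ parts. Your gap-bridging stopping time gives neither the uniform part-size bound nor the guarantee that the distinguished part touches an edge of $X\cup Y$, which is why the count does not close.
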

\begin{proof}
We will prove the contrapositive.
Let $(G, \prec)$ be an ordered graph of maximum degree at most~$d$.
We suppose that $\Ov(G,\prec)$ contains a $K_{N, N}$ as a subgraph, with $N=4td^2$.
By \Cref{from Ktt to simple Ktt}, there are two sets $X, Y \in E(G)$ forming a~clean $K_{N/2, N/2}$ of $\Ov(G,\prec)$.

Let $v_1$ (resp.~$v_2$) be the rightmost vertex among left endpoints of edges in $X$ (resp.~$Y$), and let $v_3$ be the rightmost vertex among (right) endpoints of edges in~$X$; see~\cref{fig:X-Y-vi}.
Observe that for any edge $e \in X$, $L(e) \preceq v_1$, and $v_2 \prec R(e) \preceq v_3$.
The relation $v_2 \prec R(e)$ holds because every edge of $X$ crosses every edge of $Y$.
In addition, for any edge $f \in Y$, $v_1 \prec L(f) \preceq v_2$, and $v_3 \prec R(f)$.

\begin{figure}[h!]
  \centering
  \begin{tikzpicture}[vertex/.style={fill,circle,inner sep=0.02cm}]
    \def\s{30}
    \def\l{12}
    \draw[very thin] (0,0) -- (\l+\l/\s,0) ;
    \foreach \i in {1,...,\s}{
      \node[vertex] (\i) at (\l * \i/\s,0) {} ;
    }
    \def\g{green!65!black}
    \foreach \i/\j/\b/\c in {1/17/28/blue,1/19/30/blue,2/17/25/blue,2/21/30/blue,3/21/25/blue,4/16/28/blue,5/22/20/blue,7/16/25/blue,
    8/24/30/\g,9/24/25/\g,9/29/30/\g,10/27/25/\g,10/30/30/\g,12/30/30/\g,14/24/20/\g,14/28/25/\g}{
      \draw[thick,\c] (\i) to [bend left=\b] (\j) ;
    }
    \foreach \i/\p in {7/$v_1$,14/$v_2$,22/$v_3$}{
      \node at (\l * \i / \s,-0.25) {\p} ;
    }
    \foreach \i/\p in {2/$X$,10.5/$Y$}{
      \node at (\i,1) {\p} ;
    }
  \end{tikzpicture}
  \caption{The edge subsets $X$ (blue) and $Y$ (green) forming a~clean biclique $K_{N/2, N/2}$ of $\Ov(G,\prec)$ (here, with $N=16$), and the vertices $v_1, v_2, v_3 \in V(G)$.
  The vertices non-incident to a~blue or green edge are in $V(G) \setminus V(H)$.}
  \label{fig:X-Y-vi}
\end{figure}
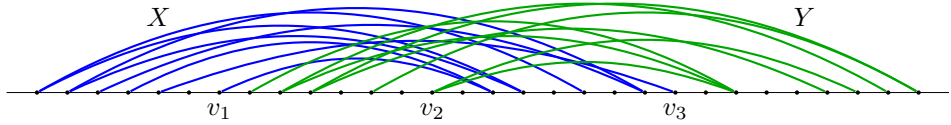

We consider $H$, the subgraph of $(G,\prec)$ induced by the endpoints of edges in $X \cup Y$, and let $h$ be $|V(H)|$.
We will show that the stretch-width of $H$ (hence that of $G$) is at least~$t$.
In the remaining of the proof, the intervals of vertices and the lengths of edges are all with respect to~$H$.
The intervals $[\leftarrow, v_1[, [v_1, v_2[, ]v_2, v_3[$ and $[v_3, \rightarrow]$ are all of size at least $N/(2d)$.
Indeed, observe that every vertex of $H$ has at most $d$ incident edges.       
Thus, the $N/2$ left (resp.~right) endpoints of edges in $X$ (resp.~$Y$) make for at least $\frac{1}{d} 	 N/2$ distinct vertices. 

Therefore, every edge in $X \cup Y$ has length at least $N/(2d)$.
Let $\P_h, \ldots, \P_1$ be any partition sequence of $H$, and let $i \in [h]$ be the maximum integer such that $d+1$ vertices are contained in a~single part $P \in \P_i$.
As $\Delta(H) \leqslant \Delta(G) \leqslant d$, every part $Q \in \P_i$ adjacent to $P$ is such that $P$ and $Q$ are inhomogeneous.
Let $e=uv$ be an edge of $X \cup Y$ with at least one endpoint in $P$, say $u$, and let $P'$ (possibly equal to~$P$) be the part of $\P_i$ containing the other endpoint,~$v$.
The span of $N_{\R(\P_i)}(P)$ contains the interval $I=[u,v]$ (or $I=[v,u]$ if $v \prec u$).
As $I$ has length at least $N/(2d)$ and every part of $\P_i$ has size at most~$2d$, the number of parts of $\P_i$ \crossing $I$ (hence, in particular with $N_{\R(\P_i)}(P)$) is at least $N/(4d^2)$.
Thus, $\stw(G,\prec) \geqslant \stw(H,\prec) \geqslant \frac{N}{4d^2}=t$.
\end{proof}

\begin{replemma}{lem:weakly-sparse-overlap}\label{bdd overlap Ktt impl bdd stw}
For every ordered graph $(G, \prec)$ and positive integer $N$, if $\Ov(G,\prec)$ does not contain $K_{N, N}$ as a subgraph, then $\stw(G, \prec) \leqslant 32 (2N+1)^3$.
\end{replemma}
\begin{proof}
Let $(G, \prec)$ be an ordered graph such that $\Ov(G,\prec)$ does not contain $K_{N, N}$ as a~subgraph, and let $M$ be the adjacency matrix of $(G, \prec)$.
We prove that $\stw(M) \leqslant 32 (2N+1)^3$.

Suppose, for the sake of contradiction, that $\stw(M) > 32 (2N+1)^3$.
By \Cref{bdd-wideness-impl-bdd-stw}, there is a~$2N$-wide division $(\R= \{R_1, \ldots, R_k\}, \CC= \{C_1, \ldots, C_k\})$ of $M$.
In particular, for any row $R_i$, $R_i \setminus C_{i-N+1}, \ldots, C_{i+N-1}$ contains more that $2N$ different rows.
Let $D$ be the union of the zones $R_i \cap C_j$ such that~$|i - j| < N$, that is, the $2N-1$ ``longest'' diagonals of zones of the division $(\mathcal{R}, \mathcal{C})$.
As, for every $i \in [k]$, the number of distinct rows in $R_i \setminus D$ (resp. distinct columns in $C_i \setminus D$) is at least $2N$, $R_i \setminus D$ (resp. $C_i \setminus D$) contains at least $2N$ 1-entries.

To simplify the coming notations, let denote by $\lVert M' \rVert$ the number of 1-entries of any submatrix $M'$ of $M$.
For example, $\lVert R_i \setminus D \rVert \geqslant 2N$.
Observe that $R_i$ (resp. $C_j$) is split by $D$ in at most two sets $R_i^{\leftarrow}$ and $R_i^{\rightarrow}$ (resp. $C_j^{\uparrow}$ and $C_j^{\downarrow}$), namely, $R_i^{\leftarrow} = \bigcup_{j \leqslant i - N} R_i \cap C_j$ and $R_i^{\leftarrow} = \bigcup_{j \geqslant i + N} R_i \cap C_j$; see \Cref{matrix drawing}.

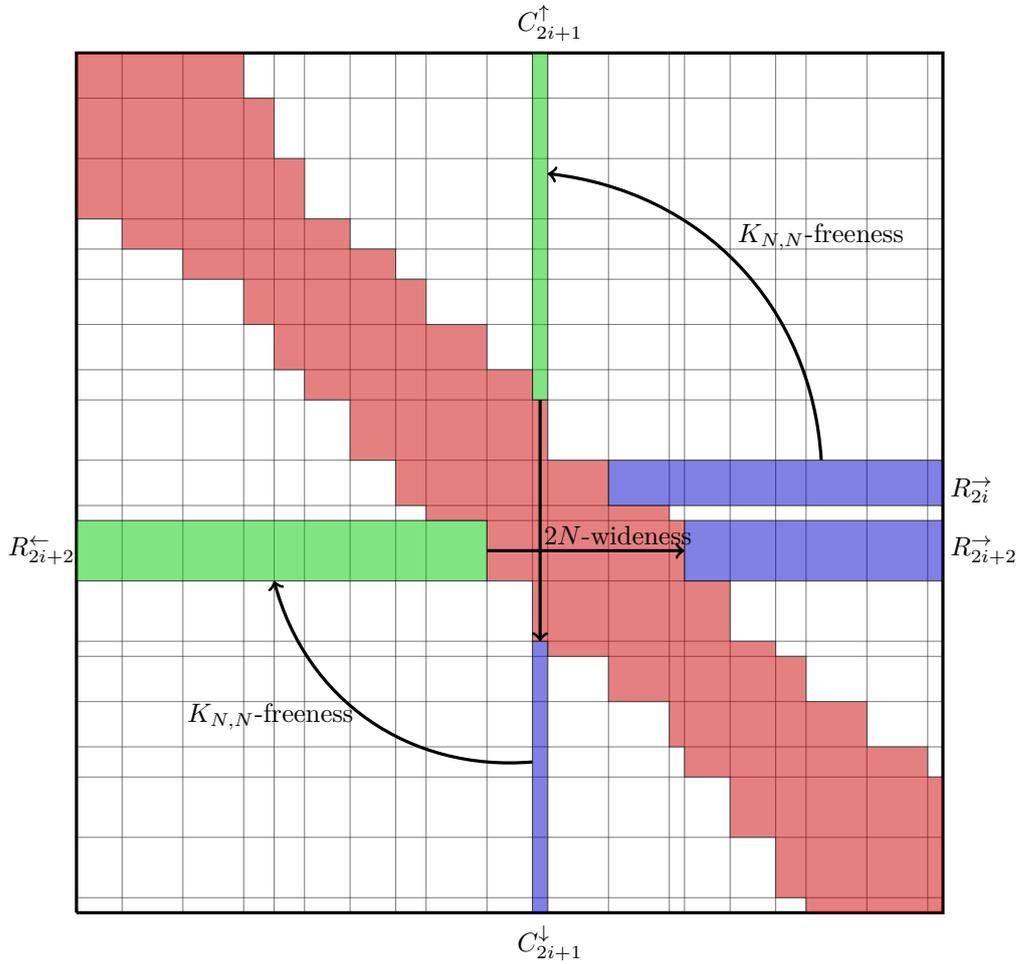
\begin{figure}[h!]
\centering
\begin{tikzpicture}[scale = 0.2]

\draw[very thick] (0, 0)--(57,0)--(57,57)--(0,57)--(0,0);
\foreach \i in {3, 7, 11, 13, 15, 18, 21, 23, 27, 30, 31, 35, 39, 40, 43, 46, 48, 52, 56, 57, 57}{
\draw[opacity = 0.5] (\i, 57) -- (\i, 0);
\draw[opacity = 0.5] (0, 57 - \i) -- (57, 57 - \i);
}
\filldraw[fill=red!80!black, opacity = 0.5]
(0, 57) \foreach \i/\j in {11/57,  11/54,  13/54,  13/50,  15/50,  15/46,  18/46,  18/44,  21/44,  21/42,  23/42,  23/39,  27/39,  27/36,  30/36,  30/34,  31/34,  31/30,  35/30,  35/27,  39/27,  39/26,  40/26,  40/22,  43/22,  43/18,  46/18,  46/17,  48/17,  48/14,  52/14,  52/11,  56/11,  56/9,  57/9,  57/5,  57/5,  57/0,  52/0,  52/0, 48/0,  48/1, 46/1,  46/5, 43/5,  43/9, 40/9,  40/11, 39/11,  39/14, 35/14,  35/17, 31/17,  31/18, 30/18,  30/22, 27/22,  27/26, 23/26,  23/27, 21/27,  21/30, 18/30,  18/34, 15/34,  15/36, 13/36,  13/39, 11/39,  11/42, 7/42,  7/44, 3/44,  3/46, 0/46,  0/50}{-- (\i, \j)} --cycle;

\filldraw[fill=blue!80!black, opacity = 0.5]
(35, 30) -- (35, 27) -- (57, 27) -- (57, 30) -- cycle;
\node[text width=3cm] at (65,28) {$R_{2i}^\rightarrow$};

\filldraw[fill=green!80!black, opacity = 0.5]
(30, 34) -- (31, 34) -- (31, 57) -- (30, 57) -- cycle;
\node[text width=3cm] at (36.5,59) {$C_{2i+1}^\uparrow$};

\filldraw[fill=blue!80!black, opacity = 0.5]
(30, 18) -- (31, 18) -- (31, 0) -- (30, 0) -- cycle;
\node[text width=3cm] at (36.5,-2) {$C_{2i+1}^\downarrow$};

\filldraw[fill=green!80!black, opacity = 0.5]
(27, 26) -- (27, 22) -- (0, 22) -- (0, 26) -- cycle;
\node[text width=3cm] at (3,24) {$R_{2i+2}^\leftarrow$};

\filldraw[fill=blue!80!black, opacity = 0.5]
(40, 26) -- (40, 22) -- (57, 22) -- (57, 26) -- cycle;
\node[text width=3cm] at (65,24) {$R_{2i+2}^\rightarrow$};

\draw[->, very thick] (49, 30) to [bend right=40] (31, 49);
\draw[->, very thick] (30, 10) to [bend right=-40] (13, 22);
\node[text width=3cm] at (51,44.8) {$K_{N, N}$-freeness};
\node[text width=3cm] at (14.8,13) {$K_{N, N}$-freeness};

\draw[->, very thick] (30.5, 34) to (30.5, 18);
\draw[->, very thick] (27, 24) to (40, 24);
\node[text width=3cm] at (38.3,25) {$2N$-wideness};
\end{tikzpicture}
\caption{Visual depiction of the proof of \cref{lem:weakly-sparse-overlap}. The red zones represent $D$, the blue zones contain more than $N$ 1-entries, and the green zones have fewer that $N$ 1-entries. The invariant leading to a contradiction (that $R_{2i}^\rightarrow$ has more than $N$ 1-entries) propagates by invoking twice $K_{N,N}$-freeness followed by $2N$-wideness.} 
\label{matrix drawing}
\end{figure}

Observe that for every $i, j$ such that $i + 1 \leqslant j < i + N$, each 1-entry of $R_i^{\rightarrow}$ (resp. $C_i^{\downarrow}$) and 1-entry of $C_j^{\uparrow}$ (resp. $R_j^{\rightarrow}$) correspond to crossing edges in $(G,\prec)$.
As $\Ov(G,\prec)$ does not contain any $K_{N, N}$ subgraph we have, for every $i, j$ such that $i + 1 \leqslant j < i + N$:
\begin{compactenum}
  \item \label{it:1} $\min(\lVert R_i^{\rightarrow} \rVert, \lVert C_j^{\uparrow} \rVert) < N$, and
  \item \label{it:2} $\min(\lVert C_i^{\downarrow} \rVert, \lVert R_j^{\leftarrow} \rVert) < N$.
\end{compactenum}
Indeed, if the first item does not hold, $N$ 1-entries in $R_i^{\rightarrow}$ and $N$ 1-entries in $C_j^{\uparrow}$ form the two sides of a~$K_{N,N}$.

We finally prove by induction on $i$ that, while $2i \leqslant k$, the property \emph{$\lVert R_{2i}^{\rightarrow} \rVert > N$}, henceforth called $(\mathcal Q_i)$, holds.
Note that $R_{0}^{\leftarrow}$ is empty.
Thus $\lVert R_{0}^{\rightarrow} \rVert \geqslant 2N > N$, hence $(\mathcal Q_0)$ holds.
Now assume that $(\mathcal Q_i)$ holds.
By the first item, we have $\lVert C_{2i + 1}^{\uparrow}\rVert < N$.
Thus $\lVert C_{2i + 1}^{\downarrow} \rVert > N$, since $$C_{2i + 1} \setminus D = C_{2i + 1}^{\downarrow} \cup C_{2i + 1}^{\uparrow}~\text{and}~\lVert C_{2i + 1} \setminus D \rVert \geqslant 2N.$$
Symmetrically, by the second item, $\lVert R_{2i + 2}^{\leftarrow} \rVert < N$, and hence $\lVert R_{2i + 2}^{\rightarrow} \rVert > N$.
Thus $(\mathcal Q_{i+1})$ holds.
As $ R_{k - N +1}^{\rightarrow}$ is empty, $(\mathcal Q_i)$ can no longer be true when $2i \geqslant k - N + 1$, a~contradiction.
Therefore $\stw(M) \leqslant 32 (2N)^3$. 
\end{proof}

\section{Subdivisions}\label{sec:subd}

When subdividing the edges of an~ordered graph, there is a~simple way of updating its vertex ordering without creating larger bicliques in its overlap graph.

\begin{lemma}\label{lem:single-subd}
  Let $(G, \prec)$ be an ordered graph, and $H$ be obtained by subdividing an edge of~$G$.
  There is an order $\prec'$ such that, for every integer $t$, if $\Ov(G,\prec)$ has no $K_{t,t}$ subgraph, then $\Ov(H,\prec')$ has no $K_{t,t}$ subgraph.
\end{lemma}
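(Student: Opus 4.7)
The plan is to define $\prec'$ by inserting $w$ into $\prec$ immediately after $u$, keeping the rest of the order unchanged. First I would compute the relevant structure of $\Ov(H,\prec')$. Since no vertex of $H$ is strictly between $u$ and $w$ in $\prec'$, no edge can satisfy the crossing condition with $uw$, so $uw$ is isolated in $\Ov(H,\prec')$. For $wv$, a direct case check shows that its crossings are exactly the edges crossing $uv$ in $\Ov(G,\prec)$ together with the ``extra'' set $X := \{uy \in E(G) \setminus \{uv\} : u \prec y \prec v\}$. Because the relative order on $V(G)$ is preserved by $\prec'$, for any two edges of $E(G) \setminus \{uv\}$ (none of which involves $w$), their crossing in $\Ov(H,\prec')$ coincides with their crossing in $\Ov(G,\prec)$.

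Next I would proceed by contradiction: assume $\Ov(H,\prec')$ contains a $K_{t,t}$ with sides $A,B$. The isolated vertex $uw$ cannot appear in the biclique. If $wv$ also does not appear, then $A \cup B \subseteq E(G) \setminus \{uv\}$ and the biclique transfers directly to a $K_{t,t}$ in $\Ov(G,\prec)$, contradicting the hypothesis. Otherwise we may assume $wv \in A$ and split the other side as $B = B_{\mathrm{old}} \sqcup B_{\mathrm{new}}$, where $B_{\mathrm{old}} := B \cap N_{\Ov(G,\prec)}(uv)$ and $B_{\mathrm{new}} := B \cap X$. Every $a \in A \setminus \{wv\}$ is a $G$-edge not involving $w$, so $a$ is adjacent in $\Ov(G,\prec)$ to every $b \in B$, yielding a $K_{t-1,t}$ in $\Ov(G,\prec)$ already.

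The easy subcase is $B_{\mathrm{new}} = \emptyset$: here $uv$ is adjacent in $\Ov(G,\prec)$ to every element of $B = B_{\mathrm{old}}$, so replacing $wv$ by $uv$ in $A$ produces the desired $K_{t,t}$ in $\Ov(G,\prec)$. The substantive subcase is $B_{\mathrm{new}} \neq \emptyset$, where $uv$ shares endpoint $u$ with every $uy \in B_{\mathrm{new}}$ and so cannot complete the biclique directly. The plan here is to extract the missing $t$-th vertex by a combinatorial substitution argument: fix some $uy_0 \in B_{\mathrm{new}}$ and exploit the fact that ``$a$ crosses $uy_0$ in $G$'' combined with ``$a$ crosses every other $b \in B$'' pins down the $\prec$-intervals of the edges in $A \setminus \{wv\}$ tightly, in such a way that one can always exhibit a $G$-edge $c$ (either $uv$ after swapping out a suitable element of $B_{\mathrm{new}}$, or a replacement edge realizing the same crossings as $wv$ without sharing $u$) that is adjacent in $\Ov(G,\prec)$ to all of $B$. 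This substitution is the main obstacle, because the many structural patterns of how $B_{\mathrm{new}}$ and $A \setminus \{wv\}$ can interact each force a different choice of replacement, and these cases must be handled uniformly to conclude with a $K_{t,t}$ in $\Ov(G,\prec)$.
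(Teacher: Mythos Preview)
Your computation of $N_{\Ov(H,\prec')}(wv)$ is sharper than the paper's. The paper simply asserts that ``$wv$ crosses the same edges as $uv$ was crossing'' and concludes that $\Ov(H,\prec')$ is $\Ov(G,\prec)$ plus one isolated vertex; you correctly notice the extra set $X=\{uy\in E(G)\setminus\{uv\}: u\prec y\prec v\}$ of new neighbours of $wv$. So you have in fact spotted a genuine oversight in the paper's own argument.

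Your plan to close that gap, however, cannot succeed. The ``substantive subcase'' $B_{\mathrm{new}}\neq\emptyset$ is not merely intricate; for this particular choice of $\prec'$ the implication you are trying to prove is false. Take $V(G)=\{u,z,y_1,y_2,v\}$ in this $\prec$-order and $E(G)=\{uv,\,uy_1,\,uy_2,\,zv\}$. Then $\Ov(G,\prec)$ is the two-edge path $uy_1\text{--}zv\text{--}uy_2$ together with the isolated vertex $uv$, hence $K_{2,2}$-free. After subdividing $uv$ and inserting $w$ immediately to the right of $u$, both $wv$ and $zv$ cross each of $uy_1$ and $uy_2$, so $\Ov(H,\prec')$ contains a $K_{2,2}$ with sides $A=\{wv,zv\}$ and $B=\{uy_1,uy_2\}=B_{\mathrm{new}}$. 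There is no $G$-edge other than $zv$ adjacent in $\Ov(G,\prec)$ to both elements of $B$, so no ``combinatorial substitution'' can manufacture a $K_{2,2}$ in $\Ov(G,\prec)$. (The same example with only $y_1$ already defeats the case $t=1$.)

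The upshot is that the repair, if any, must come from choosing $\prec'$ differently rather than from a cleverer extraction argument with this fixed $\prec'$. In the example above, inserting $w$ immediately to the \emph{left} of $v$ (so that $wv$ becomes the isolated vertex and $uw$ plays the role of $uv$) avoids creating the biclique; a correct proof would need to justify such a placement in general, or to prove a slightly weaker statement that still suffices for the downstream use in Theorem~\ref{thm:subd}.
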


\begin{proof}
  Let $e=uv$ be the edge of $G$ subdivided to form $H$, and let $w \in V(H)$ be the new vertex resulting from this subdivision.
  The total order $\prec'$ is obtained from $\prec$, by adding $w$ next to~$u$, say, just to its right.
  This way $\Ov(H,\prec')$ is simply $\Ov(G,\prec)$ plus an isolated vertex.
  Indeed the edge $uw \in E(H)$ is an isolated vertex in $\Ov(H,\prec')$, since $u$ and $w$ are consecutive along $\prec'$, whereas $wv \in E(H)$ crosses the same edges as $uv$ was crossing. 
\end{proof}

We now define a long subdivision process that is actually ``erasing'' large bicliques in the overlap graph of a~bounded-degree graph.
Let $uv$ be an edge of an ordered graph $(G,\prec)$, with $h$ vertices between $u$ and $v$, say, $u \prec u_1 \prec u_2 \prec \ldots \prec u_h \prec v$.
We describe an $h+1$-subdivision of $uv$ in $(G,\prec)$ that we call \emph{flattening of $uv$}.
We delete $uv$, and create $h+1$ new vertices $w_1, \ldots, w_{h+1}$ such that $u \prec w_1 \prec u_1 \prec w_2 \prec u_2 \prec \ldots \prec w_h \prec u_h \prec w_{h+1} \prec v$.
We then create the edges $uw_1$, $w_iw_{i+1}$ for every $i \in [h]$, and $w_{h+1}v$.
We may say that these \emph{edges stem} from $uv$.
An \itsubd of $(G,\prec)$ chooses a~total order on the edges of~$G$, and iteratively flattens the edges of $G$ in this order (note that the created edges are \emph{not} flattened themselves); see~\cref{subdivisions example}.

\begin{figure}[h!]
\centering
\begin{tikzpicture}[vertex/.style={fill,circle,inner sep=0.02cm}]
  \begin{scope}
    \def\s{5}
    \def\l{4}
    \draw[very thin] (\l / \s-0.2,0) -- (\l+0.2,0) ;
    \foreach \i in {1,...,\s}{
      \node[vertex] (\i) at (\l * \i/\s,0) {} ;
      \coordinate (c\i) at (\l * \i/\s + 0.5 * \l/\s,0) {} ;
    }
    \def\g{green!65!black}
    \foreach \i/\j/\b/\c in {1/3/50/blue,1/5/50/\g,2/4/50/orange}{
      \draw[thick,\c] (\i) to [bend left=\b] (\j) ;
    }
    \draw[thin,densely dashed,blue] (1) to [bend left=50] (c1) to [bend left=50] (c2) to [bend left=50] (3) ;
  \end{scope}

  \begin{scope}[xshift=4.25cm]
    \def\s{7}
    \def\l{4}
    \draw[very thin] (\l / \s-0.2,0) -- (\l+0.2,0) ;
    \foreach \i in {1,...,\s}{
      \node[vertex] (\i) at (\l * \i/\s,0) {} ;
      \coordinate (c\i) at (\l * \i/\s + 0.5 * \l/\s,0) {} ;
    }
    \def\g{green!65!black}
    \foreach \i/\j/\b/\c in {1/2/50/blue,2/4/50/blue,4/5/50/blue,1/7/50/\g,3/6/50/orange}{
      \draw[thick,\c] (\i) to [bend left=\b] (\j) ;
    }
    \draw[very thin,densely dashed,orange] (3) to [bend left=50] (c3) to [bend left=50] (c4) to [bend left=50] (c5) to [bend left=50] (6) ;
  \end{scope}

  \begin{scope}[xshift=8.5cm]
    \def\s{10}
    \def\l{4}
    \draw[very thin] (\l / \s-0.2,0) -- (\l+0.2,0) ;
    \foreach \i in {1,...,\s}{
      \node[vertex] (\i) at (\l * \i/\s,0) {} ;
      \coordinate (c\i) at (\l * \i/\s + 0.5 * \l/\s,0) {} ;
    }
    \def\g{green!65!black}
    \foreach \i/\j/\b/\c in {1/2/50/blue,2/5/50/blue,5/7/50/blue,1/10/50/\g,3/4/50/orange,4/6/50/orange,6/8/50/orange,8/9/50/orange}{
      \draw[thick,\c] (\i) to [bend left=\b] (\j) ;
    }
    \draw[very thin,densely dashed,\g] (1) to [bend left=50] (c1) to [bend left=50] (c2) to [bend left=50] (c3) to [bend left=50] (c4) to [bend left=50] (c5) to [bend left=50] (c6) to [bend left=50] (c7) to [bend left=50] (c8) to [bend left=50] (c9) to [bend left=50] (10) ;
  \end{scope}

  \begin{scope}[yshift=-1cm,xshift=3.5cm]
    \def\s{19}
    \def\l{6}
    \draw[very thin] (\l / \s-0.2,0) -- (\l+0.2,0) ;
    \foreach \i in {1,...,\s}{
      \node[vertex] (\i) at (\l * \i/\s,0) {} ;
      \coordinate (c\i) at (\l * \i/\s + 0.5 * \l/\s,0) {} ;
    }
    \def\g{green!65!black}
    \foreach \i/\j/\b/\c in {1/3/50/blue,3/9/50/blue,9/13/50/blue,
      1/2/50/\g,2/4/50/\g,4/6/50/\g,6/8/50/\g,8/10/50/\g,10/12/50/\g,12/14/50/\g,14/16/50/\g,16/18/50/\g,18/19/50/\g,
      5/7/50/orange,7/11/50/orange,11/15/50/orange,15/17/50/orange}{
      \draw[thick,\c] (\i) to [bend left=\b] (\j) ;
    }
  \end{scope}
\end{tikzpicture}
     \caption{An \itsubd. Created edges have the color of the edge they stem from.}
    \label{subdivisions example}
\end{figure}

\begin{lemma}\label{lem:decr-biclique}
  Any \itsubd $(G',\prec')$ of an~ordered graph $(G,\prec)$ of maximum degree~$d$, is such that $\Ov(G',\prec')$ has no $K_{2d+2,2d+2}$ subgraph.
\end{lemma}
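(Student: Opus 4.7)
The plan is to assume for contradiction that $\Ov(G',\prec')$ contains a $K_{2d+2, 2d+2}$ and invoke \cref{from Ktt to simple Ktt} to obtain a clean $K_{d+1,d+1}$ biclique with sides $X, Y$ where $X$ is strictly left of $Y$.

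First, I would show that the $d+1$ edges of $X$ lie on $d+1$ pairwise distinct paths of the iterated subdivision (each original edge $uv \in E(G)$ being replaced by a monotone path $P_{uv}$ in $G'$), and likewise for $Y$, with no path carrying both an $X$- and a $Y$-edge. Indeed, if two edges $e, e' \in X$ shared a path $P_{uv}$, monotonicity of the path along $\prec'$ would force (WLOG) $R(e) \preceq L(e')$; any $f \in Y$ crossing both would then simultaneously satisfy $L(f) \prec R(e)$ and, by cleanness, $L(f) \succ L(e') \succeq R(e)$, a contradiction. The same reasoning rules out a path hosting both an $X$- and a $Y$-edge. This yields $2d+2$ pairwise distinct original edges of $G$, one per biclique-edge.

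Next, set $v_1 := \max_{e \in X} L(e)$ and $v_3 := \min_{e \in X} R(e)$; the open interval $(v_1, v_3)$ sits inside the open span $(L(e), R(e))$ of every $X$-edge. Since $L(e)$ and $R(e)$ are consecutive on the monotone path $\pi(e)$, no vertex of any $X$-path lies in $(v_1, v_3)$: each $X$-path \emph{jumps over} this interval. Conversely, every $f \in Y$ satisfies $L(f) \in (v_1, v_3)$, so each $Y$-path \emph{visits} $(v_1, v_3)$ at some vertex.

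Finally, I would exploit the structure of the iterated subdivision. At the moment $\pi(e)$ was flattened (for $e \in X$), a \emph{unique} vertex $z(e)$ lay strictly between $L(e)$ and $R(e)$ in the then-current order; this is the sole original vertex the interval $(L(e^*), R(e^*))$ can contain (for $e^* := \arg\max_{e \in X} R(e)$), every other vertex being a $w$-vertex created by a flattening subsequent to $\pi(e^*)$. Each $f \in Y$ is therefore the ``exit edge'' on the right of $(L(e^*), R(e^*))$ for its own path $\pi(f)$, of which each path admits at most one. Since the $d+1$ $Y$-paths are distinct, and since $w$-vertices in $G'$ have degree exactly $2$ and lie on a single path while the lone original candidate $z(e^*)$ has degree at most $d$, a careful accounting should yield at most $d$ admissible $Y$-exit-edges, contradicting $|Y| = d+1$. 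The main obstacle is this last counting step: rigorously bounding the number of distinct paths that can simultaneously admit a vertex in $(v_1, v_3)$ and extend past $R(e^*)$ by $d$, by charging them either to edges of $G$ incident to $z(e^*)$ or to a shared structural feature forced by the $X$-paths' common jump over $(v_1, v_3)$.
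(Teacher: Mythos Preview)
Your proposal has a genuine gap in the final step, and it stems from the wrong choice of pivot edge. You set $e^* := \arg\max_{e \in X} R(e)$, a \emph{geometric} extremum, and then try to bound the number of $Y$-paths that place a vertex in $(L(e^*), R(e^*))$. Your only structural handle is that every vertex in this interval is either the single vertex $z(e^*)$ present at the moment $\pi(e^*)$ was flattened, or a $w$-vertex created by some \emph{later} flattening. But nothing in your choice of $e^*$ prevents many of the $Y$-paths from being flattened after $\pi(e^*)$: each such later path legitimately deposits its own $w$-vertices inside $(L(e^*), R(e^*))$ and contributes a right-exit edge there. The degree bound on $z(e^*)$ controls only the paths through $z(e^*)$, not these later ones, so the ``careful accounting'' you flag as the main obstacle genuinely cannot be completed from this setup.

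The paper pivots on \emph{time} rather than geometry, and needs far less of the structure you built. It simply uses the degree bound on $G'$ (not your distinct-paths argument) to find $y_1, y_2 \in Y$ with $L(y_1) \prec L(y_2)$, and symmetrically $x_1, x_2 \in X$. Since each $x_i$ crosses each $y_j$, no $x_i$ and $y_j$ stem from the same original edge of $G$. Now let $x_i$ (without loss of generality in $X$) be whichever of the four edges stems from the original edge flattened \emph{last}. At the moment $x_i$ is created, both $L(y_1)$ and $L(y_2)$ already exist and lie strictly between $L(x_i)$ and $R(x_i)$; but by construction of the flattening, any newly created edge has at most one pre-existing vertex in its open span. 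That is the entire contradiction. Your observation that all $2d+2$ edges lie on distinct paths is correct and pleasant, but it is not what closes the argument; the ``last-flattened'' choice is.
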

\begin{proof}
  Assume for the sake of contradiction that $\Ov(G',\prec')$ has a~$K_{2d+2,2d+2}$ subgraph.
  Then by~\cref{from Ktt to simple Ktt}, $\Ov(G',\prec')$ has a~clean $K_{d+1,d+1}$ subgraph.
  Let $X, Y$ be the two sides of this clean biclique, where $X$ is left of $Y$.
  As every vertex of $G'$ (like $G$) is incident to at~most $d$~edges, there is $\{x_1,x_2\} \subseteq X$ and $\{y_1,y_2\} \subseteq Y$ such that $L(x_1) \preceq L(x_2) \prec L(y_1) \prec L(y_2) \prec R(x_i) \prec R(x_{3-i}) \prec R(y_j) \preceq R(y_{3-j})$ with $i, j \in [2]$.

  As $x_1$ and $x_2$ cross $y_1$ and $y_2$, there is no $i, j \in [2]$ such that $x_i$ and $y_j$ stem from the same edge of $G$.
  We can thus assume without loss of generality that the last edge among $x_1, x_2, y_1, y_2$ to be created is in $X$ (since the argument is symmetric if this happens in~$Y$), i.e., $x_i$ for some $i \in [2]$.
  When $x_i$ is created, the vertices $L(y_1)$ and $L(y_2)$ already exist and form a~non-trivial interval since $L(y_1) \prec L(y_2)$.
  This contradicts the construction of the \itsubd, since $x_i$ jumps over $[L(y_1), L(y_2)]$, when it should have at least created an intermediate vertex in $[L(y_1), L(y_2)]$.
\end{proof}

We are now equipped to show the main result of this section, that exponentially-long subdivisions of bounded-degree graphs have bounded stretch-width. 

\begin{reptheorem}{thm:subd}\label{thm:long-subd-bd-deg}
  Every $(\geqslant n 2^m)$-subdivision of every $n$-vertex $m$-edge graph $G$ of maximum degree~$d$ has stretch-width at~most~$32(4d+5)^3$.
\end{reptheorem}
\begin{proof}
  Let $G$ be any graph of $\mathcal C$ with $n$~vertices and $m$~edges, and let $G''$ be any $(\geqslant n 2^m)$-subdivision of $G$.
  Choose an arbitrary order $\prec$ of $V(G)$.
  Let $(G',\prec')$ be the \itsubd of $(G,\prec)$, performed with an arbitrary order on the edges $G$.
  By~\cref{lem:decr-biclique}, $\Ov(G',\prec')$ has no $K_{2d+2,2d+2}$ subgraph.
  Every edge of $G$ is subdivided at most $n 2^m$ times by the process of \itsubd.
  By~\cref{lem:single-subd}, the edges of $G'$ can be further subdivided to obtain $G''$ such that $\Ov(G'',\prec'')$ has no $K_{2d+2,2d+2}$ subgraph, for some vertex ordering $\prec''$.
  Therefore, by \cref{lem:weakly-sparse-overlap}, $\stw(G'',\prec'') \leqslant 32(4d+5)^3$, and in particular, $\stw(G'') \leqslant 32(4d+5)^3$.
\end{proof}

\begin{corollary}\label{cor:stw-dg-vs-tw}
There are graph classes with bounded stretch-width and maximum degree, and yet unbounded treewidth.
\end{corollary}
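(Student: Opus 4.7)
The plan is to exhibit an explicit family of graphs satisfying the three required properties: bounded maximum degree, bounded stretch-width, and unbounded treewidth. The natural candidate is a suitable subdivision of the $k \times k$ grid. Let $G_k$ denote the $k \times k$ grid, which has $n_k := k^2$ vertices, $m_k := 2k(k-1)$ edges, maximum degree~$4$, and treewidth exactly~$k$.

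I would define $H_k$ to be the $(k^2 \cdot 2^{2k(k-1)})$-subdivision of $G_k$ (i.e., with $n_k 2^{m_k}$ subdivisions per edge). Three properties need verification. First, $H_k$ has maximum degree~$4$: subdivision replaces each edge by an internally-disjoint path whose internal vertices have degree~$2$, so $\Delta(H_k) = \Delta(G_k) = 4$. Second, $\stw(H_k) \leqslant 32 \cdot 21^3$ by a direct application of \cref{thm:subd} with parameters $n = n_k$, $m = m_k$, $d = 4$, since $4d+5 = 21$. Third, $G_k$ is a minor of $H_k$ (contracting each subdivided path back to a single edge recovers $G_k$), so by the minor-monotonicity of treewidth, $\tw(H_k) \geqslant \tw(G_k) = k$, which tends to infinity with $k$.

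Putting these together, the class $\mathcal{C} = \{H_k \mid k \geqslant 1\}$ has maximum degree uniformly bounded by~$4$, stretch-width uniformly bounded by the constant $32 \cdot 21^3 = 296352$, while its treewidth is unbounded. This establishes the corollary. There is no substantial obstacle here, as all the hard work has already been done in \cref{thm:subd}; the only point to check carefully is that subdividing edges cannot decrease treewidth, which follows from the standard fact that the base graph is recoverable as a minor of any subdivision.
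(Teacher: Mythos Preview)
Your proposal is correct and takes essentially the same approach as the paper: the paper also uses the $k^2 2^{2k(k-1)}$-subdivision of the $k\times k$ grid, applies \cref{thm:subd} with $d=4$ to obtain the same constant $32(4\cdot 4+5)^3 = 296352$, and notes that $\tw(\Gamma_k)=k$. Your write-up is slightly more detailed in justifying the treewidth lower bound via minor-monotonicity, but the argument is otherwise identical.
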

\begin{proof}
  Consider the family $\Gamma_1, \Gamma_2, \ldots$, where $\Gamma_k$ is the $k^2 2^{2k(k-1)}$-subdivision of the $k \times k$-grid, for every positive integer $k$.
  The graphs from this family have degree at most~4, and stretch-width at~most 296352, but unbounded treewidth since $\tw(\Gamma_k)=k$. 
\end{proof}

The above argument gives an example of $n$-vertex graphs with bounded degree and stretch-width, and treewidth $\Omega(\sqrt{\log n})$.
We can do better by picking the vertex ordering $\prec$, and the order on the edges (for the \itsubd) more carefully.
\begin{figure}[h!]
\centering
\begin{tikzpicture}[vertex/.style={fill,circle,inner sep=0.02cm}]
  \begin{scope}
    \def\s{16}
    \def\l{10}
    \draw[very thin] (0,0) -- (\l+\l/\s,0) ;
    \foreach \i in {1,...,\s}{
      \node[vertex] (\i) at (\l * \i/\s,0) {} ;
      \coordinate (c\i) at (\l * \i/\s + 0.5 * \l/\s,0) {} ;
    }
    \foreach \i/\j in {1/2,2/3,3/4, 5/6,6/7,7/8, 9/10,10/11,11/12, 13/14,14/15,15/16}{
      \draw[thick,blue] (\i) to [bend left=50] (\j) ;
    }
    \foreach \i/\j in {1/5,5/9,9/13, 2/6,6/10,10/14, 3/7,7/11,11/15, 4/8,8/12,12/16}{
      \draw[thick,green!60!black] (\i) to [bend left=50] (\j) ;
    }
  \end{scope}
\end{tikzpicture}
     \caption{The $4 \times 4$ grid ordered row by row, with the horizontal edges in blue, and vertical edges in green.}
    \label{subd:grid}
\end{figure}
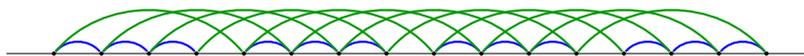
We simply order the grid row by row, and from left to right within each row; see~\cref{subd:grid}.
We perform the \itsubd of this ordered grid, with the following edge ordering.
First we flatten every horizontal edge (in blue), in any order.
When this is done, the total number of vertices has less than doubled.
Then we flatten every vertical edge (in green) from left to right.
It can be observed that, starting from the $k \times k$ grid, we now obtain an \itsubd with less than $2^{ck}$ vertices, for some constant $c$.
Thus, there are $n$-vertex graphs with bounded degree and stretch-width, and treewidth $\Omega(\log n)$.

\section{Classes with bounded $\Delta+\stw$ have logarithmic treewidth}\label{sec:tw-bound}

For any edge $e$ of an ordered graph $(G, \prec)$, we denote $e^i = \ ]L(e), R(e)[$, the \emph{interior of $e$}, and $e^o := [\leftarrow, L(e)[ \cup ]R(e), \rightarrow]$, the \emph{exterior of $e$}; note that $L(e)$ and $R(e)$ are neither part of $e^i$ nor of $e^o$.
The \emph{length} of $e$ according to $\prec$ is
$\ell (e, \prec) = R(e) - L(e)$. When $F$ is a set of edges we define $\ell (F, \prec)$ to be the maximum length of an edge of $F$.

We say that a set $C$ of vertices is a \emph{$c$-balanced separator of $G$} when there is a~$c$-balanced separation $(A, B)$ of $G$ such that $C = A \cap B$.
In an ordered graph $(G, \prec)$, a~set of vertices $C$ is called~\emph{left/right $c$-balanced separator} if there is a~$c$-balanced separation $(A, B)$, with $C = A \cap B$, $A$ contains the leftmost $c \cdot n$ vertices, and $B$ contains the rightmost $c \cdot n$ vertices.
Given a subset $U$ of vertices of $(G, \prec)$, we define the ordered induced subgraph $(G, \prec)[U]$ as the subgraph obtained from $G$ by removing the vertices not in $U$, and restraining $\prec$ to $U$. When $\prec$ is clear from the context, we use the notation $G[U]$.

We start with an observation on paths going from the interior to the exterior of an edge.
\begin{observation}\label{edge separation}
For every ordered graph $G$ and for every edge $e \in E(G)$, any path $P$ starting in $e^i$ and ending in $e^o$ contains an endpoint of $e$, or an edge crossing~$e$.
\end{observation}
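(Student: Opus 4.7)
The plan is to proceed by contrapositive-style reasoning on the vertices traversed by the path. Write $P = v_0, v_1, \ldots, v_k$ with $v_0 \in e^i$ and $v_k \in e^o$, and suppose $P$ avoids both endpoints $L(e)$ and $R(e)$ of $e$; it suffices to exhibit an edge of $P$ that crosses~$e$.

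Under this assumption, every vertex of $P$ lies in the disjoint union $e^i \cup e^o$. Since $v_0 \in e^i$ and $v_k \in e^o$, I would take the smallest index $i \in [k]$ such that $v_i \in e^o$, so that $v_{i-1} \in e^i$. The edge $f = v_{i-1}v_i \in E(G)$ then has one endpoint strictly inside the open interval $]L(e), R(e)[$ and the other strictly outside the closed interval $[L(e), R(e)]$.

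The remaining step is a small case distinction matching the definition of crossing recalled at the beginning of~\cref{sec:overlap}: either $v_i \prec L(e)$, in which case $L(f) = v_i \prec L(e) \prec v_{i-1} = R(f) \prec R(e)$, or $R(e) \prec v_i$, in which case $L(e) \prec v_{i-1} = L(f) \prec R(e) \prec v_i = R(f)$. In both cases $f \times e$ holds, so $P$ contains an edge crossing~$e$, as required.

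There is no real obstacle here: the statement is essentially the observation that the interior and exterior of~$e$ are separated by $\{L(e), R(e)\}$ in the linear order, together with the fact that the only way a path can jump over $L(e)$ or $R(e)$ without passing through them is via an edge that straddles one of these endpoints, which by definition crosses~$e$.
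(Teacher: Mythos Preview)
Your proof is correct and follows essentially the same approach as the paper's: both locate the first place along the path where it leaves $e^i$, and argue that the corresponding edge crosses~$e$ unless the path has already hit an endpoint of~$e$. Your version is slightly more explicit in verifying the crossing condition via the two-case distinction, which the paper leaves to the reader; the only cosmetic issue is your reuse of the letter $i$ as a path index while $e^i$ already uses it as a superscript for ``interior.''
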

\begin{proof}
  Let $e=uv$ be an edge of $G$, and $P = v_1, \dots v_k$, a path with $v_1 \in e^i$ and $v_k \in e^o$.
  Let~$j$ be the smallest index such that $v_j \notin e^i$.
  As $V(G) = \{u,v\} \cup e^i \cup  e^o$, we either have $v_j \in \{u,v\}$, or $v_j \in e^o$ and $v_{j-1} \in e^i$, hence $v_{j-1}v_j$ is an edge of $P$ crossing~$e$.
\end{proof}

To simplify the notations, if the vertices of $(G, \prec)$ are $v_1 \prec \dots \prec v_n$, we will write $G\langle i, j \rangle$ instead of $G[[v_i, v_j]]$ and $\langle i, j \rangle$ instead of $[v_i, v_j]$. 

We say that a set $S$ of edges of $(G, \prec)$ is a \emph{rainbow} if for every pair $e, f$ of $S$, $e^i \subset f^i$ or $f^i \subset e^i$.
Notice that a rainbow induces an independent set in $\Ov(G, \prec)$.
When $S$ contains $t$ edges we say that $S$ is a \emph{$t$-rainbow}, or a rainbow of order $t$.
The following is an application of Dilworth's theorem on permutation graphs (or the Erd\H{o}s-Szekeres theorem); see for instance \cite[Lemma 2.1]{Cerny-07}.

\begin{lemma}[\cite{Cerny-07}]\label{poly Ramsey}
Let $(G, \prec)$ be an ordered graph, such that $\Ov(G,\prec)$ does not contain a~clique on $t$ vertices.
Then, for every vertex $v$ of $V(G)$, for every set $F$ of edges from $[\leftarrow, v[$ to $]v, \rightarrow]$ we have $|F| \leqslant kt$ where $k$ is the maximum order of a rainbow in $F$.
\end{lemma}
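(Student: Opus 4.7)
The plan is to set up a natural partial order on the edges of $F$ and then invoke a Mirsky-style argument (essentially Dilworth on permutations / Erd\H{o}s--Szekeres). Define, for $e, f \in F$, the relation $e \leqslant f$ if and only if $e^i \subseteq f^i$. This is a partial order, and by definition a rainbow in $F$ is exactly a chain with respect to $\leqslant$.

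The key structural step is the following dichotomy: any two distinct edges $e, f \in F$ are either comparable under $\leqslant$ (i.e.\ form a rainbow pair) or crossing in $\Ov(G,\prec)$. To see this, write $e = a_e b_e$ and $f = a_f b_f$ with $a_e, a_f \prec v \prec b_e, b_f$ (possible since both edges cross $v$). If $e$ and $f$ share an endpoint, the simplicity of $G$ forces strict nesting of the interiors, so the pair is comparable. Otherwise all four endpoints are distinct and, up to swapping the roles of $e$ and $f$, we may assume $a_e \prec a_f$; then $b_e \prec b_f$ gives the crossing configuration $a_e \prec a_f \prec v \prec b_e \prec b_f$, and $b_f \prec b_e$ gives the nesting $f^i \subset e^i$. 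In every case the dichotomy holds.

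Now I would apply Mirsky's theorem to the poset $(F, \leqslant)$: the minimum number of antichains needed to cover $F$ equals the length of a longest chain, which by definition of $k$ is at most~$k$. Pick such a cover $F = A_1 \cup \cdots \cup A_k$. By the dichotomy above, any two edges in a single $A_i$ are incomparable, hence crossing in $\Ov(G,\prec)$, so $A_i$ is a clique of $\Ov(G,\prec)$. The hypothesis that $\Ov(G,\prec)$ contains no $K_t$ then forces $|A_i| \leqslant t-1$, giving $|F| \leqslant k(t-1) \leqslant kt$.

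The main (minor) subtlety is the shared-endpoint case in the dichotomy, which at first glance is neither a crossing in $\Ov$ (since crossing edges cannot share an endpoint) nor obviously a rainbow; one has to observe that in a simple graph a shared endpoint forces strict containment of interiors, so such pairs still fall on the ``chain'' side of the dichotomy. Once this is handled the rest is a direct application of Mirsky's theorem.
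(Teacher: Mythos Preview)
Your proof is correct and matches the approach the paper indicates (it does not give a proof, only notes that the lemma is ``an application of Dilworth's theorem on permutation graphs (or the Erd\H{o}s--Szekeres theorem)'' and cites \cite{Cerny-07}). Your use of Mirsky's theorem is the dual formulation of the same idea: once the dichotomy ``incomparable under nesting $=$ crossing'' is established for edges over $v$, covering by antichains (Mirsky) or by chains (Dilworth) both yield $|F|\le k(t-1)\le kt$.
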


For any rainbow $S$, we set $S^i:= \bigcup_{e \in S} e^i$, that is, $S^i$ is the interior of the edge of $S$ with maximal interior.
If $S$ is the empty set, by convention, $S^i$ is also empty.
A \emph{rainbow over $v$} is a rainbow $S$ contained in the set of edges from $[\leftarrow, v[$ to $]v, \rightarrow]$.
A \emph{maximum rainbow over $v$} is a rainbow of maximum cardinality among the rainbows over $v$.

In the next lemma, we find, by induction on the minimum length of a~maximum rainbow over some vertex, a small (but not always balanced) left/right separator.

\begin{lemma}\label{separation main lemma}
Let $(G, \prec)$ be an ordered graph such that $\Ov(G,\prec)$ does not contain a~$K_{t, t}$ subgraph.
Then, if $S$ is a maximum rainbow over $v \in V(G)$, there is a vertex $x \in S^i \cup \{v\}$ and a~set~$U$ that separates $[\leftarrow, x[$ from $]x, \rightarrow]$ with $|U| = O(t^2  \log \ell(S, \prec))$.
\end{lemma}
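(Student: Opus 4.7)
The proof proceeds by strong induction on $\ell := \ell(S, \prec)$. The base case $\ell \leq 2$ is immediate: $|S| \leq 1$, so by Lemma~\ref{poly Ramsey} at most~$t$ edges cross $v$, and taking $x := v$ with $U := \{v\} \cup \{R(e) : e~\crosses~v\}$ gives $|U| \leq t + 1$, which is $O(t^2 \log \ell)$.

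For the inductive step, distinguish according to~$|S|$. If $|S| \leq C t \log \ell$ for a suitable absolute constant~$C$ fixed later, Lemma~\ref{poly Ramsey} bounds the number of edges crossing~$v$ by $|S|\cdot t \leq Ct^2 \log \ell$, and the same construction with $x := v$ works. Otherwise $|S| > C t \log \ell$. Write $e_1 = (a, b)$ for the outermost edge of~$S$ and let $m$ be the median vertex of~$e_1^i$; assume WLOG $v \preceq m$. The idea is to recurse on the interval $[a, m]$: let $G' := G[\{u : a \preceq u \preceq m\}]$ (whose overlap graph, being a subgraph of $\Ov(G,\prec)$, is still $K_{t,t}$-free) and let~$\tilde S$ be a maximum rainbow over~$v$ in~$G'$, of length at most $m - a \leq \lceil \ell / 2 \rceil$. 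The induction hypothesis produces $\tilde x \in \tilde S^i \cup \{v\}$ and $\tilde U \subseteq V(G')$ separating $[\leftarrow, \tilde x[$ from $]\tilde x, \rightarrow]$ in~$G'$, with $|\tilde U| \leq Ct^2 \log(\ell/2) = Ct^2 \log \ell - Ct^2$.

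To lift $\tilde U$ to a separator in~$G$, one must also block every path that leaves the interval $[a, m]$. Such a path uses an edge incident to~$a$ or~$m$, or an edge crossing~$a$ or~$m$. Setting $U := \tilde U \cup \{a, m\} \cup \{L(e), R(e) : e~\crosses~a~\text{or}~m\}$ disconnects $[\leftarrow, a - 1]$ from $[a, \rightarrow]$ and symmetrically around~$m$, so the only surviving paths lie in $[a + 1, m - 1]$, where they are blocked by~$\tilde U$. Taking $x := \tilde x$ yields a separator in~$G$ of size at most $|\tilde U| + 2 + 2(r(a) + r(m)) \cdot t$, where $r(u)$ denotes the maximum rainbow over~$u$ (again by Lemma~\ref{poly Ramsey}).

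\textbf{Main obstacle.} The delicate point is to ensure the overhead $2 + 2(r(a) + r(m))t$ is only $O(t^2)$, so that it fits in the slack $Ct^2$ left between $Ct^2 \log \ell$ and the inductive $Ct^2 \log(\ell/2)$. A trivial bound $r(u) \leq |S|$ is useless in the hard case. I expect a preliminary dyadic pigeonhole on the edge lengths of~$S$ to be needed: from $|S| > C t \log \ell$ edges with lengths in $[2, \ell]$, extract a nested sub-rainbow $f_1 \supsetneq \cdots \supsetneq f_p$ of size $p \geq Ct$ whose lengths all lie in a common dyadic bucket $(\ell', 2\ell']$. The left endpoints of $f_1, \ldots, f_p$ then fit in an interval of length at most~$\ell'$, and similarly for the right endpoints, with $\ell' \leq \ell/2$. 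Replacing~$e_1$ by these bunched edges and picking~$a$ and~$m$ inside those narrow windows, a~rainbow of size~$> t$ over~$m$ (or~$a$) would, together with~$t$ of the~$f_i$'s, realise a~$K_{t,t}$ in~$\Ov(G, \prec)$, contradicting the hypothesis. Turning this $K_{t,t}$-forcing sketch into a clean~$r(a) + r(m) = O(t)$ bound that propagates correctly through the recursion is the technical heart of the proof.
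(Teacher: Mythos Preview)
Your inductive scaffold is right, and you have correctly located the crux: each recursion step may only cost $O(t^2)$ extra vertices. But the mechanism you sketch for paying that cost does not work as stated. You claim that, after the dyadic pigeonhole produces nested edges $f_1\supset\cdots\supset f_p$ with $p\ge Ct$, choosing $a$ among their left endpoints forces any rainbow of size $>t$ over $a$ to cross $t$ of the $f_i$, yielding a~$K_{t,t}$. This is false: a rainbow over $a$ can sit entirely inside $f_t^i$ (take $L(f_t)\prec L(g_j)\prec a\prec R(g_j)\prec R(f_t)$ for all $j$), and then every $g_j$ is \emph{nested} in every $f_i$, so nothing crosses. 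Such edges still leave the recursion window $[a,m]$ and must be blocked, so $r(a)$ is not controlled by your argument. The same obstruction hits $r(m)$. Your median-based halving therefore has no budget.

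The paper avoids this entirely, and in fact needs neither the median nor the dyadic pigeonhole. The threshold is a constant ($|S|\le 3t$ versus $|S|>3t$), not $Ct\log\ell$. In the large case one takes the \emph{first} $3t$ edges $e_1\supset\cdots\supset e_{3t}$ of $S$ and removes, once and for all, the left endpoints of two small sets: the set $X$ of edges going from $[L(e_1),L(e_{3t})]$ to $[R(e_{3t}),R(e_1)]$ (at most $6t^2$ by \cref{poly Ramsey}, since a rainbow in $X$ has size $\le 3t$), and three sets $Y_1,Y_2,Y_3$ of edges jumping from $e_{jt}^o$ into $e_{(j+1)t}^i$ (each of size $\le t$, directly by $K_{t,t}$-freeness, as every such edge crosses $e_{jt+1},\ldots,e_{(j+1)t}$). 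After this $O(t^2)$-vertex deletion, \emph{every} remaining edge over any $x\in\,]L(e_t),L(e_{2t})]$ is confined to $[L(e_1),L(e_{3t})]$, and symmetrically every edge over any $y\in[R(e_{2t}),R(e_t)[$ is confined to $[R(e_{3t}),R(e_1)]$. These two windows are disjoint intervals inside $e_1$, so the maximum-rainbow lengths over $x$ and over $y$ sum to at most $\ell$; recurse on whichever is at most $\ell/2$. Thus the halving comes from comparing the left and right pockets, not from a median, and the $O(t^2)$ overhead is paid \emph{before} recursing rather than by bounding $r(a),r(m)$ afterward.
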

\begin{proof}
  First, note that we write \emph{$S^i \cup \{v\}$} instead of simply \emph{$S^i$} in the event that $S=\emptyset$ (in which case $v$ is not contained in $S^i$).
  We define $h(l) = 2^{\lceil \log (l + 1) \rceil + 1}$ and $g(t) = 6t^2 + 3t$.
  Thus $g(t) = O(t^2)$ and $h(l) = O(l)$.

We prove by induction on $l$ that for every ordered graph $(G, \prec)$ for which $\Ov(G,\prec)$ has no $K_{t, t}$ subgraph, and for every vertex $v \in V(G)$, if $S$ is a~maximum rainbow over~$v$ such that $\ell(S, \prec) \leqslant l$, then there exists a vertex $x \in S^i \cup \{v\}$ and a set $U$ that separates $[\leftarrow, x[$ from $]x, \rightarrow]$ with $|U| \leqslant g(t) \log h(l)$.

If $l = 0$, there is no edge over~$v$.
Hence, $S$ is empty and $\{v\}$ separates $[\leftarrow, v[$ from $]v, \rightarrow]$. 
Assume that the property holds for any $k < l$.
We show the property also holds for~$l$.
Let~$v$ be a vertex of $G$, and $S$ be a maximum rainbow over $v$ such that $\ell(S, \prec) \leq l$.
Let $F$ be the set of edges from $[\leftarrow, v[$ to $]v, \rightarrow]$.
We make the following case distinction: the rainbow $S$ contains either at most $3t$ edges or more than $3t$ edges.

In the former case, \Cref{poly Ramsey} ensures that $|F| \leqslant 2t |S| \leqslant 6t^2$.
Indeed, $S$ is a rainbow of $F$ of maximum cardinality, and $\Ov(G, \prec)$ does not contain any $K_{t, t}$ subgraph, hence any $K_{2t}$.
Thus, consider $L_F := \{L(f) \ | \ f \in  F \} \cup \{v \}$.
The set $L_F$ separates $[\leftarrow, v[$ from $]v, \rightarrow]$ since any edge with exactly one endpoint in $[\leftarrow, v[$ has an endpoint in $L_F$.
As $|L_F| \leqslant 6 t^2 +1 \leqslant g(t) \leqslant g(t)\log h(l) $, the property holds.

We now handle the case when $|S|>3t$.
Let $\{e_1, \ldots, e_k\}$ be the edges in $S$ with $e^i_{j+1} \subsetneq e^i_{j}$ for every $j \in [k-1]$ (we are sorting the edges from longest to smallest).
Let $X$ the set of edges from $[L(e_1), L(e_{3t})]$ to $[R(e_1), R(e_{3t})]$.
Set $X$ cannot contain a rainbow of size $3t + 1$, otherwise $S$ would not be maximum.
As $X$ does not contain a $K_{2t}$, \Cref{poly Ramsey} implies that $|X| \leqslant 6t^2$.
Let $Y_1$ be the set of edges from $e_{1}^o$ to $e_{t}^i$,
$Y_2$ be the set of edges from $e_{t}^o$ to $e_{2t}^i$, and
$Y_3$ be the set of edges from $e_{2t}^o$ to $e_{3t}^i$; see~\cref{separation lemma help}.

\begin{figure}[h!]
\begin{tikzpicture}[scale = 0.375]
\tikzset{
    dot/.style={circle,inner sep=1pt, minimum size = 1pt, fill,label={[xshift=0.cm, yshift = -0.6cm] \footnotesize{#1}},name=#1},
}

\filldraw[opacity = 0.7, fill=red!80!black]
(7.5, 0) -- (0, 0) to[bend left = 40] (35, 0) -- (24, 0) to[bend left = -40] (7.5,0);

\foreach \i/\j in {0.0/35.0,  0.5/34.5,  0.5/34.0,  0.5/33.5,  1.0/33.0,  1.5/32.5,  1.5/32.0,  1.5/31.5,  1.5/31.0,  2.0/30.5,  2.0/30.5,  2.5/30.0,  3.0/29.5,  3.5/29.5,  4.0/29.5,  4.0/29.5,  4.5/29.0,  4.5/29.0,  4.5/28.5,  5.0/28.0,  5.0/27.5,  5.5/27.0,  5.5/27.0,  5.5/26.5,  6.0/26.0,  6.0/25.5,  6.5/25.0,  7.0/25.0,  7.0/24.5,  7.5/24.0} {
	\draw (\i, 0) to [bend left = 40] (\j, 0);
}

\filldraw[opacity = 0.5, fill=green!80!black]
(7.5, 0) -- (24, 0) to[bend left = 40] (28, 0) -- (35, 0) to[bend left = -40] (7.5,0);
\filldraw[opacity = 0.5, fill=green!80!black]
(7.5, 0) -- (24, 0) to[bend left = -40] (0, 0) -- (5, 0) to[bend left = 40] (7.5,0);

\foreach \i/\j in {0.0/35.0,  2.0/30.5, 5.0/28.0,  7.5/24.0} {
	\draw[very thick, blue] (\i, 0) to [bend left = 40] (\j, 0);
}
\node[dot] at (18,0) {} ;
\node at (18,-0.65) {$v$} ;
\node[dot= $L(e_1)$] (0) at (0,0) {} ;
\node[dot= $L(e_t)$] (0) at (2.0,0) {} ;
\node[dot= $L(e_{2t})$] (0) at (5,0) {} ;
\node[dot= $L(e_{3t})$] (0) at (7.5,0) {} ;
\node[dot= $R(e_1)$] (0) at (35,0) {} ;
\node[dot] (0) at (30.5,0) {} ;
\node[text width=3cm] at (34.5, -0.6) {\footnotesize{$R(e_t)$}};
\node[dot= $R(e_{2t})$] (0) at (28,0) {} ;
\node[dot= $R(e_{3t})$] (0) at (24,0) {} ;
\node[dot] at (3.5,0) {} ;
\node at (3.5,-0.65) {$x$};
\node[dot] (0) at (30,0) {} ;
\node at (30,-0.65) {$y$} ;

\draw[very thick, red] (18, 6.5) -- (25, 7.5);
\node[text width=3cm] at (29, 7.5) {$X$};

\draw[very thick, green] (18, 1) -- (15, -0.5);
\node[text width=3cm] at (17.8, -0.7) {$Y_3$};
\end{tikzpicture}
\caption{The black (or blue) edges are $e_1, e_2, \ldots, e_{3t} \in S$, and the edges $e_1, e_t, e_{2t}, e_{3t}$ are in thick blue.
  The red arch represents the edge set $X$, while the green shape symbolizes the edge set~$Y_3$.
  We will now remove the left endpoints of these edges sets (as well as those of $Y_1$ and $Y_2$), define the vertices $x \in [L(e_t),L(e_{2t})]$ and $y \in [R(e_{2t}),R(e_t)]$, and apply the induction on the \emph{shorter} maximum rainbow over $x$ or over $y$.
}
\label{separation lemma help}
\end{figure}

We have $|Y_1| \leqslant t$, as each edge of $Y_1$ crosses $e_{1}, \ldots, e_{t}$.
Similarly, $|Y_2| \leqslant t$ and $|Y_3| \leqslant t$.
Let $L_{X \cup Y} := \{L(f) \ | \ f \in X \cup Y_1 \cup Y_2 \cup Y_3\}$.
Observe that $|L_{X \cup Y}| \leqslant 6 t^2 + 3t = g(t)$.

Let $G' = G - L_{X \cup Y}$.
In $G'$, for any $w \in ]L(e_{t}), L(e_{2t})]$, the edges over $w$ have both endpoints in $[L(e_1), L(e_{3t})]$.
Indeed let $f$ be an edge over $w$ and assume that $f$ has an endpoint outside of $[L(e_1), L(e_{3t})]$.
By assumption, this endpoint is in $e_1^o \cup e_{3t}^i \cup [R(e_{3t}), R(e_1)]$ (the set of remaining vertices).
As $f$ is over $w$, the left endpoint of $f$ is either in $[\leftarrow, e_1]$ or in $[e_1, w[$, and its right endpoint is in $[w, \rightarrow ]$.
Suppose the left endpoint of $f$ is in $[\leftarrow, e_1]$.
Then $f$ is in $Y_1$ when its right endpoint is in $e_t^i$, and in $X$, otherwise.
Assume now that the left endpoint of $f$ is in $[\leftarrow, e_1]$, then its right endpoint is in $[L(e_{3t}), \rightarrow]$ (since by assumption, $f$ is not in $[L(e_1, L(e_{3t})]$), and $f$ is an edge of $Y_3$ when its right endpoint is in $e_{3t}^i$, and an edge of $X$ when its right endpoint is in $[R(e_{3t}), \rightarrow]$.

In any case, we reach the contradiction that $f$ is in $Y_1 \cup Y_3 \cup X$.
Note that if there is no vertex of $G'$ in $]L(e_{t}), L(e_{2t})]$, $L_{X \cup Y}$ separates $G'$: As we removed $X, Y_1, Y_2$ and $Y_3$, there is no edges between $[\leftarrow, L(e_{t})]$ and $]L(e_{2t}), \rightarrow]$.
The same holds symmetrically when there is no vertex of $G'$ in~$[R(e_{2t}), R(e_{t})[$.

Thus we consider two vertices $x$ and $y$ respectively in (the non-empty sets) $]L(e_{t}), L(e_{2t})] \cap V(G')$ and $[R(e_{2t}), R(e_{t})[ \cap V(G')$.
In $G'$, edges over $x$ are contained in $[L(e_1), L(e_{3t})]$, and edges over $y$, in $[R(e_{3t}), R(e_1)]$.
Hence no edge is both over $x$ and $y$.       
Consider $S_x$ and $S_y$ two maximum rainbows over $x$ and $y$, respectively.
We have $\ell(S_x, \prec) + \ell(S_y, \prec) \leqslant \ell(S, \prec)$.
Assume, without loss of generality, that $\ell(S_x, \prec) \leqslant \ell(S_y, \prec)$.
We have $\ell(S_x, \prec) \leqslant \lfloor l/2 \rfloor$.
By the induction hypothesis, there is a vertex $x'$ in $S_x^i \cup \{x\}$ and a set $U$ that separates $[\leftarrow, x'[$ from $]x', \rightarrow]$ in $G'$ with $|U| \leqslant g(t) \log h(\lfloor l/2 \rfloor)$.

Thus $U \cup L_{X\cup Y}$ separates $[\leftarrow, x'[$ from $]x', \rightarrow]$ in $G$, and $|U \cup L_{X\cup Y}| \leqslant g(t)\log h(\lfloor l/2 \rfloor)+g(t)=g(t)(\lceil \log (\lfloor l/2 \rfloor+1) \rceil + 2)\leq g(t)(\lceil \log (l+1) \rceil + 1)= g(t) \log h(l)$.
\end{proof}

Leveraging \cref{separation main lemma}, we now show how to obtain a~relatively small \emph{balanced} separator.

\begin{theorem}\label{separation Thm}
For any ordered graph $(G, \prec)$, if $\Ov(G,\prec)$ does not contain any $K_{t, t}$ subgraph, then $G$ contains a 1/12-balanced separator of order at most $\gamma t^2 \log n$, for some constant $\gamma$.
\end{theorem}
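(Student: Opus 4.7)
The plan is to apply Lemma \ref{separation main lemma} iteratively, starting at the middle vertex of $(G, \prec)$. First I would apply Lemma \ref{separation main lemma} to $v := v_{\lceil n/2 \rceil}$ with a maximum rainbow $S$ over $v$, obtaining a cut $x \in S^i \cup \{v\}$ and a separator $U$ of size $O(t^2 \log \ell(S, \prec)) = O(t^2 \log n)$ separating $[\leftarrow, x[$ from $]x, \rightarrow]$.

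If the cut already gives a $1/12$-balanced separation of $G$ (both $|[\leftarrow, x[|$ and $|]x, \rightarrow]|$ are at least $n/12$), then $U \cup \{x\}$ is the desired separator and we are done. Otherwise, without loss of generality $|]x, \rightarrow]| < n/12$, so the subgraph $G' := G[[\leftarrow, x-1]]$ has more than $11n/12$ vertices. Since $\Ov(G', \prec)$ is a subgraph of $\Ov(G, \prec)$ (the overlap condition is inherited when restricting to an induced subgraph with the inherited order), $G'$ still has no $K_{t,t}$ in its overlap graph, so we can recurse: find a $1/12$-balanced separator $U'$ for $G'$ by the same procedure, and output $U \cup \{x\} \cup U'$. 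Correctness is straightforward: after removing this set from $G$ we are left with the small right chunk (size $< n/12$) together with the components of $G' \setminus U'$ (each of size at most $(11/12)|G'| \leq (11/12)(n-1) < 11n/12$), so no component exceeds $11n/12$ vertices.

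The hard part will be bounding the total separator size to $O(t^2 \log n)$ rather than the $O(t^2 \log^2 n)$ (or worse) that a naive recursion analysis gives. The observation I would exploit is that the cut $x$ always sits inside the interior $]L(e_1), R(e_1)[$ of the longest edge $e_1$ of the current maximum rainbow, and restricting to $G'$ removes its right endpoint $R(e_1)$. Iterating, each recursive call strips away the current longest rainbow edge; I expect the rainbow lengths $\ell(S_i, \prec)$ at successive levels to decrease at a sufficiently fast rate (geometrically, up to a $t$-dependent factor) so that $\sum_i O(t^2 \log \ell(S_i, \prec))$ telescopes to $O(t^2 \log n)$. Making this geometric decrease rigorous, and in particular controlling how the middle vertex of each $G_i$ interacts with the rainbows restricted to $G_i$, is where I expect the bulk of the technical work to lie.
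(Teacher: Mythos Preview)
Your recursion is correct for producing \emph{some} separator, but the size bound has a genuine gap at exactly the point you flag. The hope that $\ell(S_i,\prec)$ decreases geometrically rests on the observation that passing from $G$ to $G'=G[[\leftarrow,x-1]]$ destroys the edge $e_1$ of the current rainbow, since $R(e_1)>x$. But this destroys only one edge of one particular maximum rainbow over one particular vertex. When you recurse you choose a new middle vertex $v'$ of $G'$ and take a new maximum rainbow $S'$ over it; nothing prevents $G'$ from containing, say, an edge from position $2$ to position $x-1$ (which need not cross anything, so the $K_{t,t}$-freeness hypothesis does not exclude it), making $\ell(S',\prec)$ essentially $|V(G')|$. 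Worse, you have no control on how many vertices a single unbalanced call removes: the lemma only guarantees $x\in S^i$, so $]x,\rightarrow]$ may contain as little as the single vertex $R(e_1)$. Thus both the number of recursive calls and the individual $\log\ell(S_i,\prec)$ terms are unbounded by your argument, and the total could be far larger than $O(t^2\log n)$.

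The paper avoids recursion altogether. It splits into two cases according to whether some vertex admits a maximum rainbow containing at least $3t$ edges of \emph{medium} length, i.e.\ length in $[n/12,11n/12]$. If yes, those $3t$ nested medium edges are used exactly as in the inductive step of Lemma~\ref{separation main lemma} (the sets $X,Y_1,Y_2,Y_3$) to carve out two narrow windows, one on the left and one on the right, whose positions are pinned down by the length constraint; applying Lemma~\ref{separation main lemma} once in each window yields cuts $u_x,u_y$ with $n/12\le |\,]u_x,u_y[\,|\le 11n/12$. If no, then over the vertices at positions $\lfloor n/3\rfloor$ and $\lfloor 2n/3\rfloor$ there are (by Lemma~\ref{poly Ramsey}) only $O(t^2)$ medium-length edges; after deleting their endpoints, every surviving edge over these vertices in the appropriately truncated graph has length at most $n/12$, so Lemma~\ref{separation main lemma} applied once at each yields cuts within distance $n/12$ of $n/3$ and $2n/3$. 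Either way the lemma is invoked exactly twice, for a total separator of size $O(t^2\log n)$ that is automatically balanced. The case split on medium-length rainbow edges is the idea your plan is missing.
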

\begin{proof}
We first examine the case when there is a vertex $v \in V(G)$ and a maximum rainbow $S$ over $x$ such that at least $3t$ edges of $S$ have length in $[n/12, 11n/12]$.
We denote by $f_1, \ldots, f_{3t}$ the $3t$ longest edges of $S$ of length at most $11n/12$, still ordered from longest ($f_1$) to shortest ($f_{3t}$).
As in the proof of \Cref{separation main lemma}, we consider $X$ the edges from $[L(f_1), L(f_{3t})]$ to $[R(f_{3t}), R(f_1)]$, and \Cref{poly Ramsey} ensures that $|X| \leqslant 6t^2$.
Let $Y_1$ be the set of edges from $f_{1}^o$ to $f_{t}^i$, $Y_2$ be the set of edges from $f_{t}^o$ to $f_{2t}^i$, and $Y_3$ be the set of edges from $f_{2t}^o$ to $f_{3t}^i$.
We again have that $Y_1, Y_2$ and $Y_3$ are of size at~most~$t$, otherwise $\Ov(G, \prec)$ would contain a $K_{t, t}$ subgraph.
Let $L_{X \cup Y} = \{L(z) \ | \ z \in X \cup Y_1 \cup Y_2 \cup Y_3 \}$.

Let $G' = G - L_{X \cup Y}$.
As observed in~\Cref{separation main lemma}, the separator is only simpler when $]L(f_t), L(f_{2t})] \cap V(G')$ or $[R(f_{2t}), R(f_{t})[ \cap V(G')$ are empty.
Hence we only deal with the case when both sets are non-empty (in the other cases, a~subset of our eventual separator works).
Thus, let $x \in ]L(f_t), L(f_{2t})] \cap V(G')$ and $y \in [R(f_{2t}), R(f_{t})[ \cap V(G')$.
    
By construction of $G'$, any edge over $x$ is contained in $[L(f_1), L(f_{3t})]$, or is going from $[\leftarrow, L(f_1)[$ to $]R(f_1), \rightarrow]$.
Thus, by applying \Cref{separation main lemma} on $G_x = G'[\leftarrow, R(f_1)]$ and $x$, we find a set $U_x$ separating $[\leftarrow, u_x]$ from $[u_x, \rightarrow]$, where $u_x$ is a vertex in $[L(f_1), L(f_{3t})]$.
Similarly, considering $G_y = G[L(f_1), \rightarrow]$ and $y$, we find a set $U_y$ separating $[\leftarrow, u_y]$ from $[u_y, \rightarrow]$, with $u_y \in [R(f_{3t}), R(f_1)]$.
The set $U_x \cup U_y$ separates $]u_x, u_y[$ from $[\leftarrow, u_x[ \cup [u_y, \rightarrow]$ in $G'$.
As the length between $u_x$ and $u_y$ is between $n/12$ and $11n/12$, $L_{X \cup Y} \cup U_x \cup U_y$ is a $1/12$-balanced separator of $G$ of size $6t^2 + 3t + 2 \cdot O(t^2 \log n) = O(t^2 \log n)$.

Now, we deal with the case when for any vertex $v$ of $G$, the number of edges of length in $[n/12, 11n/12]$ in a (maximum) rainbow over $v$ is less than~$3t$.
Let $v_1 \prec v_2 \prec \ldots \prec v_n$ be the vertices of $G$, and say, $v=v_i$.
In particular, the set $M_v$ of edges over $v$ going from $\langle n/12, i - n/12 \rangle$ to $\langle i+n/12, 11n/12 \rangle$ is of size at most $6t^2$ by \Cref{poly Ramsey}.
Set $x = v_{\lfloor n/3 \rfloor}$ and $y = v_{\lfloor2n/3 \rfloor}$.
Let $A = \{L(e) \ | \ e \in M_x \cup M_y\}$ and let $H = G - A$.
Consider $H_x = H\langle 1, 11n/12 \rangle$, and $H_y = H \langle n/12, n \rangle$.
Then the length of a maximum rainbow over $x$ (resp.~$y$) in $H_x$ (resp.~$H_y$) is at~most $n/12$.
Indeed, any edge over $x$ (resp. $y$) of length more than $11n/12$ is not contained in $H_x$ (resp. $H_y$), and any edge over~$x$ (resp.~$y$) of length in $[n/12,11n/12]$ has been deleted when removing $A$.

Hence we can apply \Cref{separation main lemma} on $H_x$ and $x$, and on $H_y$ over $y$.
This yields two sets $U_x, U_y$ of size $O(t^2 \log n)$ and two vertices $w_x, w_y$ such that the indices of $w_x$ and $x$ (resp. $w_y$ and $y$) are at distance at most $n/12$, and such that $U_x$ separates $[\leftarrow, w_x[$ from $]w_x, \rightarrow]$ in $H_x$, while $U_y$ separates $[\leftarrow, w_y[$ from $]w_y, \rightarrow]$ in $H_y$.
The set $A \cup U_x \cup U_y$ then separates $]w_x, w_y[$ from $[\leftarrow,w_u[ \cup ]w_y,\rightarrow]$ in $G$, and $|A \cup U_x \cup U_y| = O(t^2 \log n)$.
Notice that $]w_x, w_y[$ contains at least $\lfloor n/3 \rfloor- 2 n/12 \geq \lfloor n/6 \rfloor$ vertices because there is $ \lfloor n/3 \rfloor$ vertices between $x$ and $y$, and $w_x$ is separated from $x$ by at most $n/12$ vertices (and symmetrically for $y$).
In addition, there is at most $n/3 + 2n/12 = n/2$ vertices in $]w_x, w_y[$.
Thus $A \cup U_x \cup U_y$ is a $1/6$-balanced separator.
\end{proof}

Pipelining~\cref{bdd deg-bdd stw impl bdd Ktt} and the previous theorem, we get small balanced separators for graphs of bounded degree and stretch-width. 

\begin{theorem}\label{thm:sn-bound}
Let $G$ be any graph on $n$ vertices, such that $\Delta(G) \leqslant d$ and $\stw(G) \leqslant t$.
Then $G$ contains a $1/12$-balanced separator of size at most $\gamma (4td^2)^2 \log n$, for a constant $\gamma$.
\end{theorem}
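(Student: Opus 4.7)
The theorem is a direct pipelining of the two earlier results, so the plan is essentially bookkeeping. First, I would invoke the definition of stretch-width: since $\stw(G) \leqslant t$, there exists a total order $\prec$ on $V(G)$ such that $\stw(G, \prec) \leqslant t$. Fix such an order and consider the ordered graph $(G, \prec)$.

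Next, I would apply \Cref{bdd deg-bdd stw impl bdd Ktt} to $(G, \prec)$. That lemma's contrapositive states: if $\Delta(G) \leqslant d$ and $\stw(G, \prec) \leqslant t$, then $\Ov(G, \prec)$ does not contain $K_{N, N}$ as a subgraph, with $N = 4td^2$. Both hypotheses are in hand, so we obtain that the overlap graph of our fixed ordered graph is $K_{4td^2, 4td^2}$-free.

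Then I would feed this conclusion directly into \Cref{separation Thm} with $t$ replaced by $N = 4td^2$: since $\Ov(G, \prec)$ has no $K_{N, N}$ subgraph, $G$ admits a $1/12$-balanced separator of order at most $\gamma N^2 \log n = \gamma (4td^2)^2 \log n$, for the absolute constant $\gamma$ supplied by that theorem. This is exactly the desired bound.

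There is no genuine obstacle here; the only thing to watch is to not double-count constants (the $\gamma$ in the statement is the same $\gamma$ produced by \Cref{separation Thm}) and to remember that the existence of the witnessing order $\prec$ is part of the definition of $\stw$, so no search or construction is needed at this stage.
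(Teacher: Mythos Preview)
Your proposal is correct and matches the paper's proof essentially line for line: pick an order $\prec$ witnessing $\stw(G,\prec)\le t$, apply \Cref{bdd deg-bdd stw impl bdd Ktt} to get $K_{4td^2,4td^2}$-freeness of the overlap graph, then invoke \Cref{separation Thm} with parameter $4td^2$ to obtain the $1/12$-balanced separator of the stated size.
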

\begin{proof}
  Let $G$ a graph on $n$ vertices, such that $\stw(G) \leqslant t$ and $\Delta(G) \leqslant d$.
  Let $\prec$ be an order such that $(G, \prec)$ has stretch-width $t$.
  By \cref{bdd deg-bdd stw impl bdd Ktt}, $\Ov(G, \prec)$ does not admit any $K_{4td^2, 4td^2}$ as a subgraph.
  Hence~\cref{separation Thm} ensures the existence of a $1/12$-balanced separator of $(G, \prec)$, hence of $G$, of size $\gamma (4td^2)^2 \log n$.
\end{proof}

By~\Cref{lem:tw-sn} (and the remark following it) and \cref{thm:sn-bound}, we obtain the bound on the treewidth of a graph of bounded degree and bounded stretch-width.

\begin{reptheorem}{thm:tw-bound}
There is a~$c$ such that for every graph $G$, $\tw(G) \leqslant c \Delta(G)^4 \stw(G)^2 \log |V(G)|$.
\end{reptheorem}

We draw two algorithmic consequences from \cref{thm:tw-bound}.
First, Existential Counting Modal Logic (see the introduction for a~description) can be model-checked in polynomial time in classes where both the maximum degree and the stretch-width is bounded.

\begin{repcorollary}{cor:ecml}
  Problems definable in ECML (resp. ECML+C) can be solved in polynomial time (resp. randomized polynomial time) in bounded-degree graphs of bounded stretch-width.
\end{repcorollary}
\begin{proof}
  Let $\mathcal C$ be a class with bounded $\Delta+\stw$.
  By~\cref{thm:tw-bound}, there is a~constant~$c$ such that every graph $G \in \mathcal C$ has treewidth at~most $c \log n$.
  There is a~single-exponential 2-approximation of treewidth~\cite{Korhonen21}, which returns here a~tree-decomposition of width at~most $2c \log n$ in time $2^{O(c \log n)}=n^{O(1)}$.
  Any problem definable in ECML (resp.~ECML+C) can be solved in time $2^{O(w)}$ (resp. randomized time $2^{O(w)}$, with a Monte Carlo algorithm) where $w$ is the width of a~tree-decomposition of the input graph~\cite{Pilipczuk11}.
  This gives an algorithm in (randomized) time $2^{O(2c \log n)}=n^{O(1)}$ for any problem definable in ECML(+C) on~$\mathcal C$.
\end{proof}

The second consequence is a~subexponential-time algorithm for \smis (and any problem with a~similar branching rule on high-degree vertices) in graphs of bounded stretch-width.
\begin{repproposition}{prop:subexp}
  There is an algorithm that solves \mis in graphs of bounded stretch-width with running time $2^{\Tilde{O}(n^{4/5})}$. 
\end{repproposition}
\begin{proof}
  Let $\mathcal C$ be a class of bounded stretch-width.
  By~\cref{thm:tw-bound}, for every $n$-vertex graph $G \in \mathcal C$, the treewidth of $G$ is in $O(\Delta(G)^4 \log n)$.
  We use a~standard branching rule that Turing-reduces the problem to \smis on subinstances of small maximum degree, hence small treewidth. 
  
  While there is in the current graph a~vertex $v$ of degree at~least $n^{1/5}$, we branch on two options: either (first branch) we put $v$ in (an initially empty set) $I$, and remove its closed neighborhood, or (second branch) we remove $v$ from the current graph (without adding it to~$I$).
  With the former choice, the number of vertices drops by at least $n^{1/5}$, and it drops by~1 in the latter.
  The former outcome can only happen at most $n^{4/5}$ times, as we started with $n$ vertices.
  Hence the branching tree has at most ${n \choose n^{4/5}}=2^{\Tilde{O}(n^{4/5})}$ leaves.

  The treewidth of the graph at every leaf is in $O(n^{4/5}\log n)$ since there are no more vertices of degree at~least $n^{4/5}$.
  As explained in the proof of~\cref{cor:ecml}, we can solve such instances in time $2^{O(n^{4/5}\log n)}$, and append the corresponding $I$ to the output.
  The overall running time is $2^{\Tilde{O}(n^{4/5})} \cdot 2^{O(n^{4/5}\log n)} = 2^{\Tilde{O}(n^{4/5})}$.
\end{proof}


\end{document}